\documentclass[11pt, a4paper]{article}
\usepackage[utf8]{inputenc}
\usepackage[utf8]{inputenc}
\usepackage[margin=1in]{geometry}
\usepackage{mathtools}
\usepackage{amsmath}
\usepackage{amsthm}
\usepackage{amssymb}
\usepackage{setspace}
\usepackage{booktabs}
\usepackage{tabularx}
\usepackage{amsmath} 
\usepackage{amssymb}
\usepackage{bm}
\usepackage{ascmac}
\usepackage{setspace}
\usepackage[at]{easylist}
\usepackage[normalem]{ulem}
\usepackage{comment}

\usepackage{multirow}

\mathtoolsset{showonlyrefs}  

\makeatletter

\usepackage{amsthm}
\usepackage{amsfonts}
\usepackage{bbm}
\usepackage{graphics}
\usepackage{enumerate}
\usepackage{float}
\usepackage{caption}
\usepackage{subcaption}
\usepackage{natbib}
\usepackage{epigraph}
\theoremstyle{definition}
\newtheorem{theorem}{Theorem}
\newtheorem{lemma}{Lemma}
\newtheorem{definition}{Definition}
\newtheorem{proposition}{Proposition}

\newtheorem{example}{Example}[section]
\newtheorem{assumption}{Assumption}
\newtheorem{condition}{Condition}

\newtheorem{observation}{Observation}

\theoremstyle{remark}

\usepackage[hyphens]{url}
\usepackage[colorlinks,urlcolor=blue, citecolor=blue, menucolor=blue]{hyperref}

\theoremstyle{plain}

\providecommand{\keywords}[1]
{
  \small	
  \textbf{Keywords:} #1
}

\usepackage{algorithm,algpseudocode}

\usepackage{amsmath}

\allowdisplaybreaks

\makeatother
\title{Dynamic Pooling and Regional Participation in Deceased-Donor Organ Allocation\thanks{I am grateful to Yu Awaya, Michihiro Kandori, Fuhito Kojima, Shunya Noda, Bruno Strulovici, Yuichi Yamamoto, and all participants of The 18th meeting of the Society for Social Choice and Welfare for helpful comments. All remaining errors are my own.}}
\author{Genta OKADA\thanks{Graduate School of Economics, the University of Tokyo. 7-3-1 Hongo, Bunkyo-ku, Tokyo, 113-0033, Japan. Email: \href{mailto:ogenta0628@g.ecc.u-tokyo.ac.jp}{\nolinkurl{ogenta0628@g.ecc.u-tokyo.ac.jp}}.}}
\hypersetup{pdftitle={Dynamic Pooling and Regional Participation in Deceased-Donor Organ Allocation}, pdfauthor={Genta OKADA}}
\date{September 16, 2026}
\begin{document}

\maketitle

\begin{abstract}
Moving from geographically fragmented to pooled waiting lists in deceased-donor organ transplantation can improve efficiency, but it raises concerns about regional fairness and participation incentives.
This paper studies Pareto gains from such transitions in a multi-class queueing model with impatient agents, perishable items, and a tractable homogeneous compatibility friction.
Unlike standard approaches that focus on static match quality or ignore regional incentives in dynamic settings, our analysis identifies dynamic risk pooling as the main source of Pareto improvements for regions: pooling reduces organ waste due to stochastic supply-demand fluctuations, temporal mismatches, and random compatibility outcomes.
Although this dynamic surplus theoretically permits Pareto improvements, a purely utilitarian policy that treats all patients equally across regions often fails to deliver them when regions are asymmetric.
In such cases, utilitarian redistribution extracts too many organs from high-supply regions and violates their participation constraints.
To resolve this conflict, we show that the set of achievable utilities can be described by mixtures of static priority index policies.
Building on this structure, we propose the Constrained Pareto Frontier Search (CPFS), a polynomial-time procedure that computes an approximately optimal allocation policy with arbitrarily small participation-constraint violations.
The results suggest that, by adjusting priority indices to keep redistribution within the gains from pooled supply, demand, and compatibility risk, policymakers can attain global efficiency while preserving local participation incentives.
\end{abstract}

{\noindent\keywords{Deceased-donor organ allocation; Fragmentation; Dynamic matching; Regional participation; Queueing models; Constrained optimization;}\par}

\newpage
\section{Introduction}
Organ allocation is a high-stakes problem characterized by scarcity and urgency, where systems must balance medical efficiency with equity across diverse patient groups.
Traditionally, allocation has relied on geographically defined boundaries that produce fragmented waiting lists.
This arrangement simplifies logistics, but it produces ``thin'' markets in which organs are discarded or allocated inefficiently because no suitable recipient is available locally when a donation occurs.
In response, several systems have moved toward broader, ``pooled'' allocation.
Pooling aims to create a ``thicker'' market, thereby reducing waste and improving match quality by expanding the candidate pool.
The United States, for example, has shifted from fixed geographic regions to a distance-based sharing framework \citep{cron2022new}, and Eurotransplant operates a centralized list for multiple countries \citep{langer2012history}.

The transition from fragmented to pooled markets is more than a logistical change; it raises distributional questions about regional fairness.
Pooling delivers aggregate efficiency gains, but it can leave individual regions worse off, particularly those with high local supply relative to demand.
Such regions may worry that a centralized system will divert locally donated organs to more congested areas and reduce the supply available to their own residents.
Recent policy debates illustrate the tension: in the United States, concerns have arisen about organs leaving particular geographic areas, such as the Midwest \citep{medpage_hearing_2025}.
These concerns give rise to a \textit{participation constraint}: for a pooled system to be sustainable, it should constitute a Pareto improvement over the status quo (autarky), so that no region is worse off than it would be in isolation.

A substantial literature applies dynamic matching and queueing models to organ allocation \citep{agarwal2021equilibrium, akbarpour2020thickness, ashlagi2025optimal, su2004patient, su2006recipient, zenios1999modeling, zenios2000dynamic}, but these papers typically optimize efficiency or fairness within a single centralized system.
Work that explicitly addresses participation constraints and incentives for pooling, on the other hand, has focused largely on static settings, particularly kidney exchange \citep{agarwal2019market, ashlagi2014free, biro2019generalized, hajaj2015strategy, klimentova2021fairness}.
The intersection of the two strands has received little attention: few papers analyze regional participation incentives inside the dynamic, stochastic framework of deceased-donor allocation.
The distinction matters because the central friction in deceased-donor allocation concerns \textit{when} patients and organs arrive, not only \textit{which patient} is matched with \textit{which organ}.
We address this gap with a multi-class queueing framework.
To keep the model tractable while admitting biological frictions, we restrict attention to a single organ type and introduce a homogeneous random compatibility shock: conditional on an organ arrival, each waiting patient is independently compatible with a common probability.
This specification rules out class-dependent compatibility matrices, which would generally break the static index structure that underpins our analysis, but still captures local mismatches and waste: each arriving organ may be incompatible with a given waiting patient, so organs can go unused even with a single organ type.
The framework lets us isolate the throughput effects of stochastic arrivals, patient impatience, random compatibility, and regional incentives that static matching models tend to obscure.

The dynamic perspective shows that pooling delivers two distinct benefits.
The first is static redistribution, which moves organs from high-supply to high-demand regions; although globally efficient, this redistribution is often zero-sum at the regional level and drives the conflict over participation.
The second, and more important, is dynamic risk pooling: in a fragmented system, organs are wasted whenever no compatible local recipient is available at the time of arrival, even though a compatible patient may be waiting in another region.
Pooling captures these otherwise discarded organs, generating a net throughput surplus that does not come at any region's expense.
This dynamic surplus is what makes Pareto-improving policies feasible.

The tension between global efficiency and local participation is sharpest when regional supply-demand ratios differ substantially.
We show that a standard \textit{utilitarian policy}, which prioritizes patients purely by medical value, often fails to satisfy participation constraints: it draws organs from high-supply regions by more than those regions gain from pooling, leaving them worse off than under autarky.
To preserve participation, policy design must keep redistribution inside the bounds of the dynamic surplus.

This paper studies how to design such policies.
Sections \ref{sec:index_optimality} and \ref{sec:Pareto_frontier} show that the optimal policy for a weighted welfare maximization is a static \textit{index policy}, which assigns a fixed priority score to each patient class and allocates each arriving organ to the highest-index compatible waiting class, and that the achievable utility set is a convex polytope whose vertices correspond to these index policies.
Section \ref{sec:utilitarian_improvement} shows that, despite the enlarged achievable set, the utilitarian outcome need not Pareto-improve upon autarky when regions are asymmetric.
Section \ref{sec:algorithm} proposes the \textit{Constrained Pareto Frontier Search (CPFS)}, a polynomial-time algorithm that computes an $\epsilon$-optimal allocation subject to regional participation constraints, by exploiting the index-policy structure.

\section{Related Literature}

This paper lies at the intersection of dynamic stochastic allocation and participation in organ allocation.
A first strand studies deceased-donor allocation and waiting-list design as dynamic queueing or dynamic matching systems
\citep{agarwal2021equilibrium, su2004patient, su2006recipient, zenios1999modeling, zenios2000dynamic}.
Related waiting-list models study dynamic assignment more broadly
\citep{ashlagi2025optimal, bloch2017dynamic, che2021optimal, leshno2022dynamic}.
These papers are dynamic and stochastic, but they typically take the pooled or centralized system as given and do not impose regional participation constraints relative to autarky.

A second strand studies participation and strategic incentives in organ allocation, especially multi-agent kidney exchange
\citep{agarwal2019market, ashlagi2014free, biro2019generalized, hajaj2015strategy, klimentova2021fairness}.
These papers address participation incentives in organ allocation, but they are primarily static matching models in which hospitals or countries may withhold pairs or decide whether to join an exchange.
They therefore do not capture the dynamic deceased-donor problem in which organs and patients arrive over time, patients abandon, and organs are perishable.

Our work is also related to cooperative queueing and resource-pooling games in operations research.
This literature studies whether independent service providers can pool queues, capacity, or demand in a way that satisfies participation incentives, often through core or cost-allocation arguments
\citep{anily2010cooperation, karsten2015resource, liu2022incentives, ozen2011core}.
Closest in spirit, partial-pooling and stratified-pooling models show that full pooling need not benefit every participant, while restricted pooling can restore stability or mutual benefit
\citep{nandigam2019sharing, schlicher2020core}.
These papers are dynamic/stochastic and participation-based, but they are not organ-allocation models: they typically rely on transferable cost allocations in service, loss, or spare-parts systems rather than priority-based allocation of perishable organs subject to biological compatibility and patient abandonment.

We contribute by combining these elements in a deceased-donor setting.
We study a dynamic stochastic queueing model of pooled organ allocation with random compatibility, impatient patients, and regional autarky constraints.
Because monetary transfers are not the relevant policy instrument, participation is enforced through the allocation rule itself.
Technically, our index-policy result also relates to multi-class queueing with abandonment.
Comparable optimality results appear in \citet{down2011dynamic} for two-class systems and in \citet{ansari2025scheduling} for truncated state spaces. 
The differences are that we work on an untruncated setting with multiple classes, and service opportunities are generated by a common organ-arrival process and filtered through homogeneous random compatibility.

\section{Model}

We formulate the problem as a multi-class queueing system with abandonment, in the spirit of standard models (e.g., \citet{ansari2025scheduling}), but with multiple regions, a pooled organ-arrival process, and a compatibility friction.

\subsection{Setup and Notation}
We consider a system with a finite set of regions, $\mathcal{R}$, and a finite set of patient types, $\Theta$. A patient class is defined as a region-type pair $i=(r,\theta)\in\mathcal{R}\times\Theta$. There is a single organ type.

Patients of class $i$ arrive according to an independent Poisson process with rate $\lambda_i>0$ and join a class-specific queue. Patients are impatient; the time until a class-$i$ patient abandons the queue is exponentially distributed with rate $d_i>0$. For most of our analysis, we assume abandonment rates to be uniform across classes.

Organs arrive from region $r$ according to a Poisson process with rate $\mu^r>0$, yielding a total arrival rate $\mu=\sum_{r\in \mathcal R}\mu^r$. Organs are perishable and must be allocated immediately upon arrival or discarded.

To incorporate incompatibility while preserving tractability, we assume an ex-ante homogeneous random compatibility structure. Conditional on an organ arrival and the current queue state, each waiting patient is independently compatible with the arriving organ with a common probability $\rho\in(0,1]$. Compatibility realizations are independent across organs and across patients. Thus, incompatibility is organ-specific and idiosyncratic, but ex-ante symmetric across classes.

A class $i=(r,\theta)$ patient receives utility $v_\theta>0$ upon receiving an organ and zero utility if they abandon the queue. For notational simplicity, we also write $v_i:=v_\theta$ for class $i=(r,\theta)$.

\subsection{CTMDP Formulation}
The state space is $S=\mathbb{Z}_+^K$, where $K=|\mathcal{R}||\Theta|$. It consists of vectors $q=(q_i)_i$, where $q_i$ denotes the number of waiting patients of class $i$. We define $N(q)=\sum_i q_i$ as the total number of waiting patients and $J(q)=\{i\mid q_i>0\}$ as the set of classes with waiting patients. The notation $e_i$ represents the standard unit vector for class $i$. The empty state is denoted by $q_0=(0,\dots,0)$, and $\mathbf{1}_{c}$ denotes the indicator of event $c$.

Let
$$
g_\rho(n)=1-(1-\rho)^{n}
$$
denote the probability that, among $n$ waiting patients, at least one is compatible with an arriving organ.
Because patient-level compatibilities are independent, the random set of compatible classes is a subset $A\subseteq J(q)$ with probability
$$
p_q(A)=\prod_{i\in A} g_\rho(q_i)\prod_{j\in J(q)\setminus A}\bigl(1-g_\rho(q_j)\bigr).
$$

A stationary policy $\sigma$ specifies allocation probabilities after both the queue state and the compatibility realization are observed. Formally, for each state $q$ and each realized compatible set $A\subseteq J(q)$, the policy specifies
$$
\sigma(q,A)=\bigl(\sigma_i(q,A)\bigr)_i,
$$
where $\sigma_i(q,A)$ is the probability of allocating the arriving organ to class $i$.
The policy must satisfy
$$
\sigma_i(q,A)=0 \quad \text{for all } i\notin A,
\qquad
\sum_{i\in A}\sigma_i(q,A)\le 1
$$
for all $q\in S$ and all $A\subseteq J(q)$.
If $A=\emptyset$, the organ must be discarded.
\footnote{Allocation rules within a class (e.g., FCFS or LCFS among compatible patients) affect individual waiting times but not the class-level aggregate utility in this Markovian framework, so we abstract from within-class allocation details. Although organ origin is not explicit in the state, a policy equivalent to autarky (allocating organs only to their region of origin) is feasible because organ arrivals are independent Poisson processes: an arriving organ is assigned to region $r$ with probability $\mu^r / \mu$.}

For later use, define the ex-ante probability that an organ arriving in state $q$ is allocated to class $i$ under policy $\sigma$ by
$$
\bar{\sigma}_i(q)=\sum_{A\subseteq J(q):\, i\in A} p_q(A)\sigma_i(q,A).
$$
The queue-length process alone is then a continuous-time Markov decision process (CTMDP), because compatibility shocks are i.i.d. across organ arrivals and can be integrated out conditional on $q$.

The transition rates $\nu^\sigma(q,q')$ from state $q$ to state $q'$ under policy $\sigma$ are
$$
\nu^\sigma(q,q')=
\begin{cases}
\lambda_i & \text{if } q'=q+e_i,\\
d_i q_i+\mu \bar{\sigma}_i(q) & \text{if } q'=q-e_i,\\
-Q^\sigma(q) & \text{if } q'=q,\\
0 & \text{otherwise,}
\end{cases}
$$
where
$$
Q^\sigma(q)=\sum_i \lambda_i+\sum_i d_i q_i+\mu\sum_i \bar{\sigma}_i(q)
$$
is the total rate of state-changing transitions out of state $q$.

\subsection{Ergodicity and Stationary Distribution}\label{subsec:stationary_dist}
The process is ergodic because of the abandonment process.

\begin{lemma}\label{lmm:ergodic}
    The Markov process induced by any stationary policy $\sigma$ is ergodic.
\end{lemma}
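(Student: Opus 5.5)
We prove Lemma \ref{lmm:ergodic} by comparison with an infinite-server queue, using a Foster--Lyapunov drift argument on the total queue length. Fix any stationary policy $\sigma$.

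\emph{Irreducibility.} Every transition $q\to q+e_i$ has rate $\lambda_i>0$. Every transition $q\to q-e_i$ with $q_i>0$ has rate at least $d_iq_i>0$, whatever $\sigma$ does, since $\mu\bar\sigma_i(q)\ge 0$. So from any state the chain reaches $q_0$ through a finite chain of abandonments, and from $q_0$ it reaches any state through a finite chain of arrivals. Hence the chain is irreducible on all of $S=\mathbb{Z}_+^K$.

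\emph{Non-explosiveness.} The exit rate satisfies
$$Q^\sigma(q)\le \Lambda+\bar d\,N(q)+\mu,\qquad \Lambda=\sum_i\lambda_i,\quad \bar d=\max_i d_i,$$
which grows at most linearly in $N(q)$. Upward jumps increase $N$ by one and occur at total rate $\Lambda$, so $N(t)$ is stochastically dominated by a Poisson process of rate $\Lambda$ plus $N(0)$. Hence $N(t)$ is finite on bounded intervals, only finitely many jumps occur on any bounded interval, and the process is non-explosive.

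\emph{Positive recurrence.} Take the Lyapunov function $V(q)=N(q)$ and let $\underline d=\min_i d_i>0$. The generator gives
$$\mathcal{L}V(q)=\Lambda-\sum_i d_iq_i-\mu\sum_i\bar\sigma_i(q)\le \Lambda-\underline d\,N(q).$$
This is at most $-1$ outside the finite set
$$C=\{q:\ N(q)\le(\Lambda+1)/\underline d\}.$$
Foster's criterion for continuous-time chains then applies: an irreducible, non-explosive CTMC with a nonnegative $V$, a finite set $C$, bounded drift on $C$, and drift at most $-1$ off $C$ is positive recurrent (Meyn--Tweedie). An irreducible, positive recurrent, non-explosive CTMC on a countable space has a unique stationary distribution and is ergodic.

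As a cross-check that avoids Lyapunov machinery, one can couple $N(t)$ with an $M/M/\infty$ queue with arrival rate $\Lambda$ and per-customer death rate $\underline d$. Under the coupling $N(t)$ is pathwise dominated, so return times to $q_0$ are dominated by the busy-cycle length of that queue, which has finite mean.

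The only step needing care is non-explosiveness and the applicability of Foster's criterion on an unbounded state space with unbounded rates. The linear bound on $Q^\sigma$, together with the fact that only arrivals push the state upward, handles this.
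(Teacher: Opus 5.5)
Your proposal is correct and follows essentially the same route as the paper: irreducibility via abandonments down to $q_0$ and arrivals back up, then Foster--Lyapunov with $V(q)=N(q)$, the drift bound $\Lambda-\underline d\,N(q)$, and the same finite exceptional set. Your explicit non-explosiveness check and the $M/M/\infty$ coupling cross-check are sound additions that the paper leaves implicit.
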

The proof is in Appendix~\ref{app:proof_ergodic}.

Because the process is ergodic, there exists a unique stationary distribution $\pi^{\sigma}$ for any policy $\sigma$ \citep{bremaud2013markov}. This distribution is the solution to the global balance equations:
$$
\sum_{q' \in S} \pi^{\sigma}(q') \nu^{\sigma}(q', q) = 0
\qquad \forall q \in S,
$$
or equivalently,
$$
\pi^{\sigma}(q) Q^\sigma(q)=\sum_{q'\neq q}\pi^{\sigma}(q')\nu^{\sigma}(q',q).
$$

Furthermore, by the ergodic theorem for Markov processes, the long-run time average of any function $f(q)$ that has a finite expectation converges almost surely to its expected value under the stationary distribution $\pi^\sigma$:
$$
\lim_{T\to\infty}\frac{1}{T}\int_0^T f(q(t))\,dt
=
E_{\pi^\sigma}[f(q)]
\qquad \text{a.s.}
$$
This property allows us to evaluate performance metrics, such as long-run average costs or utilities, by calculating expectations with respect to $\pi^\sigma$.

\subsection{Average Utility and Flow Conservation}\label{subsec:flow_conv}
The average utility $u_i$ per arriving class-$i$ patient is defined as the long-run average utility accumulated by class $i$, normalized by the long-run average number of class-$i$ arrivals:
$$
u_i=
\frac{
\lim_{T\to\infty}\frac{1}{T}E\left[\sum_{n=1}^{N_O(T)} v_i \mathbf{1}_{a_n=i}\right]
}{
\lim_{T\to\infty}\frac{1}{T}E\left[N_P^i(T)\right]
}.
$$
Here, $N_O(T)$ and $N_P^i(T)$ are the counting processes for organ arrivals and class-$i$ patient arrivals by time $T$, respectively, and $a_n\in(\mathcal{R}\times\Theta)\cup\{\emptyset\}$ denotes the class selected at the $n$-th organ-arrival epoch, with $a_n=\emptyset$ if the organ is discarded.

By the ergodic theorem, the numerator converges to the stationary reward rate $v_i E_{\pi^\sigma}\bigl[\mu \bar{\sigma}_i(q)\bigr]$, and the denominator converges to the arrival rate $\lambda_i$.
Thus,
$$
u_i=\frac{v_i E_{\pi^\sigma}\bigl[\mu \bar{\sigma}_i(q)\bigr]}{\lambda_i}.
$$

We can derive a flow conservation law from the global balance equations by taking the expected drift of $q_i$. In steady state, the inflow rate equals the outflow rate (allocation plus abandonment) for each class $i$:
$$
\lambda_i=E_{\pi^\sigma}\bigl[\mu \bar{\sigma}_i(q)\bigr]+d_i E_{\pi^\sigma}[q_i].
$$
Substituting $E_{\pi^\sigma}\bigl[\mu \bar{\sigma}_i(q)\bigr]=\lambda_i-d_iL_i^\sigma$, where $L_i^\sigma=E_{\pi^\sigma}[q_i]$ is the expected queue length, into the utility expression yields
$$
u_i
=
v_i-\frac{d_i v_i}{\lambda_i}L_i^\sigma.
$$
This relationship shows that maximizing average utility is equivalent to minimizing the weighted sum of stationary queue lengths, i.e., the holding cost.

\subsection{Scope of the Modeling Assumptions}\label{subsec:assumptions_scope}
Three restrictions carry the analysis: a single organ type, ex-ante homogeneous compatibility, and, from Section \ref{sec:Pareto_frontier} onward, a common abandonment rate.
Actual transplantation satisfies none of them exactly; compatibility varies with blood type, sensitization level, organ quality and donor characteristics, and impatience varies with diagnosis and comorbidity.
The first two make organs interchangeable once the compatibility draw is realized, which leaves the queue-length vector a controlled Markov process with static priority rankings among its optimal policies; the third establishes the optimality of index policies for all welfare weights, enabling our characterization of achievable utility vectors.
Relax any of them, and we do not know how to recover either the index characterization or the algorithm.
One conclusion survives regardless: a pooled system can always reproduce the autarky allocation and, on top of that, reuse the organs autarky discards, so the dynamic surplus that makes a Pareto improvement possible does not depend on homogeneity.
What heterogeneity removes is the tractable description of the achievable frontier, and the algorithm built on it.

Given these restrictions, the model is best read as a benchmark rather than as a description of an allocation system: it approximates one aspect of allocation, the organs lost because no compatible recipient happens to be waiting nearby when a donation occurs.
One setting under which homogeneous compatibility is defensible is allocation organized within medical strata: if offers are made within a stratum defined by blood type, sensitization level, and organ quality, the single organ type of the model is one such stratum, and $\rho$ is the probability that an organ arising there suits a patient waiting there.

\section{Optimality of the Index Policy}\label{sec:index_optimality}
To reconcile efficiency with regional constraints, we first need to characterize which policies maximize social welfare.
This section shows that, for any set of nonnegative welfare weights, the optimal allocation rule retains a simple form even under homogeneous random incompatibility: a static \textit{index policy}.
Beyond providing a tractable solution to the optimization problem, this result is the building block for the Pareto frontier characterization in the next section.

\subsection{Problem Formulation: Social Welfare Maximization}
We maximize a weighted social welfare function,
$$
W=\sum_i w_i u_i,
$$
with nonnegative weights $w_i\ge 0$.
Using the flow conservation result from Section \ref{subsec:flow_conv}, this is equivalent to minimizing a weighted holding cost:
$$
W
=
\sum_i w_i\left(v_i-\frac{d_i v_i}{\lambda_i}L_i^\sigma\right)
=
\left(\sum_i w_i v_i\right)-\sum_i \frac{w_i d_i v_i}{\lambda_i}L_i^\sigma.
$$
Since the first term is policy-independent, maximizing $W$ is equivalent to minimizing the long-run average cost
$$
C(\sigma)
=
\sum_i \frac{w_i d_i v_i}{\lambda_i}L_i^\sigma
=
\sum_i b_i d_i L_i^\sigma
=
E_{\pi^\sigma}\left[\sum_i b_i d_i q_i\right],
$$
where the weighted priority parameter for class $i$ is
$$
b_i=\frac{w_i v_i}{\lambda_i}.
$$

\subsection{Index Policy and its Optimality}
Without loss of generality, reindex classes as $i=1,\dots,K$ so that
$$
b_1\ge b_2\ge \cdots \ge b_K.
$$
Ties are broken in favor of the class with the higher abandonment rate, so that $d_k\ge d_{k+1}$ whenever $b_k=b_{k+1}$, or arbitrarily if both parameters are equal.

\begin{definition}[Index Policy]
Fix a priority ordering $1, \dots, K$.
The corresponding index policy $\sigma^*$ is the deterministic stationary policy that, whenever an organ arrives in state $q$, and the realized set of compatible classes is $A\subseteq J(q)$, allocates the organ to the highest-priority class in $A$:
$$
\sigma_i^*(q,A)=1
\quad \text{if } A\neq\emptyset \text{ and } i=\min A.
$$
If $A=\emptyset$, the organ is discarded.
\end{definition}

Optimality holds under the following assumption on impatience:
\begin{assumption}\label{asmp:d_order}
For all classes $i,j$, if $i<j$ (implying $b_i\ge b_j$), then $d_i\ge d_j$.
\end{assumption}
This assumption holds when higher-medical-value classes have shorter or equal expected lifetimes without transplantation compared to lower-medical-value classes.
It holds trivially when abandonment rates are identical across classes.

\begin{theorem}\label{thm:index_optimality}
Under Assumption \ref{asmp:d_order}, the index policy with priority ordering $1,\dots, K$ minimizes $C(\sigma)$ over all stationary policies that may condition on both the queue state and the realized compatible set.
\end{theorem}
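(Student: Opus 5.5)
The plan is to use a dynamic-programming argument. I will work with the $\beta$-discounted version of the cost $E\int e^{-\beta t}\sum_i b_i d_i q_i(t)\,dt$ and show that the index policy is optimal for every $\beta>0$. Then I will pass to the average-cost criterion by a vanishing-discount argument, using the ergodicity of Lemma~\ref{lmm:ergodic}.

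The transition rates $d_iq_i$ are unbounded, so the chain cannot be uniformized directly. I will first truncate by capping each queue at some $M$: arrivals to a full class are blocked, which gives a bounded total rate. On the truncated chain I uniformize with a rate $\Lambda\ge\sum_i\lambda_i+\mu+M\sum_i d_i$ and run value iteration $V_{n+1}=TV_n$. Because compatibility shocks are i.i.d., the operator can integrate over the realized compatible set $A$ inside the organ-arrival term:
$$
\mu\sum_{A\subseteq J(q)}p_q(A)\min\Bigl\{V(q),\;\min_{i\in A}V(q-e_i)\Bigr\}.
$$
Afterwards I let $M\to\infty$. Alternatively, I can restrict to policies under which the truncated and untruncated costs converge, since abandonment makes the stationary queue lengths Poisson-dominated.

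The core step is to prove by induction on $n$ that each $V_n$ has two properties. \emph{(i) Monotonicity:} $V(q+e_i)\ge V(q)$ for every $i$. \emph{(ii) Priority interchange:} $V(q+e_j)\ge V(q+e_i)$ whenever $i<j$. Property (i) says that allocating is never worse than discarding. Property (ii) then says that, whatever the realized $A$, removing the lowest-indexed class in $A$ minimizes the continuation value, which is exactly the index policy.

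Both properties must be propagated through each part of $T$.

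\emph{Cost term.} The holding cost contributes $b_i d_i$ versus $b_j d_j$, and $b_id_i\ge b_jd_j$ by the ordering together with Assumption~\ref{asmp:d_order}.

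\emph{Arrival terms.} These preserve the inequalities directly.

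\emph{Organ-arrival term.} I will couple the compatibility draws of the states $q+e_i$ and $q+e_j$. The only asymmetry is the extra patient, and I will use a coupling in which the extra patient is compatible in both states or in neither. In the case where it is compatible and is the best available, the two continuations coincide at $V(q)$. Otherwise the compared states differ only in the position of one extra patient, and the induction hypothesis applies. Since the min of coordinatewise-ordered quantities is ordered, (ii) is preserved, and (i) follows the same way.

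\emph{Abandonment term.} This is the main obstacle. State $q+e_i$ carries abandonment rate $d_i(q_i+1)$ in class $i$, while $q+e_j$ carries $d_j(q_j+1)$, so the rates do not match. I would handle it as follows:
\begin{enumerate}
\item Uniformize with a common dummy rate, so the leftover rate $\Lambda-\ldots$ becomes a self-loop.
\item Split the extra abandonment rate $d_i$ into a part $d_j$ and a surplus part $d_i-d_j\ge0$.
\item On the part $d_j$, both states can return to $q$, so those terms cancel.
\item The surplus part $d_i-d_j$ sends $q+e_i$ to $q$ while $q+e_j$ self-loops. This contributes a term of favourable sign, by (i).
\end{enumerate}
This is precisely where $d_i\ge d_j$ is needed. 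The remaining shared abandonment terms are handled by the induction hypothesis applied at $q-e_k+e_i$ versus $q-e_k+e_j$.

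Finally, the limits. Properties (i) and (ii) survive $n\to\infty$ and then $M\to\infty$, so the index policy is $\beta$-optimal for each $\beta$. The index policy is a fixed deterministic policy independent of $\beta$, and every stationary policy has finite average cost because $E_\pi[q_i]\le\lambda_i/d_i$. Abelian (Tauberian) limits $\beta V_\beta\to C$ under ergodicity then give average-cost optimality over all stationary policies conditioning on $(q,A)$, which proves Theorem~\ref{thm:index_optimality}.
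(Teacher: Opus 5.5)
Your plan follows the same architecture as the paper's proof: discounting, truncation, uniformization, value iteration on structural properties, then limits. The induction, however, does not close as proposed.

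First, property (ii) is oriented backwards. From $z=q+e_i+e_j$, serving $i$ leads to $q+e_j$ and serving $j$ leads to $q+e_i$. So serving the lowest index in $A$ requires $V(q+e_i)\ge V(q+e_j)$ for $i<j$: the state holding the extra high-priority patient is the costlier one. Your cost step (using $b_id_i\ge b_jd_j$) and your organ-arrival coupling are valid only for this corrected orientation.

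With the corrected orientation, though, the abandonment step fails. The surplus rate $d_i-d_j$ sends $q+e_i$ to $q$ while $q+e_j$ self-loops. This contributes $(d_i-d_j)\bigl(V(q)-V(q+e_j)\bigr)\le 0$ to $(\beta+\Lambda)\bigl(TV(q+e_i)-TV(q+e_j)\bigr)$, by (i). Faster abandonment of the high-priority patient helps the state that holds it, so the cost term and the surplus term pull in opposite directions. Under your stated orientation their signs are merely swapped.

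Monotonicity bounds $V(q+e_j)-V(q)$ only from below, but you need an upper bound. Together the two terms give $b_id_i-b_jd_j-(d_i-d_j)\bigl(V(q+e_j)-V(q)\bigr)$. This is at least $d_i(b_i-b_j)\ge 0$ precisely when $V(q+e_j)-V(q)\le b_j$. With (i) and (ii) alone, the argument covers only the case $d_i=d_j$.

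The missing ingredient is a third inductive property, the marginal-cost bound $V(q+e_k)-V(q)\le b_k$ for every $k$. This is the paper's (P2); the paper's (P1) is your (ii) with the correct orientation. It propagates through $T$ for three reasons:
\begin{itemize}
\item the extra holding cost $b_kd_k$ in $q+e_k$ is cancelled exactly by the extra abandonment rate $d_k$ that sends $q+e_k$ to $q$;
\item the organ term is handled by giving the extra patient an independent compatibility draw, so that $q+e_k$ can mimic any allocation made in $q$ (part 2 of Lemma~\ref{lmm:Gamma_monotone});
\item the factor $\Lambda/(\beta+\Lambda)\le 1$ preserves the bound.
\end{itemize}

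A second, smaller problem is the truncation. Under per-class blocking, suppose $q_i=M-1$. A class-$i$ arrival is then blocked in $q+e_i$ but admitted in $q+e_j$, contributing $\lambda_i\bigl(V(q+e_i)-V(q+e_i+e_j)\bigr)\le 0$, again the wrong sign. So the arrival terms do not preserve the interchange inequality at the boundary. The paper instead caps the total count $N(q)$ and, on an arrival at capacity, ejects a patient of the highest-priority waiting class. That rule is designed so that the boundary arrival terms are nonpositive by (P1).

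Your final Abelian step is fine. Once the index policy is $\beta$-optimal for every $\beta$, comparing $\beta V_\beta$ against each stationary policy as $\beta\downarrow 0$ is a legitimate and more direct substitute for the paper's appeal to the average-cost optimality equation.
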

The proof is in Appendix~\ref{app:proof_indexopt}. It builds on the frameworks of \citet{down2011dynamic} and \citet{ansari2025scheduling}, but replaces the service term in the Bellman operator with its expectation over compatibility realizations.

This simplification matters for the rest of the paper.
Instead of searching over a vast space of state- and compatibility-dependent policies, we can restrict attention to static index rankings and their mixtures.


\section{Achievable Region and Pareto Frontier}\label{sec:Pareto_frontier}
Section \ref{sec:index_optimality} established that static index policies are optimal for maximizing weighted social welfare under Assumption \ref{asmp:d_order}.
The result has a useful geometric implication: it lets us characterize the entire set of feasible utility outcomes.
In this section, we describe the shape of this achievable utility set and define the benchmark for regional participation, the autarky utility.

For the analysis of the achievable region and regional utilities in the following sections, we specialize the model by assuming that all classes share the same abandonment rate, which yields a tractable one-dimensional representation of cumulative throughput under any fixed priority ordering.
\begin{assumption}\label{asmp:same_d}
For all classes $i$, $d_i=d$ for a common constant $d>0$.
\end{assumption}

Under Assumption \ref{asmp:same_d}, Assumption \ref{asmp:d_order} (and thus Theorem \ref{thm:index_optimality}) trivially holds.

\subsection{The Achievable Utility Set}
Let $\Sigma$ be the set of all stationary policies. Each $\sigma\in\Sigma$ induces a unique stationary distribution $\pi^\sigma$ and a utility vector $u(\sigma)=(u_i(\sigma))_i$, where $u_i(\sigma)$ is the average utility for class $i$ under policy $\sigma$. Define the achievable set as
$$
\mathcal U=\{u(\sigma)\mid \sigma\in\Sigma\}\subset \mathbb R_+^K.
$$

Using the result from Section \ref{sec:index_optimality}, we can characterize $\mathcal U$ via index policies.
Let $\Sigma_I=\{\sigma_1^I,\ldots,\sigma_{K!}^I\}$ denote the set of all static index policies induced by the $K!$ priority orderings, where each such policy allocates an arriving organ to the highest-priority compatible class.

\begin{proposition}\label{prop:achievable_set}
$\mathcal U$ is the set of nonnegative utility vectors dominated by some convex combination of utility vectors generated by static index policies. Formally,
$$
\mathcal U
=
\left\{
u\in\mathbb R_+^K
\;\middle|\;
\exists \alpha\in[0,1]^{K!}
\text{ such that }
u_i\le \sum_{m=1}^{K!}\alpha_m u_i(\sigma_m^I)
\text{ for all } i,
\ \sum_{m=1}^{K!}\alpha_m=1
\right\}.
$$
\end{proposition}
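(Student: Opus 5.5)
The proof splits into the two inclusions. Write $P$ for the right-hand side, i.e.\ the set of $u\in\mathbb R_+^K$ dominated by some point of $\mathrm{conv}\{u(\sigma_m^I)\}_{m}$. Throughout, two facts do the work: $u$ is linear in stationary state-action frequencies, and Theorem~\ref{thm:index_optimality} identifies the maximizers of every nonnegative linear functional of $u$.

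\emph{Step 1 ($\mathcal U\subseteq P$).} Take any $\sigma\in\Sigma$ and any weight vector $w\in\mathbb R_+^K$. Under Assumption~\ref{asmp:same_d}, Assumption~\ref{asmp:d_order} holds for every ordering. Hence Theorem~\ref{thm:index_optimality} gives $w\cdot u(\sigma)\le \max_m w\cdot u(\sigma_m^I)$, where the maximizing index policy is the one ordering classes by $b_i=w_iv_i/\lambda_i$ (ties in $b$ broken arbitrarily since all $d_i$ coincide).

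Now $P$ is closed and convex, and it is down-closed within $\mathbb R_+^K$. A separating-hyperplane argument shows that such a set equals $\{u\in\mathbb R_+^K: w\cdot u\le \max_m w\cdot u(\sigma_m^I)\ \forall w\ge 0\}$. The point to check is that a point of $\mathbb R^K_+\setminus P$ can be separated from $P$ by a \emph{nonnegative} normal: if the separating $w$ had a negative coordinate, we could move within $P$ downward in that coordinate, contradicting separation. Since $u(\sigma)\ge 0$ trivially, $u(\sigma)\in P$.

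\emph{Step 2 (convexity of $\mathcal U$).} I would pass to the occupation-measure (dual LP) formulation of the CTMDP. Let $x(q,a)=\pi^\sigma(q)\,\sigma(q,\cdot)$ range over the state-action frequencies, with actions being allocation rules $A\mapsto$ class. These satisfy linear balance equations, and $u_i=v_i\mu\sum_q x_i(q)/\lambda_i$ is linear in $x$.

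Any convex combination of feasible frequency vectors is again feasible. It is induced by the stationary randomized policy $\sigma(q,\cdot)=x(q,\cdot)/\pi(q)$, defined arbitrarily where $\pi(q)=0$. The infinite state space causes no trouble here: abandonment gives the uniform bound $E_{\pi^\sigma}[N(q)]\le\sum_i\lambda_i/d$. Combined with Lemma~\ref{lmm:ergodic}, this yields tightness, so the balance equations characterize stationary distributions and the induced policy reproduces $x$. Hence $\mathrm{conv}\{u(\sigma_m^I)\}\subseteq\mathcal U$.

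\emph{Step 3 (down-closure), the main obstacle.} It remains to show that if $u\in\mathcal U$ and $0\le u'\le u$, then $u'\in\mathcal U$. Reducing a single class's utility by refusing it organs frees organs for other classes and raises their utilities, so coordinates cannot be lowered one at a time.

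My first attempt is a thinning family. Starting from the policy $\sigma$ achieving $u$, let $\sigma^t$, $t\in[0,1]^K$, execute $\sigma$'s allocation to class $i$ with probability $t_i$ and otherwise discard the organ. Then $t\mapsto u(\sigma^t)$ is continuous, by continuity of the stationary distribution in the rates under the uniform tightness above. Moreover $u_i(\sigma^t)=0$ when $t_i=0$.

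To hit the target $u'$, I would apply a Poincaré–Miranda / Brouwer argument to the map $t\mapsto u(\sigma^t)-u'$, which requires the correct sign on the faces $t_i=1$. Failing that, a cleaner route may be the LP description itself. Discarding is a free action, so lowering the allocation frequencies $x_i$ while correspondingly raising abandonment mass should preserve feasibility, which would give exactly down-closure in the frequency space. Verifying this monotone adjustment in the balance equations is where I expect the real work to lie.
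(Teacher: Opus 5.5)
\paragraph{Where the proposal matches the paper.} Your Steps 1 and 2 are the paper's argument. Convexity of $\mathcal U$ comes from mixing the compatibility-resolved occupancy measures $\pi^\sigma(q)p_q(A)\sigma_i(q,A)$ and renormalizing. The inclusion $\mathcal U\subseteq P$ comes from separating with a nonnegative normal and invoking Theorem~\ref{thm:index_optimality}.

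One small repair in Step 1. Sliding down inside the bounded set $P$ contradicts nothing by itself. The correct reason the normal can be taken nonnegative is to replace $w$ by $w^+=\max\{w,0\}$. Zeroing the coordinates with $w_j<0$ keeps a point in $P$, so $\sup_P w^+\cdot u=\sup_P w\cdot u$, while $w^+\cdot z\ge w\cdot z$ for $z\ge 0$.

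\paragraph{The gap: Step 3 is not proved.} The paper settles down-closure in one sentence about stochastically discarding organs, which is essentially your thinning family. You are right that this needs an argument, but you do not supply one, and neither of your routes works.

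The thinning family of a single achieving policy can miss a target in $P$. Take $K=2$ and the index policy $\sigma$ ranking class $1$ above class $2$.
\begin{itemize}
\item Thinning class $1$ ($t_1<1$) lengthens its queue, so arriving organs find a compatible class-$1$ patient more often. Class $2$ is served only when none is compatible, so $u_2(\sigma^{(t_1,1)})<u_2(\sigma)$. The spillover through queue lengths lowers the other utility rather than raising it.
\item For a target $u'=(u_1',u_2)$ with $u_1'<u_1$, the Poincar\'e--Miranda sign condition on the face $t_2=1$ fails.
\item In fact no $t$ works. $u_1(\sigma^t)$ depends on $t_1$ alone, so matching $u_1'$ forces $t_1<1$, and $u_2$ is nondecreasing in $t_2$.
\end{itemize}

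The occupation-measure route fails for a related reason. The abandonment mass $d\sum_q q_i\pi(q)$ is not a free slack variable; the balance equations pin it down through $\pi$. Lowering $x_i$ therefore means constructing a new stationary $\pi$, which is the original problem. Aggregate flow conservation is necessary for achievability, not sufficient.

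\paragraph{The missing idea.} $P$ is a polytope, so it suffices to realize its vertices and let Step 2 do the rest. At a vertex, only a very special discard is needed.

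Let $C=\mathrm{conv}\{u(\sigma^I_m)\}$, let $u$ be a vertex of $P$, and let $Z=\{i:u_i=0\}$. If $Z$ is every class, discard every organ.
\begin{itemize}
\item On the face $P\cap\{u_Z=0\}$, the vector $u_{Z^c}$ is a strictly positive vertex of $(\mathrm{proj}_{Z^c}C-\mathbb R^{Z^c}_+)\cap\mathbb R^{Z^c}_+$. Because the orthant constraint is inactive there, it is a vertex of the polyhedron $\mathrm{proj}_{Z^c}C-\mathbb R^{Z^c}_+$ itself.
\item It is therefore the unique maximizer of some linear functional $w$. Uniqueness forces $w_i>0$ for all $i\in Z^c$, since otherwise you could slide down in coordinate $i$.
\item Extend $w$ by zeros on $Z$. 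Then $b_i>0$ on $Z^c$ and $b_i=0$ on $Z$, so Theorem~\ref{thm:index_optimality} gives a maximizing index policy $\sigma^I_\pi$ whose ordering $\pi$ ranks all of $Z^c$ above $Z$.
\item Since $u_{Z^c}(\sigma^I_\pi)\in\mathrm{proj}_{Z^c}C$ attains the same maximum, uniqueness gives $u_{Z^c}(\sigma^I_\pi)=u_{Z^c}$.
\item Now run $\sigma^I_\pi$ but discard every organ it would give to a class in $Z$. Whether a class in $Z^c$ is served depends only on the queues and compatibility draws of classes ranked above it, all of which lie in $Z^c$. So the $Z^c$-subchain and its rewards are unchanged, and this stationary policy attains exactly $u$.
\end{itemize}
Hence $P=\mathrm{conv}(\mathrm{vert}\,P)\subseteq\mathcal U$ by Step 2, with no continuity or fixed-point argument needed.
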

The proof is in Appendix~\ref{app:proof_prop_achievable}.

\subsection{Utility under Index Policies}\label{subsec:index_util}
Under uniform abandonment, the queueing dynamics aggregate cleanly across higher-priority classes.
Consider a birth--death chain with birth rate $\lambda$, abandonment rate $d$, and state-$n$ service rate $\mu g_\rho(n)$.
Its stationary distribution is
$$
\Pi_0^\rho(\lambda,\mu)
=
\left[1+\sum_{n=1}^\infty \prod_{m=1}^n \frac{\lambda}{dm+\mu g_\rho(m)}\right]^{-1},
\qquad
\Pi_n^\rho(\lambda,\mu)
=
\Pi_0^\rho(\lambda,\mu)\prod_{m=1}^n \frac{\lambda}{dm+\mu g_\rho(m)},
$$
and the steady-state allocation rate, which we call the \emph{cumulative throughput function}, is
$$
\mathcal T^\rho(\lambda,\mu)
=
\mu\sum_{n=1}^\infty \Pi_n^\rho(\lambda,\mu)\, g_\rho(n).
$$
Under any static-priority rule, the aggregate queue length of the top-$k$ priority classes evolves as this exact birth--death chain with $\lambda=\Lambda_k$ (the sum of the top-$k$ arrival rates), yielding an explicit one-dimensional representation of per-class utility.

\begin{lemma}\label{lmm:index_util}
Suppose classes are indexed $k=1,\dots, K$ in decreasing order of priority, and let $\Lambda_k=\sum_{\ell=1}^k \lambda_\ell$ with $\Lambda_0=0$. Then the average utility for class $k$ is
$$
u_k
=
\frac{v_k}{\lambda_k}
\left(
\mathcal T^\rho(\Lambda_k,\mu)-\mathcal T^\rho(\Lambda_{k-1},\mu)
\right).
$$
\end{lemma}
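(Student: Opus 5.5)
We prove the statement by showing that, for each $k$, the aggregate queue length of the top-$k$ classes, $N_k(t)=\sum_{\ell\le k}q_\ell(t)$, is itself a Markov birth--death process with birth rate $\Lambda_k$, death rate $dn+\mu g_\rho(n)$ in state $n$, and hence stationary law $\Pi^\rho(\Lambda_k,\mu)$. Then $\mathcal T^\rho(\Lambda_k,\mu)$ is the long-run rate at which organs go to classes $1,\dots,k$. Differencing in $k$ isolates class $k$, and the utility formula of Section~\ref{subsec:flow_conv} gives the result.

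\textbf{Step 1 (aggregation).} Fix $q$ with $N_k(q)=n$. Arrivals to classes $1,\dots,k$ occur at total rate $\Lambda_k$, each increasing $N_k$ by one. Under Assumption~\ref{asmp:same_d}, abandonments among these $n$ patients occur at total rate $dn$, regardless of how $n$ splits across classes.

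Now consider an organ arrival. Under the index policy, the organ goes to some class in $\{1,\dots,k\}$ if and only if the realized compatible set $A$ meets $J(q)\cap\{1,\dots,k\}$. This is because every such class has priority over every class $>k$. Since patient-level compatibilities are independent with common probability $\rho$, the probability that at least one of the $n$ top-$k$ patients is compatible is
$$
1-\prod_{\ell\le k}(1-\rho)^{q_\ell}=1-(1-\rho)^n=g_\rho(n).
$$
So $N_k$ decreases by one at rate $\mu g_\rho(n)$.

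All three transition rates of $N_k$ depend on $q$ only through $n$. Hence $N_k$ is a Markov chain by the standard lumpability criterion (Dynkin's condition). This step is the crux: it uses both the common abandonment rate and the homogeneity of compatibility, which makes $g_\rho$ depend only on the headcount $n$ and not on the class composition. Ergodicity of $N_k$ follows from Lemma~\ref{lmm:ergodic}, or directly from the abandonment term $dn$, which makes the product series converge. The birth--death stationary law is then the displayed $\Pi^\rho(\Lambda_k,\mu)$. Moreover, the marginal law of $N_k$ under $\pi^\sigma$ coincides with it, because the lumped chain's stationary distribution is unique.

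\textbf{Step 2 (throughput).} The stationary rate of allocations to the top-$k$ classes is
$$
E_{\pi^\sigma}\Bigl[\mu\sum_{\ell\le k}\bar\sigma_\ell(q)\Bigr]=E_{\pi^\sigma}\bigl[\mu g_\rho(N_k)\bigr]=\mu\sum_{n\ge1}\Pi^\rho_n(\Lambda_k,\mu)g_\rho(n)=\mathcal T^\rho(\Lambda_k,\mu).
$$
For $k=0$ both sides are zero, consistent with $\Lambda_0=0$ and $\mathcal T^\rho(0,\mu)=0$, since then $\Pi_0=1$.

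Subtracting the identity for $k-1$ from that for $k$ gives
$$
E_{\pi^\sigma}[\mu\bar\sigma_k(q)]=\mathcal T^\rho(\Lambda_k,\mu)-\mathcal T^\rho(\Lambda_{k-1},\mu).
$$
Substituting into $u_k=v_kE_{\pi^\sigma}[\mu\bar\sigma_k(q)]/\lambda_k$ from Section~\ref{subsec:flow_conv} completes the argument.

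The only technical care needed is justifying the interchange of expectation and sum in Step 2. This holds because $g_\rho\le1$ and each $\Pi^\rho(\Lambda_k,\mu)$ is a probability distribution.
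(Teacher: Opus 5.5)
Your proposal is correct and follows essentially the same route as the paper's proof. Both aggregate the top-$k$ classes into a birth--death chain with birth rate $\Lambda_k$ and death rate $dn+\mu g_\rho(n)$, identify its stationary law with $\Pi^\rho(\Lambda_k,\mu)$ to read off $\mathcal T^\rho(\Lambda_k,\mu)$ as the top-$k$ throughput, and difference in $k$. You also make explicit two points the paper leaves implicit: the lumpability justification, and the fact that the marginal of $N_k$ under $\pi^\sigma$ is the unique stationary law of the lumped chain.
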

The proof is in Appendix~\ref{app:proof_index_util}.

Using these class-level utilities, we compute the aggregate regional utility $u^r$.
The aggregate regional utility $u^r$ is the weighted average of utilities of all patient types within region $r$:
$$
u^r
=
\frac{1}{\Lambda^r}\sum_{\theta\in\Theta}\lambda_{(r,\theta)} u_{(r,\theta)},
$$
where
$
\Lambda^r=\sum_{\theta}\lambda_{(r,\theta)}.
$
The Pareto frontier of regional utilities $(u^r)_{r\in\mathcal R}$ is formed by the convex hull of regional utility vectors corresponding to different priority orderings.

\subsection{The Autarky Benchmark}\label{subsec:separate_util}
To define the participation constraints, we consider the autarky scenario, where each region $r$ operates its own waiting list using only its local organ supply $\mu^r$.
Consistent with local optimality, we assume that under autarky each region maximizes its own region-average utility.
Let $u_{sep}=(u_{sep}^r)_{r\in\mathcal R}$ denote the vector of these maximum autarky utilities.

As before, reindex types as $\theta=1,2,\ldots,|\Theta|$ so that
$$
v_1\ge v_2\ge \cdots \ge v_{|\Theta|}.
$$
From Section \ref{sec:index_optimality}, the optimal policy for maximizing $u^r$ in a separate regional system is an index policy.
Maximizing $u^r$ is equivalent to maximizing
$$
\sum_{\theta}\lambda_{(r,\theta)} u_{(r,\theta)},
$$
which yields the priority index $b_i=v_\theta$.
Therefore, $u_{sep}^r$ equals the utility under the type-based priority policy within region $r$ with organ-arrival rate $\mu^r$.
By Lemma \ref{lmm:index_util},
$$
u_{sep}^r
=
\frac{1}{\Lambda^r}
\sum_{\theta=1}^{|\Theta|}
v_\theta
\left(
\mathcal T^\rho(\Lambda^r(\le \theta),\mu^r)
-
\mathcal T^\rho(\Lambda^r(\le \theta-1),\mu^r)
\right),
$$
where
$$
\Lambda^r(\le \theta)=\sum_{\theta'=1}^{\theta}\lambda_{\theta'}^r,
\qquad
\Lambda^r(\le 0)=0.
$$

The same lemma gives the class-level benchmarks behind this average. Writing $u_{sep}^{(r,\theta)}$ for the autarky utility of class $(r,\theta)$,
$$
u_{sep}^{(r,\theta)}
=
\frac{v_\theta}{\lambda_{(r,\theta)}}
\left(
\mathcal T^\rho(\Lambda^r(\le \theta),\mu^r)
-
\mathcal T^\rho(\Lambda^r(\le \theta-1),\mu^r)
\right),
\qquad
u_{sep}^r=\frac{1}{\Lambda^r}\sum_{\theta}\lambda_{(r,\theta)}u_{sep}^{(r,\theta)}.
$$
The regional benchmark is thus the arrival-weighted average of the class-level ones. We use the regional benchmarks throughout, and return to the class-level ones in Section \ref{seq:experiments}.

\subsection{Guarantee of Pareto Improvement over Autarky}\label{subsec:pareto_guarantee}
Relative to the regional utilities under separation, $u_{sep}$, the pooled system can attain strictly higher social welfare.

\begin{observation}\label{obs:pareto_frontier_improvement}
Assume $|\mathcal R|\ge 2$. For any positive weights vector $w\in\mathbb R_{++}^{|\mathcal R|}$, the maximum achievable weighted social welfare under pooling strictly exceeds the welfare from separate systems:
$$
\max_{\sigma\in\Sigma}\sum_{r\in\mathcal R} w^r u^r
>
\sum_{r\in\mathcal R} w^r u_{sep}^r.
$$

Moreover, Pareto improvement over the autarky outcome is always possible: there exist $u \in \mathcal U$ such that
$$u^r> u_{sep}^r \text{ for all } r\in\mathcal R.$$
\end{observation}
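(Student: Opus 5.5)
The plan is to first build a pooled policy that exactly reproduces autarky, and then to construct a strict improvement by redirecting organs that autarky discards.

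\textbf{Step 1: replicating autarky.} As in the footnote to the CTMDP formulation, tag each arriving organ with an origin region $r$ with probability $\mu^r/\mu$. Run region $r$'s optimal autarky index policy on its own queues, using only the organs tagged $r$. By Poisson thinning, the tagged organ streams are independent Poisson processes of rates $\mu^r$. Compatibility draws are i.i.d. per patient, so each region's queue evolves exactly as in autarky. Hence this policy $\sigma^{sep}$ is admissible, lies in $\Sigma$, and gives $u^r(\sigma^{sep})=u_{sep}^r$ for every $r$.

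\textbf{Step 2: a strict improvement for one region.} Modify $\sigma^{sep}$ into $\sigma'$ as follows. Whenever an organ tagged $r$ arrives, no patient of region $r$ is compatible, and some compatible patient of region $s\neq r$ is waiting, $\sigma'$ gives the organ to such a patient. Under $\sigma^{sep}$ this organ would have been discarded.

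The main obstacle is the coupling argument showing this redirection hurts no region. I would couple $\sigma'$ and $\sigma^{sep}$ sample-path-wise, using the same arrivals, abandonment clocks and compatibility draws, and compare queue lengths. Under $\sigma'$ the only change to region $r$ is that it receives extra organs. Removing a patient from a queue cannot lower the number of transplants that queue receives in the future, because with exponential clocks, removing a waiting patient only removes that patient's own abandonment. To make this rigorous, I would compare per-region cumulative transplant counts $T^r(t)$ and show $T^r_{\sigma'}(t)\ge T^r_{\sigma^{sep}}(t)$ pathwise. The key fact is the ``one fewer patient, at most one fewer future service'' monotonicity of $M/M/\cdot$+abandonment queues with organs offered to compatible patients. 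If the pathwise argument gets messy, a fallback is to redirect only to the region $s$ and only for organs tagged $r$, where $r$ is chosen so that regions other than $s$ are untouched.

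Wasted positive-probability events occur because the chain is ergodic (Lemma~\ref{lmm:ergodic}). In particular the state with region $r$'s queues empty and region $s$'s queues nonempty has positive stationary probability. So the strict inequality $u^s(\sigma')>u^s_{sep}$ holds for the recipient region. Doing the redirection symmetrically for every ordered pair $(r,s)$ gives strict gains for every region $s$ (each $s$ receives $r$'s waste for some $r\ne s$, using $|\mathcal R|\ge2$). This yields $u^r(\sigma')>u^r_{sep}$ for all $r$, which is the Pareto claim.

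\textbf{Step 3: the welfare inequality.} This follows immediately, since $\max_\sigma\sum_r w^ru^r\ge\sum_r w^r u^r(\sigma')>\sum_r w^r u^r_{sep}$ when all $w^r>0$. Mapping regional utilities back to $\mathcal U$ is consistent with Proposition~\ref{prop:achievable_set}, since $\sigma'\in\Sigma$.

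A simpler alternative for the monotonicity step uses regional throughput accounting. By flow conservation, $u^r$ increases as region $r$'s expected queue length falls. One would then argue that each redirected organ lowers exactly the recipient queue without altering the service rate to others. I expect the pathwise monotonicity to be the delicate step, and would attempt it first.
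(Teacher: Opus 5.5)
\textbf{Gap: the monotonicity step compares the wrong quantity.} Your architecture matches the paper's: nominal-origin replication of autarky, redirecting organs autarky discards, a coupling, and a positive-probability event for strictness. The problem is in the key step. You propose to show that per-region transplant counts rise pathwise, or alternatively that ``$u^r$ increases as region $r$'s expected queue length falls.'' Neither controls $u^r$ when $|\Theta|\ge 2$. Flow conservation gives $\Lambda^r u^r=\sum_\theta\lambda_{(r,\theta)}v_\theta-d\sum_\theta v_\theta L_{(r,\theta)}$, so what must fall is a \emph{value-weighted} queue length. The single-class heuristic ``one fewer patient, at most one fewer future service'' does not tell you which type loses that compensating service.

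The missing idea is the paper's block decomposition. Put $N_k=\sum_{\theta\le k}q_{(s,\theta)}$. Summation by parts gives $\Lambda^s u^s=\sum_\theta\lambda_{(s,\theta)}v_\theta-d\sum_k(v_k-v_{k+1})E[N_k]$, with nonnegative coefficients. Because region $s$ serves its own organs by value priority, at each such epoch $N_k$ loses a patient with probability exactly $g_\rho(N_k)$, whatever the composition inside the block. Couple births, abandonments (the larger count abandons whenever the smaller does), and the service draw at common epochs. Then extra offer epochs can only lower $N_k$, so $N_k^{\sigma'}\le N_k^{sep}$ for every $k$.

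The same device is what yields strictness. A positive rate of redirected transplants is not enough by itself, because autarky may transplant the same patient later. You need $N_{|\Theta|}^{\sigma'}<N_{|\Theta|}^{sep}$ with positive stationary probability, together with $v_{|\Theta|}>0$.

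\textbf{Your single symmetric policy versus the paper's route.} A single policy that redirects every region's waste is a genuinely different route, and it can be made to work. Each region is still offered all of its own tagged organs under its own index rule, and redirected organs only add offer epochs. However, three things need fixing:
\begin{enumerate}
    \item You must specify which region receives a discarded organ. ``Every ordered pair $(r,s)$'' is not a policy, since each organ goes to only one region.
    \item Strictness for each $s$ then requires a positive-probability configuration in which $s$ is the chosen recipient.
    \item The coupling must tolerate feedback. The extra epochs for $s$ now depend on other regions' states, which no longer follow autarky paths.
\end{enumerate}
The paper avoids all three issues. Its policy $\sigma_s$ sends all waste to a single region $s$, so every other region's path is literally unchanged and only one comparison is needed. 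The strict Pareto improvement is then the uniform mixture $|\mathcal R|^{-1}\sum_s u(\sigma_s)$, which is achievable by the occupancy-measure convexity result, Lemma~\ref{lmm:convexity}. Your fallback is a variant of $\sigma_s$, but it needs this convexity step to make every region strictly better off, and your proposal omits it.
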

The proof is in Appendix~\ref{app:proof_obs1}. 
It constructs the improving policies explicitly: for each region $s$ there is a pooled policy that leaves every other region exactly at its autarky utility and makes $s$ strictly better off.
Mixing these policies uniformly across regions makes every region strictly better off at once, so a pooled policy that strictly Pareto improves on autarky always exists.
The source of strict Pareto gains is dynamic risk pooling: A pooled system can replicate the autarky policy allocation and, in addition, use organs that would otherwise be wasted under local operation when no compatible local recipient is available upon arrival.
\footnote{Although transplant waiting lists are typically overloaded in aggregate, the queue for a specific compatible patient type in a specific locality can be effectively empty when an organ arrives, because of biological incompatibility or logistical constraints. Empirically, roughly 20\% of recovered kidneys in the U.S.\ are discarded despite long waiting lists \citep{israni2025optn}, so these local frictions are not minor.}

\section{The Limits of Utilitarian Policies in Asymmetric Regions}\label{sec:utilitarian_improvement}

Given the potential for Pareto gains, a natural choice is the \textit{utilitarian policy}, which maximizes total patient utility
$
\sum_i \lambda_i u_i
$
without regard to region.
The approach is globally efficient but fragile: it often fails to satisfy regional participation constraints when regions are asymmetric in supply and demand.

\subsection{Utilitarian Policies in a Multi-Region Setting}
The utilitarian benchmark maximizes the total utility rate of all patients and weights each arriving patient equally.
This sets welfare weights equal to arrival rates, $w_i=\lambda_i$, so the priority index reduces to
$$
b_i=\frac{w_i v_i}{\lambda_i}=v_i.
$$
Under utilitarianism, the optimal policy ranks classes purely by patient value and, when an organ arrives, allocates it to the highest-value compatible waiting class.

The utilitarian objective leaves the relative priority across regions for a given patient type $\theta$ undefined.
Patients of type $\theta$ in different regions share the same index $v_\theta$, so the planner may pick any tie-breaking rule among compatible classes of that type without changing total social welfare.
The set of achievable utilitarian outcomes therefore forms a face of $\mathcal U$ whose vertices correspond to deterministic tie-breaking rules, i.e., permutations of regions within each type block.

\subsection{The Failure of Utilitarianism in Asymmetric Regions}
Pooling Pareto dominates autarky, yet this does not ensure that a utilitarian policy improves on autarky for every region.

We illustrate the point with two-region examples. Even when the planner selects the tie-breaking rule most favorable to a given region, the participation constraints can fail.

\begin{figure}[tbp]
\centering
\begin{minipage}[b]{0.47\columnwidth}
    \centering
    \includegraphics[width=\columnwidth]{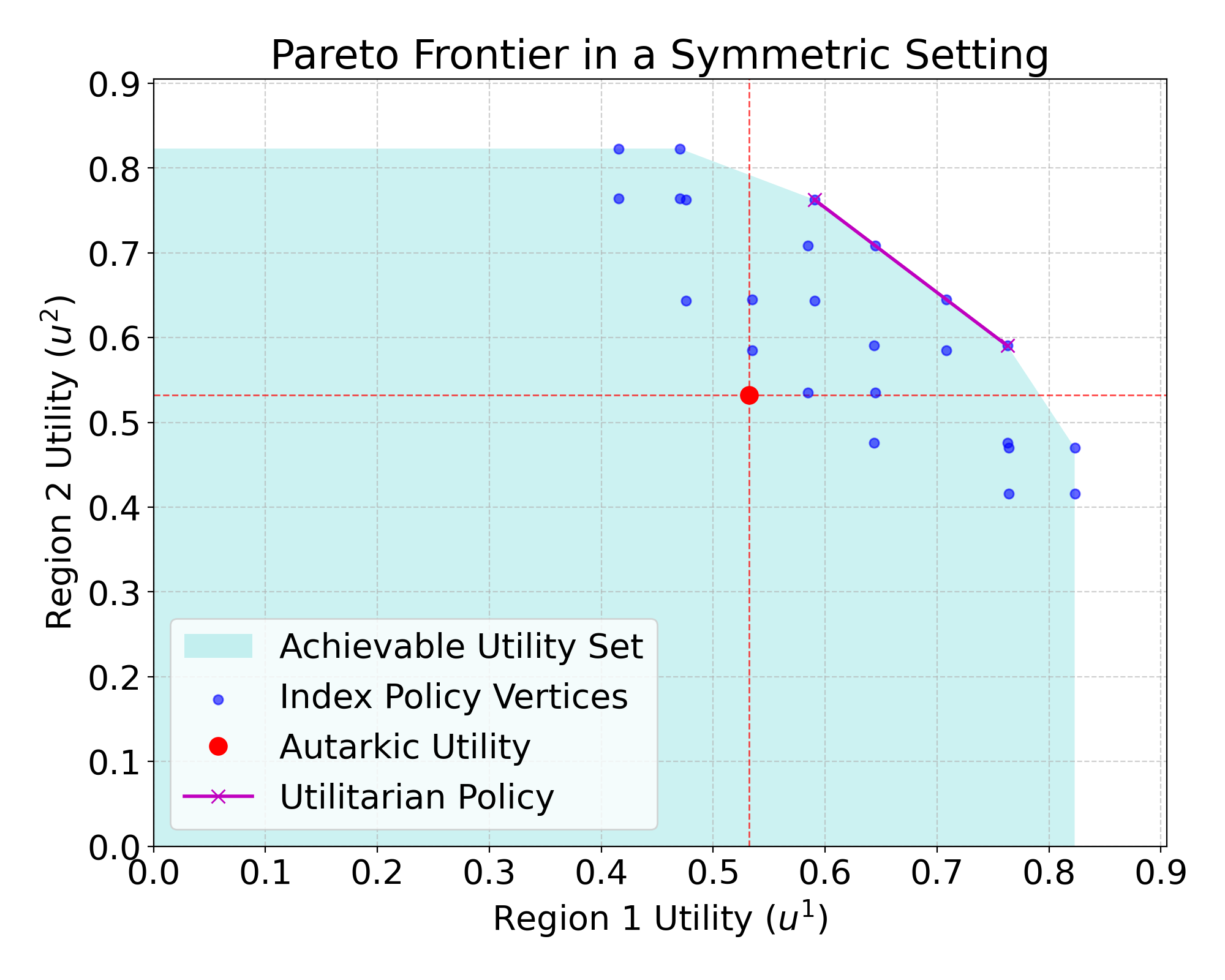}
    \caption{Pareto Frontier in a Symmetric Setting}
    \label{fig:symmetric_frontier}
\end{minipage}
\hspace{0.04\columnwidth}
\begin{minipage}[b]{0.47\columnwidth}
    \centering
    \includegraphics[width=\columnwidth]{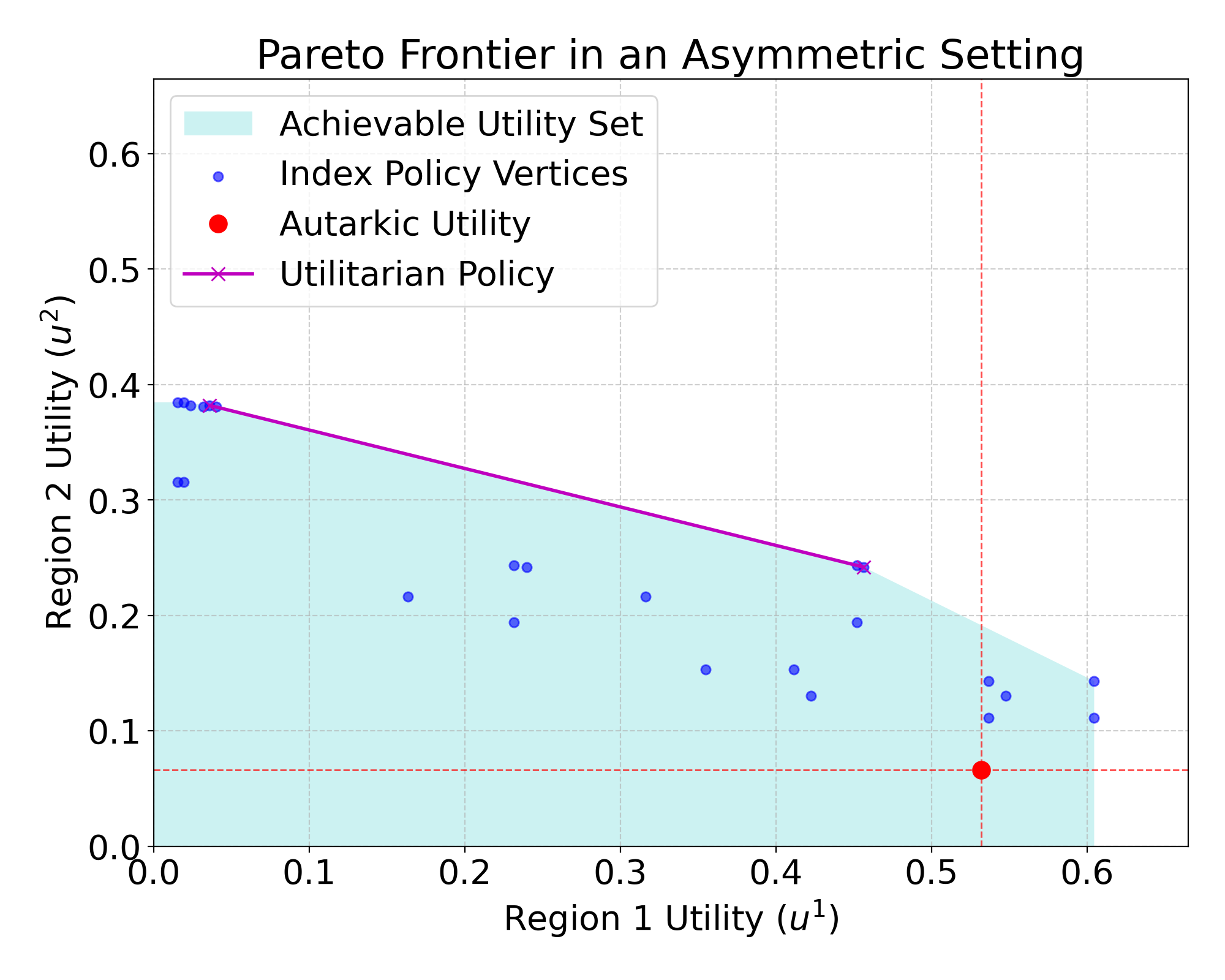}
    \caption{Pareto Frontier in an Asymmetric Setting}
    \label{fig:asymmetric_frontier}
\end{minipage}
\end{figure}

\paragraph{Symmetric Case:}
When regions have similar supply-demand profiles, the flow of organs is balanced.
Pooling then operates mainly as risk sharing, smoothing stochastic fluctuations in arrival rates.
\begin{example}\label{ex:symmetric}
Consider a system with $|\mathcal{R}|=2$ regions, $|\Theta|=2$ types, values $v_1=2, v_2=1$, compatibility probability $\rho=0.9$, and abandonment rate $d=1$.
Region $1$: $\mu^1 = 1$, $\lambda_1^1 = 1$, $\lambda_2^1 = 1$; 
Region $2$: $\mu^2 = 1$, $\lambda_1^2 = 1$, $\lambda_2^2 = 1$.
\end{example}
Utilitarianism poses no problem here.
As Figure \ref{fig:symmetric_frontier} shows, the utilitarian point lies well inside the Pareto-improving region (the upper-right quadrant relative to autarky).

\paragraph{Asymmetric Case:}
Consider the following counterexample in which the regions differ structurally.
\begin{example}[Asymmetric Failure]\label{ex:asymmetric}
    Let $|\mathcal{R}|=2, |\Theta|=2, v_1=2, v_2=1$, $\rho=0.9$, and $d=1$.
    \begin{itemize}
        \item \textbf{Region 1 (High Supply/Low Demand):} $\mu^1 = 1.0$, arrivals $\lambda^1 = (1, 1)$.
        \item \textbf{Region 2 (Low Supply/High Demand):} $\mu^2 = 0.2$, arrivals $\lambda^2 = (5, 1)$.
    \end{itemize}
    In autarky, Region 1 patients enjoy a high service rate because of abundant local supply.
    Under a pooled utilitarian policy, the large pool of high-value (Type 1) patients in Region 2 competes for Region 1's organs.
    Type 1 patients in Region 2 outrank Type 2 patients in Region 1, so organs flow from Region 1 to Region 2.
    Region 1's average utility then drops well below its autarky level.
    As Figure \ref{fig:asymmetric_frontier} illustrates, the utilitarian solution lies on the Pareto frontier yet violates Region 1's participation constraint.
\end{example}

\subsection{When Does Utilitarianism Satisfy Participation Constraints?}
Since utilitarian policies generally fail to guarantee Pareto improvements, it is useful to identify conditions under which a utilitarian outcome satisfies every region's participation constraint.
We give a sufficient condition based on \textit{proportional allocation}.
The intuition is straightforward: if the planner can distribute total pooled throughput to each region in proportion to its share of total supply ($\mu^r / \mu$), and if no region is too congested relative to the global average, then a Pareto improvement is guaranteed.

Under any utilitarian policy, the aggregate cumulative throughput for patient types $1,\dots,\theta$ is invariant to the tie-breaking rule across regions.
Indeed, let
$$
\Lambda(\le \theta)=\sum_{r\in\mathcal R}\Lambda^r(\le \theta).
$$
Because all types $1,\dots,\theta$ are prioritized above types $\theta+1,\dots,|\Theta|$, the total number of waiting patients of types $1,\dots,\theta$ evolves as a one-dimensional birth-death process with birth rate $\Lambda(\le \theta)$ and death rate $dn+\mu g_\rho(n)$.
Therefore, the aggregate utilitarian cumulative throughput is
$$
\mathcal T_{\mathrm{total}}(\le \theta)=\mathcal T^\rho(\Lambda(\le \theta),\mu),
$$
where $\mathcal T^\rho$ was defined in Section \ref{subsec:index_util}.

Likewise, under autarky, the cumulative throughput of types $1,\dots,\theta$ in region $r$ is
$$
\mathcal T_{\mathrm{sep}}^r(\le \theta)
=
\mathcal T^\rho(\Lambda^r(\le \theta),\mu^r).
$$

The following condition provides a nontrivial sufficient criterion under which a proportional utilitarian allocation Pareto improves upon autarky.

\begin{proposition}[Sufficient Condition for Utilitarian Pareto Improvement]\label{prop:suff_cond_pareto}
There exists a utilitarian policy that Pareto improves upon the autarky benchmark $u_{sep}$ if, for all types $\theta\in\Theta$ and all regions $r\in\mathcal R$:
\begin{enumerate}
    \item \textbf{Effective Congestion Condition:}
    $$
    \frac{\Lambda^r(\le \theta)}{\Lambda(\le \theta)}
    \le
    \frac{d+\rho\mu^r}{d+\rho\mu}.
    $$
    \item \textbf{Feasibility Condition:} The proportional target cumulative throughput vector
    $$
    \mathcal T^{\mathrm{target},r}(\le \theta)
    :=
    \frac{\mu^r}{\mu}\mathcal T_{\mathrm{total}}(\le \theta)
    $$
    is achievable by some convex combination of utilitarian tie-breaking rules.
\end{enumerate}
\end{proposition}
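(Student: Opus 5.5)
The plan is to reduce the regional participation constraint to a family of inequalities on cumulative throughputs, one for each threshold type $\theta$. Each of these will then follow from a stochastic comparison of two birth--death chains of the form in Section~\ref{subsec:index_util}.

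\textbf{Step 1 (reduction to throughputs).} By the Feasibility Condition, there is a convex combination of utilitarian tie-breaking rules under which, for each $r$ and $\theta$, region $r$'s cumulative throughput of types $1,\dots,\theta$ equals $\mathcal T^{\mathrm{target},r}(\le\theta)=\frac{\mu^r}{\mu}\mathcal T^\rho(\Lambda(\le\theta),\mu)$. Utilities are linear in the mixing weights, as in Proposition~\ref{prop:achievable_set}. Write $T^r(\le\theta)$ for region $r$'s cumulative throughput, with $T^r(\le 0)=0$ and $v_{|\Theta|+1}:=0$. By Lemma~\ref{lmm:index_util} and Abel summation,
$$
\Lambda^r u^r=\sum_{\theta}v_\theta\bigl(T^r(\le\theta)-T^r(\le\theta-1)\bigr)=\sum_{\theta}(v_\theta-v_{\theta+1})\,T^r(\le\theta).
$$
The same identity holds for $u^r_{sep}$ with $\mathcal T_{\mathrm{sep}}^r(\le\theta)$ in place of $T^r(\le\theta)$. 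Since $v_\theta-v_{\theta+1}\ge 0$, it suffices to show
$$
\frac{\mu^r}{\mu}\mathcal T^\rho(\Lambda(\le\theta),\mu)\ \ge\ \mathcal T^\rho(\Lambda^r(\le\theta),\mu^r)\qquad\text{for all }r,\theta.
$$

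\textbf{Step 2 (per-organ compatibility probability).} By definition, $\mathcal T^\rho(\lambda,\mu)/\mu=E_{\Pi^\rho(\lambda,\mu)}[g_\rho(n)]$, which is the stationary probability that an arriving organ finds a compatible patient. Because $g_\rho$ is increasing in $n$, the inequality in Step 1 follows once the stationary law of the chain with parameters $(\Lambda(\le\theta),\mu)$ stochastically dominates that of the chain with $(\Lambda^r(\le\theta),\mu^r)$.

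\textbf{Step 3 (likelihood-ratio comparison; the main step).} I will establish the stronger likelihood-ratio ordering. Its ratio $\Pi_n/\Pi'_n$ is proportional to a product of per-step factors, so it suffices that each factor is at least one:
$$
\frac{\Lambda(\le\theta)}{dm+\mu g_\rho(m)}\ \ge\ \frac{\Lambda^r(\le\theta)}{dm+\mu^r g_\rho(m)}\qquad\text{for all }m\ge1 .
$$
Equivalently,
$$
\frac{\Lambda^r(\le\theta)}{\Lambda(\le\theta)}\le h(m):=\frac{d+\mu^r g_\rho(m)/m}{d+\mu g_\rho(m)/m}.
$$
The function $g_\rho$ is concave with $g_\rho(0)=0$, so $g_\rho(m)/m$ is nonincreasing in $m$. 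Since $\mu^r\le\mu$, the map $x\mapsto(d+\mu^r x)/(d+\mu x)$ is nonincreasing in $x$. Hence $h$ is nondecreasing in $m$ and is minimized at $m=1$, where $g_\rho(1)=\rho$ gives $h(1)=(d+\rho\mu^r)/(d+\rho\mu)$. This is exactly the Effective Congestion Condition.

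Monotonicity of the ratio $\Pi_n/\Pi'_n$ then gives stochastic dominance, and hence the inequality in Step 1. The delicate point I expect is this monotonicity of $h$, which is what pins the binding constraint at $m=1$; the remaining steps are bookkeeping.

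The result is a weak Pareto improvement $u^r\ge u^r_{sep}$ for every $r$, which I take to be the meaning of "Pareto improves" here. If strictness is wanted, I would combine this with the strictly improving perturbations of Observation~\ref{obs:pareto_frontier_improvement}, but that is not needed for the statement as posed.
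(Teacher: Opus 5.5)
Your proposal is correct and follows essentially the same route as the paper. Your Step 3 is exactly the paper's Lemma~\ref{lmm:order_Urho}: $g_\rho(n)/n$ nonincreasing and $a\mapsto(d+a\mu^r)/(d+a\mu)$ nonincreasing make each per-step factor ratio at least one at $m=1$ and hence for all $m$, giving a monotone likelihood ratio and so $\mathcal T^\rho(\Lambda(\le\theta),\mu)/\mu\ge\mathcal T^\rho(\Lambda^r(\le\theta),\mu^r)/\mu^r$. The paper then combines the Feasibility Condition with the same Abel-summation identity to conclude the weak improvement $u^r\ge u^r_{sep}$ for every $r$.
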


The proof is in Appendix~\ref{app:proof_suff_cond}.
When both conditions hold, a utilitarian policy that implements the proportional target throughput Pareto dominates autarky.
The intuition is that when no region is too demand-heavy relative to its supply contribution, the gains from pooling stochastic supply, demand, and compatibility risk suffice to compensate all regions.

Example \ref{ex:asymmetric} shows that these conditions need not hold in asymmetric environments.
When they fail, policymakers cannot rely on standard efficiency maximization alone and must search explicitly for a policy that enforces participation constraints.
This motivates the constrained optimization approach developed in the next section.

\section{Algorithm for Constrained Pareto Optimization}\label{sec:algorithm}
As the previous section showed, standard utilitarian policies may violate participation constraints in asymmetric environments.
Policymakers therefore face a nontrivial constrained optimization problem: maximize social welfare subject to the constraint that no region is worse off than under autarky.
The feasible set has a factorial number of priority permutations, so brute-force approaches that enumerate the entire achievable utility set are computationally intractable.

In this section, we propose the \textit{Constrained Pareto Frontier Search (CPFS)} algorithm.
CPFS efficiently solves the high-dimensional constrained problem by exploiting the optimality of index policies established in Section \ref{sec:index_optimality}.

\subsection{Problem Formulation}

We work with a general formulation that nests the multi-region participation problem.
The objective is to maximize a weighted social welfare function with weights $w \in \mathbb{R}^{K}_{+}$, subject to $L\ge 1$\footnote{The case with no constraints can be trivially solved by Theorem \ref{thm:index_optimality}.} linear inequality constraints defined by a matrix $M \in \mathbb{R}^{L \times K}_+$ and a vector $c \in \mathbb{R}^L_{+}$.

The optimization problem $(P)$ is defined as:
\begin{equation}
(P) \quad
\begin{aligned}
\max_{\sigma \in \Sigma} \quad & w^\top u(\sigma) \\
\textrm{s.t.} \quad & M u(\sigma) \ge c.
\end{aligned}
\end{equation}
We assume throughout that the constraints admit an interior point, which is slightly more than feasibility.

\begin{assumption}\label{asmp:slater}
There exists an achievable utility vector $\bar u\in\mathcal U$ and a constant $s_0>0$ such that $M\bar u\ge c+s_0\mathbf 1$.
\end{assumption}

Regional participation constraints $u^r \ge u_{sep}^r$ are a special case of this formulation, with the rows of $M$ aggregating class utilities into regional utilities, and they always admit an interior point when multiple regions exist.
Observation \ref{obs:pareto_frontier_improvement} shows that when $|\mathcal R| \ge 2$, there exists $\bar u \in \mathcal U$ which satisfies $\bar u^r>u^r_{sep}$ for every $r$ and thus for regional participation constraints, Assumption \ref{asmp:slater} holds with $s_0=\min_r(\bar u^r-u^r_{sep})$.
The formulation also accommodates other fairness criteria, such as minimum-utility guarantees for specific patient classes.
For two leading scenarios, $M$ and $c$ take the following form:

\begin{enumerate}
    \item \textbf{Regional Participation Constraints (Autarky):}
    To ensure that a region $r$ achieves at least its autarky utility level $u_{sep}^r$, we impose the constraint $u^r(\sigma) \ge u_{sep}^r$.
    Recall that the aggregate utility for region $r$ is a weighted sum of the utilities of its constituent classes:
    \[ u^r(\sigma) = \sum_{\theta \in \Theta} \frac{\lambda_{(r,\theta)}}{\Lambda^r} u_{(r,\theta)}(\sigma), \]
    where $\lambda_{(r,\theta)}$ is the arrival rate of type $\theta$ patients in region $r$, and $\Lambda^r = \sum_{\theta} \lambda_{(r,\theta)}$ is the total arrival rate for the region.
    
    This constraint corresponds to row $\ell$ of matrix $M$:
    \[
    M_{\ell, k} = 
    \begin{cases} 
    \frac{\lambda_{(r,\theta)}}{\Lambda^r} & \text{if } k \text{ corresponds to class } (r,\theta), \\
    0 & \text{otherwise.}
    \end{cases}
    \]
    The corresponding element of the vector $c$ is set to $c_\ell = u_{sep}^r$.

    \item \textbf{Minimum Utility Guarantees (Class-Specific):}
    To protect vulnerable patient groups, a policymaker might require that the utility of a specific class $k^* = (r^*, \theta^*)$ does not fall below a threshold $\underline{u}$.
    This is encoded by a row $\ell$ in $M$ that is a standard basis vector $e_{k^*}^\top$:
    \[
    M_{\ell, k} = 
    \begin{cases} 
    1 & \text{if } k = k^*, \\
    0 & \text{otherwise.}
    \end{cases}
    \]
    The corresponding element of $c$ is $c_\ell = \underline{u}$.
\end{enumerate}

\subsection{The Dual Approach and the Subgradient Oracle via Index Policies}
The primal problem $(P)$ is a linear program over the polytope of achievable utilities $\mathcal{U}$.
Characterizing $\mathcal{U}$ explicitly is computationally demanding because the polytope has a factorial number of vertices.
We sidestep this curse of dimensionality with a Lagrangian dual approach paired with an efficient subgradient oracle.

For the algorithmic analysis, define
$B_M := \max_{1\le \ell\le L}\sum_{k=1}^K M_{\ell, k}$, 
$\bar\lambda=\max_i\lambda_i$, $\underline\lambda=\min_i\lambda_i$, and $\bar v=\max_i v_i$.

The only nontrivial computational task in the oracle is evaluating the utility vector of a given index policy.
The utility formula has no exact closed form, but it remains numerically tractable because each cumulative throughput term is generated by a one-dimensional birth-death chain.

\begin{lemma}[Evaluation of an index policy]\label{lmm:index_eval}
Fix a static priority ordering and a tolerance $\delta>0$.
Truncating each birth--death recursion at a level
$N\ \ge\ \max\left\{\frac{2eK\bar\lambda}{d},\ \log_2\frac{4\bar v\mu}{\underline\lambda\,\delta}\right\}$
yields a vector $\widehat u(\sigma)$ with
$\|\widehat u(\sigma)-u(\sigma)\|_\infty \le \delta$, 
at a cost of
$O\left(K\log K+\frac{K^2\bar\lambda}{d}+K\log\frac{\mu\bar v}{\underline\lambda\,\delta}\right)$
arithmetic operations.
\end{lemma}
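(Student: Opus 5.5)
We combine Lemma~\ref{lmm:index_util} with a truncation estimate for the birth--death chain defining $\mathcal T^\rho$. Sort the classes by the given ordering, which costs $O(K\log K)$, and form the prefix sums $\Lambda_1,\dots,\Lambda_K$. By Lemma~\ref{lmm:index_util}, $u_k=\frac{v_k}{\lambda_k}\bigl(\mathcal T^\rho(\Lambda_k,\mu)-\mathcal T^\rho(\Lambda_{k-1},\mu)\bigr)$. If each $\mathcal T^\rho(\Lambda_k,\mu)$ is computed to within $\eta$, then each $u_k$ is within $2\bar v\eta/\underline\lambda$. We therefore need $\eta\le \underline\lambda\delta/(2\bar v)$ for every $k$.

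For a fixed $\lambda=\Lambda_k\le K\bar\lambda$, write $a_n=\prod_{m=1}^n \lambda/(dm+\mu g_\rho(m))$ and $a_0=1$. Then $\mathcal T^\rho=\mu\sum_{n\ge1}a_n g_\rho(n)\big/\sum_{n\ge0}a_n$. The truncated estimate replaces both sums by their partial sums up to $N$; evaluating $a_1,\dots,a_N$, $g_\rho(n)$ and the two partial sums takes $O(N)$ operations by recursion.

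Since $g_\rho\le1$, we have $a_n\le \lambda^n/(d^n n!)$. Using $n!\ge (n/e)^n$, any $n\ge N\ge 2eK\bar\lambda/d$ gives $\lambda/(dm)\le 1/2$ for $m>N/2$. A cleaner route is direct: for $m\ge N\ge 2e\lambda/d$ each factor $\lambda/(dm+\mu g_\rho(m))\le 1/(2e)\le1/2$. This yields the geometric tail $\sum_{n>N}a_n\le a_N$. The same $n!$ bound gives $a_N\le (e\lambda/(dN))^N\le 2^{-N}$.

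For the ratio, let $A=\sum_{n\ge0} a_n$, $B=\sum_{n\ge1} a_n g_\rho(n)$, and let $A_N,B_N$ be their truncations. Then $0\le B-B_N\le \epsilon$ and $0\le A-A_N\le\epsilon$ with $\epsilon=2^{-N}$. Since $A_N\ge1$ and $B\le A$,
$$
\Bigl|\tfrac{B}{A}-\tfrac{B_N}{A_N}\Bigr|\le \tfrac{B-B_N}{A_N}+B\Bigl(\tfrac{1}{A_N}-\tfrac1A\Bigr)\le 2\epsilon .
$$
Hence $|\mathcal T^\rho-\widehat{\mathcal T}^\rho|\le 2\mu 2^{-N}$. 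Requiring this to be at most $\underline\lambda\delta/(2\bar v)$ gives $N\ge\log_2(4\bar v\mu/(\underline\lambda\delta))$, matching the second term of the stated bound. Both conditions on $N$ are uniform in $k$ because $\Lambda_k\le K\bar\lambda$.

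For the cost, there are $K$ recursions of length $N$, which gives $O(KN)=O(K^2\bar\lambda/d+K\log(\mu\bar v/(\underline\lambda\delta)))$ operations. Adding the sort and the final $O(K)$ differences gives the claimed total.

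The main obstacle is getting the constants exactly right. One needs the tail dominated geometrically with ratio $1/2$ beyond $N$, and simultaneously $a_N\le 2^{-N}$. The condition $N\ge 2eK\bar\lambda/d$ delivers both, via $(e\lambda/(dN))^N\le 2^{-N}$ and per-factor ratios at most $1/(2e)$. Care is also needed so that the factor $2$ from the ratio perturbation and the factor $2$ from differencing consecutive $\mathcal T$'s combine to exactly the $4$ inside the logarithm.
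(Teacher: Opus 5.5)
Your proof is correct and follows the paper's argument. You use the same reduction via Lemma~\ref{lmm:index_util} to accuracy $\underline\lambda\delta/(2\bar v)$ per throughput term, the same ratio-perturbation bound $2\mu$ times the tail, the same estimate $a_n\le(\Lambda_k/d)^n/n!$ combined with $n!\ge(n/e)^n$ to get a tail of at most $2^{-N}$, and the same $O(K\log K+KN)$ operation count. The only difference is cosmetic: you dominate the tail geometrically by $a_N$ using the per-factor ratio bound $1/(2e)$, whereas the paper bounds each term by $2^{-n}$ and sums, and your earlier remark about factors with $m>N/2$ is never used and can be deleted.
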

The proof is in Appendix~\ref{app:proof_index_eval}.

Let $\gamma \in \mathbb{R}^L_+$ be the vector of Lagrange multipliers associated with the constraints $M u(\sigma) \ge c$.
The Lagrangian function $\mathcal{L}(\sigma,\gamma)$ is
\begin{align*}
    \mathcal{L}(\sigma,\gamma)
    &= w^\top u(\sigma) + \gamma^\top(Mu(\sigma)-c) \\
    &= (w+M^\top\gamma)^\top u(\sigma) - \gamma^\top c.
\end{align*}
The dual function is
\begin{equation}
    g(\gamma)
    =
    \max_{\sigma\in\Sigma}\mathcal{L}(\sigma,\gamma)
    =
    \left(\max_{\sigma\in\Sigma}(w+M^\top\gamma)^\top u(\sigma)\right)-\gamma^\top c.
\end{equation}
The dual problem is to minimize the convex function $g(\gamma)$ over $\gamma\ge 0$.
Assumption \ref{asmp:slater} bounds the dual optimum, which lets us replace the nonnegative orthant by a box.

\begin{lemma}\label{lmm:dual_radius}
Under Assumption \ref{asmp:slater}, strong duality holds for $(P)$, the dual optimum is attained, and every optimal solution $\gamma^*$ satisfies
$$
\|\gamma^*\|_1
\le
\frac{\|w\|_1\,\bar v}{s_0}
=:R.
$$
\end{lemma}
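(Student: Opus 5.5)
The plan is to treat $(P)$ as a finite-dimensional linear program over the polytope $\mathcal U$ and then use the standard Slater bound on dual multipliers.

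\textbf{Convexity and compactness.} By Proposition \ref{prop:achievable_set}, $\mathcal U$ is the set of nonnegative vectors dominated by convex combinations of the finitely many index-policy utility vectors $u(\sigma_m^I)$. Hence $\mathcal U$ is a convex, compact polyhedron in $\mathbb R_+^K$. Since $w\ge 0$ and $M\ge 0$, the problem $(P)$ is equivalent to the linear program
$$\max\bigl\{w^\top u : u\in\mathcal U,\ Mu\ge c\bigr\},$$
which by Assumption \ref{asmp:slater} is feasible, and by compactness has an optimum.

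\textbf{Strong duality and attainment.} Linear programming duality, or Slater's condition for this convex program with linear constraints, gives strong duality: $\min_{\gamma\ge0} g(\gamma)$ equals the primal optimum $p^*$, and the dual minimum is attained. Here $g(\gamma)=\max_{u\in\mathcal U}(w+M^\top\gamma)^\top u-\gamma^\top c$, which coincides with the $\max_{\sigma\in\Sigma}$ form because $\mathcal U=\{u(\sigma)\}$. To avoid appealing to a black box, I would apply the separating hyperplane theorem to the convex set $\{(t,z): \exists u\in\mathcal U,\ t\le w^\top u,\ z\le Mu-c\}$ and the point $(p^*,0)$ from above. The interior point $\bar u$ forces the separating normal to have a positive first component, and normalizing it yields $\gamma^*\ge0$ with $g(\gamma^*)\le p^*$. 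Weak duality, $g(\gamma)\ge p^*$, is immediate since any feasible $u$ has $\gamma^\top(Mu-c)\ge0$.

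\textbf{Radius bound.} For any optimal $\gamma^*$, evaluate the inner maximum at $\bar u$:
$$p^*=g(\gamma^*)\ge w^\top\bar u+\gamma^{*\top}(M\bar u-c)\ge w^\top\bar u+s_0\|\gamma^*\|_1 .$$
This uses $\gamma^*\ge0$, so that $\gamma^{*\top}\mathbf 1=\|\gamma^*\|_1$. Therefore
$$\|\gamma^*\|_1\le \frac{p^*-w^\top\bar u}{s_0}.$$
Finally, every $u\in\mathcal U$ satisfies $0\le u_i\le v_i\le\bar v$; the upper bound follows from $u_i=v_i-\frac{d v_i}{\lambda_i}L_i\le v_i$. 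Combined with $w\ge0$, this gives
$$p^*-w^\top\bar u\le w^\top u^*\le \|w\|_1\bar v,$$
where $u^*$ is a primal optimum and we drop the nonnegative term $w^\top\bar u$. This yields $R=\|w\|_1\bar v/s_0$.

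\textbf{Main obstacle.} The only delicate point is justifying strong duality, with dual attainment, for a maximization over policies rather than an explicit set. This is resolved by passing through Proposition \ref{prop:achievable_set}, which makes $\mathcal U$ a polytope so that the finite-dimensional LP or Slater theory applies. The rest is a routine bound.
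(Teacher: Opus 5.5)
Your proposal is correct and follows essentially the same route as the paper: you lower-bound $g(\gamma^*)$ by evaluating the inner maximum at the Slater point $\bar u$, invoke strong duality for the linear program over the bounded polytope $\mathcal U$, and bound $\mathrm{OPT}$ by $\|w\|_1\bar v$ using $u_i\le v_i$ and $w^\top\bar u\ge 0$. The only difference is that you sketch the supporting-hyperplane proof of strong duality and dual attainment, whereas the paper cites this as a standard consequence of Slater's condition.
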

The proof is in Appendix~\ref{app:proof_dual_radius}.
We therefore fix the dual search region
$\mathcal G:=[0,R+1]^L$,
which contains every optimal dual solution, since $\|\gamma^*\|_\infty\le\|\gamma^*\|_1\le R$.

For every nonnegative dual vector $\gamma$, Theorem \ref{thm:index_optimality} implies that the inner maximization is solved by the index policy induced by the modified welfare weights
$\tilde w(\gamma)=w+M^\top\gamma.$
Thus, the oracle can identify the maximizing policy exactly by sorting classes according to the indices $b_i(\gamma)=\tilde w_i(\gamma)v_i/\lambda_i$.
The only numerical approximation enters when we evaluate the resulting policy.
\begin{definition}[Dual oracle]\label{def:oracle}
The \emph{dual oracle} takes a vector $\gamma\in\mathbb R^L_+$ and a tolerance $\delta>0$, and returns
\begin{enumerate}
    \item the index policy $\sigma_\gamma$ induced by the indices $b_i(\gamma)$, under a fixed tie-breaking rule;
    \item a utility estimate $\widehat u(\sigma_\gamma)$ with $\|\widehat u(\sigma_\gamma)-u(\sigma_\gamma)\|_\infty\le\delta$, computed by the truncated recursion of Lemma \ref{lmm:index_eval};
    \item the value estimate $\widehat g(\gamma):=(w+M^\top\gamma)^\top\widehat u(\sigma_\gamma)-\gamma^\top c$;
    \item the subgradient estimate $\widehat\xi(\gamma):=M\widehat u(\sigma_\gamma)-c$.
\end{enumerate}
\end{definition}

Forming $\tilde w(\gamma)$, $\widehat g(\gamma)$ and $\widehat\xi(\gamma)$ takes $O(KL)$ operations and the sort takes $O(K\log K)$, so by Lemma \ref{lmm:index_eval} one oracle call costs
$O\left(KL+K\log K+\frac{K^2\bar\lambda}{d}+K\log\frac{\mu\bar v}{\underline\lambda\,\delta}\right)$
arithmetic operations.

Write $C_w:=\|w\|_1+(R+1)LB_M$ and $C_g:=2B_M\bar v$.
Under Assumption \ref{asmp:slater} we have $c_\ell\le(M\bar u)_\ell\le B_M\bar v$ for every $\ell$, so $\|c\|_\infty\le B_M\bar v$ and $C_g\ge B_M\bar v+\|c\|_\infty$.

\begin{lemma}[The dual function]\label{lmm:dual_props}
For $\gamma\in\mathcal G$ set $\xi(\gamma):=Mu(\sigma_\gamma)-c$. Then $g$ is convex, $\xi(\gamma)$ is a subgradient of $g$ at $\gamma$, and $\|\xi(\gamma)\|_\infty\le C_g$, so that $|g(\gamma)-g(\gamma')|\le C_g\|\gamma-\gamma'\|_1$ for all $\gamma,\gamma'\in\mathcal G$.
\end{lemma}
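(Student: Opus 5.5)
The proof rests on one observation: $g$ is a pointwise maximum of affine functions of $\gamma$. For each fixed policy $\sigma\in\Sigma$, the map
$$\gamma\mapsto \mathcal L(\sigma,\gamma)=w^\top u(\sigma)+\gamma^\top(Mu(\sigma)-c)$$
is affine, and $g(\gamma)=\sup_{\sigma\in\Sigma}\mathcal L(\sigma,\gamma)$. The supremum is finite because $u(\sigma)\in[0,\bar v]^K$. A supremum of affine functions is convex, so $g$ is convex.

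\emph{Subgradient.} Fix $\gamma\in\mathcal G$. By Theorem~\ref{thm:index_optimality} applied with the nonnegative weights $\tilde w(\gamma)=w+M^\top\gamma$, the index policy $\sigma_\gamma$ attains the inner maximum, so $g(\gamma)=\mathcal L(\sigma_\gamma,\gamma)$. For any $\gamma'\ge 0$ we then have
$$g(\gamma')\ \ge\ \mathcal L(\sigma_\gamma,\gamma')\ =\ \mathcal L(\sigma_\gamma,\gamma)+(\gamma'-\gamma)^\top\bigl(Mu(\sigma_\gamma)-c\bigr)\ =\ g(\gamma)+\xi(\gamma)^\top(\gamma'-\gamma).$$
This is exactly the subgradient inequality. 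The only point needing care is that the maximizer is the index policy under the fixed tie-breaking rule. That policy is optimal whatever the tie-breaking, because in Theorem~\ref{thm:index_optimality} ties in $b_i$ only involve classes with equal $b_i$, and under Assumption~\ref{asmp:same_d} all $d_i$ are equal.

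\emph{Bound on the subgradient.} Each component of $u(\sigma)$ lies in $[0,\bar v]$, since $u_i=v_i\cdot(\text{throughput})/\lambda_i\le v_i$ by flow conservation. The rows of $M$ are nonnegative with row sums at most $B_M$, so $0\le (Mu)_\ell\le B_M\bar v$. Also $0\le c_\ell\le B_M\bar v$, as already noted before the statement using Assumption~\ref{asmp:slater}. Hence
$$|\xi_\ell(\gamma)|\le \max\{B_M\bar v,\ c_\ell\}\le B_M\bar v\le C_g.$$
The cruder estimate $B_M\bar v+\|c\|_\infty\le C_g$ also suffices.

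\emph{Lipschitz estimate.} Apply the subgradient inequality at both points:
$$g(\gamma')-g(\gamma)\ge \xi(\gamma)^\top(\gamma'-\gamma)\qquad\text{and}\qquad g(\gamma)-g(\gamma')\ge \xi(\gamma')^\top(\gamma-\gamma').$$
Together with Hölder's inequality, $|\xi^\top(\gamma-\gamma')|\le\|\xi\|_\infty\|\gamma-\gamma'\|_1$, these give $|g(\gamma)-g(\gamma')|\le C_g\|\gamma-\gamma'\|_1$.

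The main point needing care is the subgradient step, specifically that the supremum over all stationary policies (not only index policies) is attained by $\sigma_\gamma$. Theorem~\ref{thm:index_optimality} supplies exactly this, since $\tilde w(\gamma)\ge 0$ for $\gamma\ge 0$. Everything else is routine.
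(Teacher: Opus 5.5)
Your proposal is correct and follows the paper's proof essentially step for step. The steps are: convexity of $g$ as a pointwise maximum of affine functions; the subgradient inequality from evaluating $\mathcal L(\sigma_\gamma,\cdot)$ at $\gamma'$, after Theorem~\ref{thm:index_optimality} (with the nonnegative weights $w+M^\top\gamma$) gives $g(\gamma)=\mathcal L(\sigma_\gamma,\gamma)$; the entrywise bound from $0\le u_i(\sigma)\le\bar v$ and the row sums of $M$; and the Lipschitz estimate from the subgradient inequality at both points. Your sharper bound $|\xi_\ell(\gamma)|\le B_M\bar v$, valid because both $(Mu)_\ell$ and $c_\ell$ lie in $[0,B_M\bar v]$, and your remark that the fixed tie-breaking rule is harmless under a common abandonment rate are small refinements the paper leaves implicit.
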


\begin{lemma}[Oracle accuracy]\label{lmm:oracle_acc}
Let $\gamma\in\mathcal G$ and $\delta>0$. Then
\begin{enumerate}[(i)]
    \item $|\widehat g(\gamma)-g(\gamma)|\le C_w\,\delta$ and $\|\widehat\xi(\gamma)-\xi(\gamma)\|_\infty\le B_M\,\delta$;
    \item $g(\gamma')\ \ge\ \widehat g(\gamma)+\widehat\xi(\gamma)^\top(\gamma'-\gamma)-2C_w\,\delta$ for every $\gamma'\in\mathcal G$.
\end{enumerate}
\end{lemma}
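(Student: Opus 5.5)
The plan is to deduce both parts directly from the oracle definition, the exact identities $g(\gamma)=\tilde w(\gamma)^\top u(\sigma_\gamma)-\gamma^\top c$ and $\xi(\gamma)=Mu(\sigma_\gamma)-c$, and Lemma~\ref{lmm:dual_props}. The first identity holds because Theorem~\ref{thm:index_optimality} guarantees that $\sigma_\gamma$ attains the inner maximum. Since $M\ge 0$ and $\gamma\ge 0$, the modified weights $\tilde w(\gamma)=w+M^\top\gamma$ are nonnegative, so the theorem applies. With the fixed tie-breaking rule it is still an exact maximizer, because ties in the index do not affect the optimal value.

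For part (i), subtract to get $\widehat g(\gamma)-g(\gamma)=\tilde w(\gamma)^\top(\widehat u-u)$. We then bound $|\tilde w(\gamma)^\top(\widehat u-u)|\le\|\tilde w(\gamma)\|_1\,\delta$ using the oracle guarantee $\|\widehat u-u\|_\infty\le\delta$ from Lemma~\ref{lmm:index_eval}. Next, $\|\tilde w(\gamma)\|_1=\|w\|_1+\sum_\ell\gamma_\ell\sum_k M_{\ell,k}\le\|w\|_1+(R+1)LB_M=C_w$, since every $\gamma_\ell\le R+1$ on $\mathcal G$ and every row sum of $M$ is at most $B_M$. For the subgradient, $\widehat\xi(\gamma)-\xi(\gamma)=M(\widehat u-u)$, and each coordinate is at most $\sum_k M_{\ell,k}\delta\le B_M\delta$ in absolute value.

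For part (ii), start from the exact subgradient inequality in Lemma~\ref{lmm:dual_props}:
\[
g(\gamma')\ge g(\gamma)+\xi(\gamma)^\top(\gamma'-\gamma).
\]
Replace $g(\gamma)$ by $\widehat g(\gamma)$ at a cost of at most $C_w\delta$, by part (i). Replace $\xi(\gamma)$ by $\widehat\xi(\gamma)$ at a cost of $|(\widehat\xi-\xi)^\top(\gamma'-\gamma)|\le B_M\delta\,\|\gamma'-\gamma\|_1$. Since both points lie in $\mathcal G=[0,R+1]^L$, we have $\|\gamma'-\gamma\|_1\le L(R+1)$. The second error is therefore at most $(R+1)LB_M\delta\le C_w\delta$, and the two losses together give the stated $2C_w\delta$.

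There is no real obstacle. The only steps needing care are checking that $\sigma_\gamma$ is an exact maximizer (nonnegativity of $\tilde w$ and tie-insensitivity), so that no optimization error enters, and bounding $\|\gamma'-\gamma\|_1$ by $L(R+1)$ so that the subgradient error is absorbed into $C_w\delta$.
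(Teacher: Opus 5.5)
Your proposal is correct and follows the paper's proof essentially line by line. Part (i) uses the same H\"older bound with $\|w+M^\top\gamma\|_1\le\|w\|_1+(R+1)LB_M=C_w$ and the same row-sum bound on $M(\widehat u-u)$. Part (ii) uses the same chain from the exact subgradient inequality of Lemma~\ref{lmm:dual_props}, absorbing $\|\widehat\xi-\xi\|_\infty\|\gamma'-\gamma\|_1\le(R+1)LB_M\delta\le C_w\delta$. Your tie-breaking remark is harmless: under the common abandonment rate any tie-breaking rule gives an optimal index policy, and the paper simply takes exactness of $\sigma_\gamma$ from Theorem~\ref{thm:index_optimality} via Lemma~\ref{lmm:dual_props}.
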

The proofs are in Appendices~\ref{app:proof_dual_props} and~\ref{app:proof_oracle_acc}.

\subsection{The CPFS Algorithm}

Both phases of CPFS run the central-cut ellipsoid method: Phase 1 minimizes the dual function $g$ over $\mathcal G$ with the estimated subgradient $\widehat\xi$ as the cut direction, and Phase 2 maximizes the recovery objective over mixtures of the policies generated along the way.
Both phases use the standard ellipsoid method (see, e.g., \citealp{grotschel1981ellipsoid,grotschel2012geometric}), with index policy optimality as an efficient subgradient oracle.

\begin{definition}[Central cut]\label{def:cut}
For $\gamma\in\mathbb R^n$ and positive definite $D\in\mathbb R^{n\times n}$, write
$\mathcal B(\gamma,D):=\{\gamma'\in\mathbb R^n\mid(\gamma'-\gamma)^\top D^{-1}(\gamma'-\gamma)\le 1\}$.
Given $z\neq 0$, $\mathrm{cut}_n(\gamma,D,z)$ returns the pair $(\gamma^+,D^+)$ for which $\mathcal B(\gamma^+,D^+)$ is the minimum-volume ellipsoid containing the half-ellipsoid $\mathcal B(\gamma,D)\cap\{\gamma'\mid z^\top\gamma'\le z^\top\gamma\}$:
$$
\gamma^+=\gamma-\frac{1}{n+1}\cdot\frac{Dz}{\sqrt{z^\top Dz}},
\qquad
D^+=
\begin{cases}
\dfrac{n^2}{n^2-1}\left(D-\dfrac{2}{n+1}\cdot\dfrac{Dzz^\top D}{z^\top Dz}\right), & n\ge 2,\\[1.5ex]
D/4, & n=1 .
\end{cases}
$$
\end{definition}

The volume contracts at a fixed rate,
$\mathrm{vol}(\mathcal B(\gamma^+,D^+))/\mathrm{vol}(\mathcal B(\gamma,D))<e^{-1/(2n)}$:
the ratio is $\bigl((n/(n+1))^{n+1}(n/(n-1))^{n-1}\bigr)^{1/2}$ for $n\ge 2$ and exactly $1/2$ for $n=1$.
Both phases use $\mathrm{cut}_n$ in the same way. The current center either violates one of the box constraints that define the feasible set, in which case that constraint supplies a cut discarding no feasible point, or it does not, in which case a sub- or supergradient at the center supplies the cut.
Finally, $\mathcal A(\mathcal S):=\{\alpha\in\mathbb R^{\mathcal S}_+\mid\sum_{\sigma\in\mathcal S}\alpha_\sigma=1\}$ denotes the simplex of mixtures over a finite set $\mathcal S$ of policies.

Algorithm \ref{algo:CPFS} states the CPFS procedure. Phase 1 localizes an optimal dual solution within a sequence of shrinking ellipsoids, and Phase 2 then recovers a corresponding primal solution. Its last line is a concave maximization over a simplex; the proof of Theorem \ref{thm:algorithm} carries it out with the recursion of Definition \ref{def:cut} and counts the cost. 
Geometrically, CPFS approximates $\mathcal U$ by generating only the vertices that are active near the optimum, and Phase 2 recovers the weights over the generated policies alone. From the resulting utility point and the policy mixture, we construct a policy attaining it, following the method in \ref{app:proof_prop_achievable}.

\begin{algorithm}[!tbp]
\small
\caption{Constrained Pareto Frontier Search (CPFS)}\label{algo:CPFS}
\begin{algorithmic}[1]
\State \textbf{Input:} weights $w$, constraints $(M,c)$, Slater margin $s_0$, target accuracy $\epsilon>0$.
\State \textbf{Derived constants:} $R=\|w\|_1\bar v/s_0$, $\mathcal G=[0,R+1]^L$, $\gamma^\circ=\tfrac{R+1}{2}\mathbf 1$, $C_w=\|w\|_1+(R+1)LB_M$, $C_g=2B_M\bar v$, and
\begin{align*}
&\delta_{\mathrm{orc}}=\frac{\epsilon}{48\,C_w},
\qquad
\delta_{\mathrm{rec}}=\frac{\epsilon}{16\,C_w},
\qquad
\kappa=\min\left\{\frac{1}{2}, \frac{\epsilon}{16\,C_g L(R+1)}\right\},
\\
&\bar t=\max\Bigl\{1,\ \bigl\lceil 2L^2\ln(\sqrt L/\kappa)\bigr\rceil+1\Bigr\} .
\end{align*}
\State \textbf{Initialize:} $\gamma_0=\gamma^\circ$ and $D_0=\tfrac{1}{4}(R+1)^2L\,I$, so that $\mathcal G\subseteq\mathcal B_0$, writing $\mathcal B_t:=\mathcal B(\gamma_t,D_t)$; $\mathcal S\leftarrow\emptyset$.
\Statex \textbf{Phase 1: dual minimization.}
\For{$t=0,1,\dots,\bar t-1$}
    \If{$\gamma_t\notin\mathcal G$} \Comment{feasibility cut}
        \State pick $\ell$ with $(\gamma_t)_\ell<0$ or $(\gamma_t)_\ell>R+1$, and set $z_t=-e_\ell$ or $z_t=e_\ell$ accordingly.
    \Else \Comment{optimality cut}
        \State call the dual oracle (Definition \ref{def:oracle}) at $\gamma_t$ with tolerance $\delta_{\mathrm{orc}}$, and add $\sigma_{\gamma_t}$ to $\mathcal S$.
        \State \textbf{if} $\widehat\xi(\gamma_t)=0$ \textbf{then} leave the loop.
        \State $z_t=\widehat\xi(\gamma_t)$.
    \EndIf
    \State $(\gamma_{t+1},D_{t+1})\leftarrow\mathrm{cut}_L(\gamma_t,D_t,z_t)$.
\EndFor
\Statex \textbf{Phase 2: primal recovery.}
\State Re-evaluate every $\sigma\in\mathcal S$ by Lemma \ref{lmm:index_eval} at tolerance $\delta_{\mathrm{rec}}$, obtaining $\widehat u(\sigma)$, and let
\begin{align*}
\widehat\Phi(\alpha)
=
\sum_{\sigma\in\mathcal S}\alpha_\sigma\bigl(w^\top \widehat u(\sigma)\bigr)-(R+1)\sum_{\ell=1}^L
\max\Bigl\{0,\ c_\ell-\sum_{\sigma\in\mathcal S}\alpha_\sigma\bigl(M\widehat u(\sigma)\bigr)_\ell\Bigr\},
\end{align*}
\Statex a concave piecewise-linear function on $\mathcal A(\mathcal S)$.
\State If $|\mathcal S|=1$, use the unique mixture. Otherwise, let $n=|\mathcal S|-1,$

$$
C_\Phi=2C_w(\bar v+\delta_{\mathrm{rec}}),\qquad
\kappa_{\mathrm{rec}}=\min\left\{\frac12,\frac{\epsilon}{32C_\Phi}\right\},\text{ and }
\bar t_{\mathrm{rec}}
=
\left\lceil
2n^2\left(\ln\frac{4}{\kappa_{\mathrm{rec}}}+\ln|\mathcal S|\right)
\right\rceil+1.
$$

\State If $|\mathcal S|>1$, compute $\alpha\in\mathcal A(\mathcal S)$ with
$\widehat\Phi(\alpha)\ge\max_{\mathcal A(\mathcal S)}\widehat\Phi-\epsilon/16$
by the central-cut ellipsoid method of Definition~\ref{def:cut} for $\bar t_{\mathrm{rec}}$ iterations, starting from the uniform mixture and an enclosing radius-$2$ ellipsoid; at each iteration, use a violated simplex constraint for a feasibility cut, and $-\zeta$ for a supergradient $\zeta$ of $\widehat\Phi$ otherwise. Return the best feasible mixture visited.

\State \textbf{Output:} mixture weights $(\alpha_\sigma)_{\sigma\in\mathcal S}$, the policy set $\mathcal{S}$, and the estimated utility vector
$
\widehat u^*=\sum_{\sigma\in\mathcal S}\alpha_\sigma \widehat u(\sigma).
$

\end{algorithmic}
\end{algorithm}

\begin{theorem}\label{thm:algorithm}
Let Assumptions \ref{asmp:same_d} and \ref{asmp:slater} hold.
Run Algorithm \ref{algo:CPFS} with target accuracy $\epsilon>0$.
Then the algorithm returns a set $\mathcal S$ of index policies and mixture weights $\alpha\in\mathcal A(\mathcal S)$ whose realized utility vector $u^*=\sum_{\sigma\in\mathcal S}\alpha_\sigma u(\sigma)$ satisfies
\begin{enumerate}
    \item $w^\top u^* \ge \mathrm{OPT}-\epsilon$, where
    $
    \mathrm{OPT}=\max_{u\in\mathcal U}\{w^\top u \mid Mu\ge c\};
    $
    \item
    $
    \sum_{\ell=1}^L \max\{0,\, c_\ell-(Mu^*)_\ell\}\le \epsilon.
    $
\end{enumerate}
The number of oracle calls is at most
$
\bar t=O\bigl(L^2\bigl(\log L+\log C_g+\log(R+1)+\log(1/\epsilon)\bigr)\bigr),
$
and, under the evaluation routine of Lemma \ref{lmm:index_eval}, the total number of arithmetic operations is polynomial in
$K$, $L$, $\bar\lambda/d$, and the logarithms of $\|w\|_1$, $B_M$, $\mu$, $\bar v$, $1/\epsilon$, $1/s_0$, and $1/\underline\lambda$.
\end{theorem}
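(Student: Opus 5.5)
The argument has two layers. The first is a Phase 1 guarantee: some visited center has near-optimal dual value, certified by the approximate subgradient inequality of Lemma \ref{lmm:oracle_acc}(ii). The second is a Phase 2 argument showing that the penalized objective over mixtures of the visited policies nearly attains OPT; the exact-penalty choice $R+1$ then converts this into near-optimality plus small total violation. Throughout, OPT equals $\min_{\gamma\ge0} g(\gamma)=g(\gamma^*)$ with $\gamma^*\in\mathcal G$ by Lemma \ref{lmm:dual_radius}.

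\textbf{Phase 1.} Let $\gamma^*$ be a dual optimum. Consider the shrunken set $\mathcal G_\kappa:=\gamma^*+\kappa(\mathcal G-\gamma^*)$, which lies in $\mathcal G$. Its volume is $\kappa^L(R+1)^L$, whereas $\mathrm{vol}(\mathcal B_0)\le(\sqrt L(R+1)/2)^L\cdot\mathrm{vol}(\text{unit ball})$. The volume contraction $e^{-1/(2L)}$ per cut, with $\bar t\ge 2L^2\ln(\sqrt L/\kappa)+1$, makes $\mathrm{vol}(\mathcal B_{\bar t})<\mathrm{vol}(\mathcal G_\kappa)$. Hence some point $\gamma'\in\mathcal G_\kappa$ was cut off at some step $t$.

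\begin{itemize}
\item Feasibility cuts never remove points of $\mathcal G$, so the cut at step $t$ was an optimality cut at a center $\gamma_t\in\mathcal G$ with $\widehat\xi(\gamma_t)^\top(\gamma'-\gamma_t)>0$.
\item Lemma \ref{lmm:oracle_acc}(ii) then gives $g(\gamma')\ge\widehat g(\gamma_t)-2C_w\delta_{\mathrm{orc}}$, hence $g(\gamma_t)\le g(\gamma')+3C_w\delta_{\mathrm{orc}}$.
\item Write $\gamma'=\gamma^*+\kappa(\eta-\gamma^*)$ with $\eta\in\mathcal G$. Convexity and the Lipschitz bound of Lemma \ref{lmm:dual_props} give $g(\gamma')\le\mathrm{OPT}+\kappa C_g L(R+1)\le\mathrm{OPT}+\epsilon/16$.
\end{itemize}

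If the loop exits early because $\widehat\xi(\gamma_t)=0$, part (ii) directly gives $g(\gamma_t)\le\mathrm{OPT}+3C_w\delta_{\mathrm{orc}}$. Either way we have a visited $\gamma_t$ with $g(\gamma_t)\le\mathrm{OPT}+\epsilon/8$.

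\textbf{Phase 2 (main obstacle).} We must show that $\max_{\mathcal A(\mathcal S)}\Phi\ge\mathrm{OPT}-O(\epsilon)$, where $\Phi$ is the exact-utility version of $\widehat\Phi$. I would use LP duality restricted to $\mathcal S$. Consider the restricted dual
$$g_{\mathcal S}(\gamma)=\max_{\sigma\in\mathcal S}(w+M^\top\gamma)^\top u(\sigma)-\gamma^\top c,$$
minimized over $[0,R+1]^L$. By minimax over the simplex times the box, this minimum equals $\max_{\alpha}\Phi(\alpha)$, because maximizing over $\gamma_\ell\in[0,R+1]$ in $\gamma_\ell(c_\ell-(Mu_\alpha)_\ell)$ reproduces the penalty term.

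The delicate step is to show that $\min g_{\mathcal S}\ge \mathrm{OPT}-O(\epsilon)$. Here I would rerun the Phase 1 ellipsoid argument with $g_{\mathcal S}$ in place of $g$. The cuts generated were valid approximate subgradient cuts for $g_{\mathcal S}$ as well, because at each visited $\gamma_t$ we have $g_{\mathcal S}(\gamma_t)=g(\gamma_t)$ and $\xi(\gamma_t)$ is also a subgradient of $g_{\mathcal S}$. The volume argument applied at a minimizer $\gamma_{\mathcal S}^*\in\mathcal G$ of $g_{\mathcal S}$ then yields a visited $\gamma_t$ with $g(\gamma_t)=g_{\mathcal S}(\gamma_t)\le \min g_{\mathcal S}+\epsilon/8$. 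The Lipschitz constant of $g_{\mathcal S}$ is also bounded by $C_g$, so the same $\kappa$ works. Combined with $g\ge\mathrm{OPT}$, this gives $\max\Phi=\min g_{\mathcal S}\ge\mathrm{OPT}-\epsilon/8$. This reuse of one cut sequence for two functions is the step I expect needs the most care.

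Next come the accuracy adjustments:
\begin{itemize}
\item $|\widehat\Phi-\Phi|\le C_w\delta_{\mathrm{rec}}\le\epsilon/16$ uniformly.
\item The Phase 2 ellipsoid is a standard bound for a concave function that is $C_\Phi$-Lipschitz on an $n$-simplex of inradius at least $1/|\mathcal S|$, giving error $\epsilon/16$.
\end{itemize}
Together these give $\Phi(\alpha)\ge\mathrm{OPT}-\epsilon/4$.

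\textbf{Exact penalty and conclusion.} Let $v=\sum_\ell\max\{0,c_\ell-(Mu^*)_\ell\}$. Consider the perturbed problem with right-hand side $c-\mathrm{viol}$, where $\mathrm{viol}_\ell=\max\{0,c_\ell-(Mu^*)_\ell\}$. Its value is at most $\mathrm{OPT}+\gamma^{*\top}\mathrm{viol}\le\mathrm{OPT}+Rv$, and $u^*$ is feasible for it, so $w^\top u^*\le\mathrm{OPT}+Rv$. Substituting into $\Phi(\alpha)=w^\top u^*-(R+1)v\ge\mathrm{OPT}-\epsilon/4$ gives $v\le\epsilon/4$. Dropping the nonnegative penalty also gives $w^\top u^*\ge\mathrm{OPT}-\epsilon/4$.

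\textbf{Complexity.} We count $\bar t$ oracle calls, each costing $\mathrm{poly}\cdot\log(1/\delta)$ by Lemma \ref{lmm:index_eval}, and $|\mathcal S|\le\bar t$ re-evaluations. Phase 2 uses $O(\bar t^2\log(\cdot))$ iterations of $O(\bar t^2+\bar tL)$ arithmetic each. All of these are polynomial in the listed quantities.
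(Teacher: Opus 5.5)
Your proposal is correct and follows the paper's proof. You run one volume argument for both $g$ and the restricted dual $g_{\mathcal S}$, which is the paper's ``consistent function'' device, and you use the minimax identity to turn $\min_{\mathcal G} g_{\mathcal S}$ into the maximum of the $(R+1)$-penalized objective over mixtures. Your weak-duality bound $w^\top u^*\le \mathrm{OPT}+Rv$ is the same inequality $\Phi_R(u^*)\le\mathrm{OPT}$ that the paper obtains from that identity at $\varphi=R$.

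One slip is bookkeeping only. The estimate $|\widehat\Phi-\Phi|\le\epsilon/16$ is used twice, at the returned mixture and at the maximizer, so you get $\Phi(\alpha)\ge\mathrm{OPT}-5\epsilon/16$ rather than $\mathrm{OPT}-\epsilon/4$. Both claims of the theorem still follow.
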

The proof is in Appendix~\ref{app:algo}.

\subsection{Implementation of the Optimal Policy}

CPFS outputs an optimal utility vector $u^*$ and a set of vertex policies $\mathcal{S}$ with weights $\alpha$ such that $u^* = \sum_{\sigma \in \mathcal{S}} \alpha_\sigma u(\sigma)$.
The participation constraints that $u^*$ satisfies are constraints on long-run average utilities, so any implementation that attains $u^*$ satisfies them in that sense.
Two further properties matter for a system that has to be operated and published: whether the rule in force at any moment is deterministic, and whether the guarantee also holds over the finite horizons on which regions assess the system.
We describe two implementations, a stationary randomized policy and a deterministic time-sharing policy, which trade these properties against each other, and then discuss finite-horizon guarantees.

\subsubsection{Construction of a Stationary Randomized Policy}
A single stationary policy $\sigma^*$ that attains the mixed utility vector cannot be obtained by simply randomizing across the policies in $\mathcal S$ independently at each organ-arrival epoch.
The mixing has to be done through stationary occupancy measures.

As shown in the appendix \ref{app:proof_prop_achievable}, if the realized compatible set at state $q$ is $C\subseteq J(q)$, the correct randomized policy that implements the CPFS outcome is
\begin{equation}
    \sigma_i^*(q,C)
    =
    \frac{\sum_{\sigma \in \mathcal{S}} \alpha_\sigma \pi^{\sigma}(q)\sigma_i(q,C)}
    {\sum_{\sigma \in \mathcal{S}} \alpha_\sigma \pi^{\sigma}(q)},
\end{equation}
for every $i\in C$, with $\sigma_i^*(q,C)=0$ for $i\notin C$.
Here, $\pi^\sigma(q)$ is the stationary probability of state $q$ under policy $\sigma$.

Implementing this policy requires computing the stationary distribution $\pi^{\sigma}$ for each $\sigma \in \mathcal{S}$. The state space $S$ is formally infinite, but abandonment ($d > 0$) makes queue lengths decay exponentially. For any precision $\delta$, there exists a truncation threshold $N_{\max} = O(\log(1/\delta))$ such that the probability mass outside the set $S_{\text{trunc}} = \{q \in S \mid \sum_i q_i \le N_{\max}\}$ is negligible.

The truncated state space has size $\binom{N_{\max} + K}{K}$. Treating $N_{\max}$ and $K$ as input parameters, the linear system (global balance equations) for $\pi^{\sigma}$ can be solved by Gaussian elimination in $O(|S_{\text{trunc}}|^3)$ arithmetic operations. Since $N_{\max}$ grows only logarithmically in the precision, this construction is computationally feasible for systems with a moderate number of classes.

\subsubsection{Practical Implementation via Time-Sharing}
In practice, computing state-dependent mixing probabilities on a large state space can be cumbersome, and a complex stochastic policy may be undesirable. A computationally trivial and deterministic alternative is a \textit{time-sharing policy}.

The system deterministically switches between the active index policies $\sigma \in \mathcal{S}$ over time. Over a sufficiently long horizon $T$, each policy $\sigma$ runs exclusively for a duration $\alpha_\sigma T$, and the system then rotates to the next policy in $\mathcal{S}$.
By ergodicity (Lemma \ref{lmm:ergodic}), the average utility during the block assigned to policy $\sigma$ converges to $u(\sigma)$ as $T\to\infty$ for all $\sigma\in \mathcal{S}$ with $\alpha_\sigma >0$.
The long-run average utility of the time-sharing strategy therefore converges exactly to the target utility vector.

Time-sharing does not induce a single stationary distribution and so falls outside the set of stationary policies $\Sigma$ defined in our model, but it attains the same average performance.
Its advantage is that the rule in force at any moment is a single published priority ranking, and that no stochastic assignment is involved.
The cost is borne over short horizons.
While the block assigned to $\sigma$ is running, the system operates at $u(\sigma)$; a mixture was needed precisely because no single $\sigma\in\mathcal S$ meets every constraint, so $u^r(\sigma)$ lies below the target for at least one region $r$, and that region's shortfall over the block is systematic rather than a fluctuation.
The target is met only across a full rotation.
The stationary policy of the previous subsection has no such phases.
Operated from its stationary regime, it runs in one regime throughout, so each region's expected utility rate equals its target over every horizon, and what a region sees over a finite window is a fluctuation around that target with no systematic sign.

\subsubsection{Participation over Finite Horizons}
Neither implementation delivers more than a steady-state guarantee, and little more is available because of the stochastic nature of the problem.
The one form of redistribution that costs no region anything within a finite horizon is the reallocation of organs that autarky would have discarded outright: reassigning them takes nothing away from the region where they arose, so every region is weakly better off on every sample path.
Any redistribution beyond that hands away an organ the origin region could itself have used, and no policy in the model returns the favor on a schedule.
Whether that region is compensated depends on later organ arrivals and later compatibility draws, and over a horizon fixed in advance those events may simply fail to occur.
The long-run constraint $u^r(\sigma)\ge u^r_{sep}$ holds almost surely because the time average converges to a deterministic limit; once the horizon is fixed, there is no such limit to appeal to, and no pooled policy that redistributes beyond the recovery of discarded organs can guarantee the constraint with probability one.

A rule that tracks realized regional slack and switches the priority ranking deterministically in response would be a natural way to keep short-horizon shortfalls small while still meeting the long-run targets.
Such a rule conditions on accumulated history rather than on the current queue vector alone, so it lies outside the class of stationary policies analyzed here, and we leave its analysis to future work.

\subsection{Numerical Examples}\label{seq:experiments}
\subsubsection{Theoretical Examples}
We test CPFS on three numerical scenarios and benchmark it against a brute-force routine that enumerates all $K!$ index policies and solves the full linear program for the optimal convex combination.
Throughout, we set $d=1.0$, $\rho=0.9$, $|\Theta|=2$ with $v_1=2.0, v_2=1.0$, utilitarian welfare weights ($w_i = \lambda_i$), and target accuracy $\epsilon=10^{-6}$.

\paragraph{Case 1: Two-Region Asymmetric Baseline.}
We apply CPFS to the asymmetric setting of Example~\ref{ex:asymmetric} ($|\mathcal R|=2$, $K=4$) with regional participation constraints $u^r(\sigma)\ge u_{sep}^r$.

\paragraph{Case 2: Multi-Region Scalability.}
We test scalability with $|\mathcal R|=5$ regions ($K=10$, $10!\approx 3.6\times 10^6$ vertices), organ rates $\mu^r = 0.5 + 0.1(r-1)$, uniform patient arrivals $\lambda^r=(1,1)$, and regional participation constraints.

\paragraph{Case 3: Class-Level Participation Constraints.}
This case illustrates the algorithm's flexibility with arbitrary linear constraints by imposing class-level participation constraints instead of regional ones.
We keep the primitives of Case 1 and replace the two regional constraints by the $K=4$ class-level constraints $u_{(r,\theta)}(\sigma)\ge u_{sep}^{(r,\theta)}$, with $u_{sep}^{(r,\theta)}$ the class-level autarky benchmark of Section \ref{subsec:separate_util}. Here $M$ is the identity and $c$ collects these benchmarks.

\paragraph{Results.}
\begin{table}[tbp]
    \centering
    \begin{tabular}{llccccc}
        \toprule
        \textbf{Scenario} & \textbf{Method} & \textbf{Time (s)} & \textbf{Vertices} & \textbf{Iterations} & $|\mathcal S|$ & \textbf{Objective}\\
        \midrule
        \multirow{2}{*}{\textbf{Case 1}} & Brute force & 0.0005 & 24 & -- & -- & 2.2125 \\
                                         & \textbf{CPFS} & 0.0035 & -- & 197 & 2 & 2.2125 \\
        \midrule
        \multirow{2}{*}{\textbf{Case 2}} & Brute force & 874.4862 & 3628800 & -- & -- & 6.2106 \\
                                         & \textbf{CPFS} & 0.0204 & -- & 1247 & 134 & 6.2106 \\
        \midrule
        \multirow{2}{*}{\textbf{Case 3}} & Brute force & 0.0006 & 24 & -- & -- & 2.1155 \\
                                         & \textbf{CPFS} & 0.0107 & -- & 820 & 17 & 2.1155 \\
        \bottomrule
    \end{tabular}
    \caption{Performance Comparison: CPFS vs. brute force}\label{tab:cpfs_comparison}
\end{table}

Table \ref{tab:cpfs_comparison} summarizes performance across the three cases.
We report computation time, the number of index-policy vertices ($K!$), the number of CPFS iterations $\bar t$, the number $|\mathcal S|$ of distinct index policies the oracle generated, and the final objective value.\footnote{All experiments were run on a Mac mini with an Apple M2 processor. The implementation performs the last line of Phase 2 by writing $\max_{\mathcal A(\mathcal S)}\widehat\Phi$ as a linear program using slack variables and solving it exactly with the HiGHS solver, which meets the requirement a fortiori.}
CPFS matches the brute-force objective in every case to machine precision. In Case 2, it runs about 43,000 times faster, demonstrating scalability to combinatorially large policy spaces. The gap between $\bar t$ and $|\mathcal S|$ is the point of the primal-recovery step: the iteration bound is what the volume argument of Step 1 needs, while the linear program of Phase 2 runs over the far smaller set of policies actually generated.

\begin{table}[tbp]
    \centering
    \begin{tabular}{cccccc}
        \toprule
        \textbf{Class $(r,\theta)$} & $\lambda_{(r,\theta)}$ & $v_\theta$ & \textbf{Autarky} & \textbf{Case 1} & \textbf{Case 3} \\
        \midrule
        $(1,1)$ & 1 & 2 & 0.7996 & 0.8967 & 0.7996 \\
        $(1,2)$ & 1 & 1 & 0.2645 & 0.1674 & 0.2645 \\
        $(2,1)$ & 5 & 2 & 0.0788 & 0.2282 & 0.2088 \\
        $(2,2)$ & 1 & 1 & 0.0018 & 0.0075 & 0.0075 \\
        \bottomrule
    \end{tabular}
    \caption{Class-level average utilities in the two-region asymmetric example. Case 1 imposes regional participation constraints, Case 3 class-level ones.}\label{tab:class_level_utilities}
\end{table}

Table~\ref{tab:class_level_utilities} reports the class-level averages behind Cases 1 and 3.
Just as pooling on its own does not guarantee that every region improves, a constraint stated at the regional level does not guarantee that every type inside a region improves: under Case 1, region 1 meets its autarky average exactly while its type-2 class ends up below the autarky level.
However, the framework can incorporate class-level constraints. In Case 3, CPFS satisfied them and again matches the brute-force optimum, at a cost of $4.4$ percent of utilitarian welfare compared to Case 1.

\subsubsection{Calibration to U.S. Kidney-Transplant Data}
\label{subseq:experiments_calibration}

We coarsely calibrate the model to 2023 U.S.\ kidney-transplant data. The aim is not to evaluate the current U.S.\ allocation system quantitatively, but to show, with empirically plausible regional asymmetries, how participation constraints can bind in practice.
We set $|\Theta|=1$ and normalize the transplantation value to one. Regions correspond to the eleven OPTN regions, and the organ is restricted to kidneys from deceased donors.
With a single type, class-level and regional participation constraints coincide, so the distinction drawn in Case 3 does not arise here.
Regional patient-arrival rates $\lambda^r$ are calibrated from annual kidney waitlist additions by OPTN region, organ-arrival rates $\mu^r$ from recovered deceased-donor kidneys by region of procurement \citep{optn_public_data}, and the common abandonment rate $d$ from aggregate waitlist mortality \citep{optn_public_data, schladt2025optn}.
The compatibility parameter $\rho$ is not separately calibrated: the natural identifying moments, such as transplant volumes, discard rates, waiting times, and queue lengths, are policy-dependent and do not pin down $\rho$ in a policy-invariant way, so we treat it as exogenous, set $\rho=0.2$ in the baseline, and check robustness across a range of values in the appendix.
Appendix~\ref{app:calibration_us} reports details.
The exercise therefore introduces empirically grounded heterogeneity through $\lambda^r$ and $\mu^r$ while keeping the attrition and compatibility parameters common across regions.

We compare three policies: \emph{autarky} ($sep$), in which each region uses only its own kidneys; the \emph{single-list benchmark} ($SL$), which corresponds to merging all regional queues into a single unified waiting list so that every patient receives the same transplant probability regardless of region, giving $u^r_{SL}=T_{pool}/\Lambda$ for all $r$, where $\Lambda=\sum_r\lambda^r$ and $T_{pool}$ is the aggregate pooled throughput (which, in the one-type environment, is the same for all nonwasteful policies); and the \emph{CPFS policy} ($C$), obtained by maximizing utilitarian welfare $\sum_r\lambda^r u^r$ subject to $u^r\ge u^r_{sep}$.
An actual first-come-first-served pooled list uses patients' arrival order and is not stationary in our reduced state space, but the benchmark single-list utility vector still lies in the convex hull of stationary static index policies: a region-blind stationary rule that treats compatible waiting patients symmetrically gives every patient the same transplant probability, and the achievable-set characterization implies that the resulting utility vector can be implemented by a mixture of static index policies.
For each policy $\sigma$, Table~\ref{tab:us_policy_comparison} reports the aggregate throughput $T_\sigma=\sum_r\lambda^r u^r_\sigma$, the throughput gain $\Delta T=(T_\sigma-T_{sep})/T_{sep}$, and the smallest participation slack $\min_r s^r_\sigma$ where $s^r_\sigma=u^r_\sigma-u^r_{sep}$. 
Note that since the value of transplantation is normalized to one, throughputs are equivalent to average utilities.

\begin{table}[tbp]
    \centering
    \small
    \begin{tabular*}{\textwidth}{@{\extracolsep{\fill}} lcccc @{}}
        \toprule
        \textbf{Policy}
        & $T_\sigma$
        & $\Delta T$
        & $\min_r s^r_\sigma$
        & \textbf{\#Viol.} \\
        \midrule
        Autarky $(sep)$
        & 82.27
        & 0.0\%
        & 0.000
        & 0 \\
        Single list $(SL)$
        & 84.39
        & 2.58\%
        & $-0.344$
        & 4 \\
        CPFS $(C)$
        & 84.39
        & 2.58\%
        & 0.000
        & 0 \\
        \bottomrule
    \end{tabular*}
    \caption{Aggregate performance in the U.S.\ kidney-transplant calibration ($\rho=0.2$). $T_\sigma$: aggregate throughput. $\Delta T$: gain over autarky. \#Viol.: regions violating participation constraints.}
    \label{tab:us_policy_comparison}
\end{table}

The single-list benchmark attains the pooled throughput and equalizes transplant probabilities across regions, yet it violates four regions' participation constraints: regions whose autarky transplant probability exceeds the pooled average lose from equal treatment.
CPFS preserves the same aggregate throughput while satisfying every constraint; Figure~\ref{fig:us_participation_slack} shows the per-region participation slack.

\begin{figure}[tbp]
    \centering
    \includegraphics[width=0.7\linewidth]{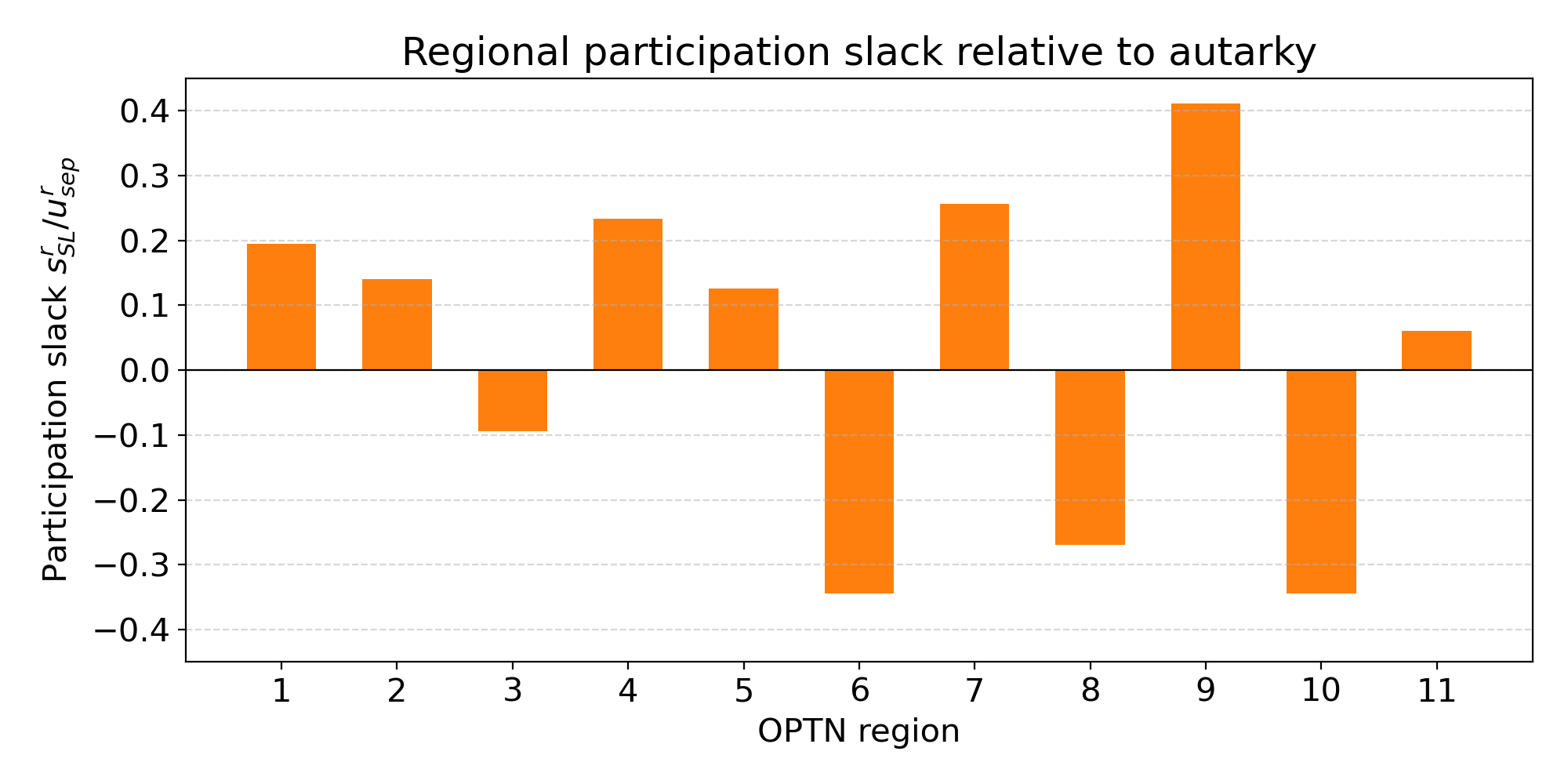}
    \caption{Regional participation slack $u^r_{SL} - u^r_{sep}$ under the single-list benchmark. The horizontal zero line corresponds to the autarky participation constraint.}
    \label{fig:us_participation_slack}
\end{figure}

\section{Conclusion}\label{sec:conclusion}
This paper studies Pareto improvements from moving from fragmented to pooled deceased-donor waiting lists. Motivated by the tension between efficiency gains from broader sharing and regional distributional concerns, we model allocation as a multi-class queue with abandonment and multiple regions.

The analysis yields three main results. First, index policies are optimal for maximizing weighted social welfare in this dynamic setting, which gives a tractable structure for policy design.
Second, although dynamic risk pooling creates a surplus that enlarges the achievable utility set, standard utilitarian policies often fail to satisfy participation constraints in asymmetric environments; we also identify a sufficient symmetry condition under which a utilitarian policy does yield a Pareto improvement.
Third, the Constrained Pareto Frontier Search (CPFS) algorithm efficiently identifies allocation policies that maximize social welfare subject to regional participation constraints.
CPFS allows policymakers to keep redistribution within the gains from thicker markets, balancing global efficiency with regional fairness.

\bibliographystyle{ACM-Reference-Format}
\bibliography{references}

\appendix

\section{Proof of Theorem \ref{thm:index_optimality}}\label{app:proof_indexopt}
\subsection{Proof Outline}
The long-run average cost
$$
C(\sigma)=\sum_i \frac{w_i d_i v_i}{\lambda_i}L_i^\sigma
=\sum_i b_i d_i L_i^\sigma
=E_{\pi^\sigma}\left[\sum_i b_i d_i q_i\right]
$$
equals the long-run average cost starting from any state $q$:
$$
C(\sigma)=\limsup_{T\to\infty}\frac{1}{T}
E^\sigma_{q(0)=q}\int_0^T \sum_i b_i d_i q_i(t)\,dt.
$$
We define the corresponding $\alpha$-discounted cost as
$$
C_\alpha(\sigma,q)=
E^\sigma_{q(0)=q}\int_0^\infty e^{-\alpha t}\sum_i b_i d_i q_i(t)\,dt.
$$
We denote the optimal costs by $C^*$ and $C_\alpha^*(q)$, respectively. We establish average-cost optimality via the vanishing-discount limit of the $\alpha$-discounted value function.

In this proof, without loss of generality, we restrict attention to \emph{non-wasteful} policies: whenever an organ arrives, and the realized compatible set is nonempty, the organ is allocated to some compatible class; it is discarded only when the compatible set is empty. This restriction is without loss because allocating an otherwise perishable organ to a compatible patient weakly reduces the future queue-length process relative to discarding it.

For each state $q\in S$, let $p_q(A)$ denote the probability that the realized compatible set is exactly $A\subseteq J(q)$, as defined in the model section. For a real-valued function $c$ on $S$, define
$$
\psi_c(q,A)=
\begin{cases}
c(q) & \text{if } A=\emptyset,\\
\min_{i\in A} c(q-e_i) & \text{if } A\neq\emptyset,
\end{cases}
$$
and the compatibility-adjusted service operator
$$
\Gamma c(q)=\sum_{A\subseteq J(q)} p_q(A)\,\psi_c(q,A).
$$
Thus, $\Gamma c(q)$ is the expected minimal post-arrival continuation value at an organ-arrival epoch in state $q$.

Define the event rate
$$
\widehat Q(q)=\sum_i \lambda_i+\sum_i d_i q_i+\mu,
$$
which is independent of the policy.
For a real-valued function $c$ on $S$, define the operator $T$ for the average-cost problem by
$$
Tc(q)=
\sum_i b_i d_i q_i
+\sum_i \lambda_i c(q+e_i)
+\sum_i d_i q_i c(q-e_i)
+\mu \Gamma c(q).
$$
The average-cost optimality equation is
$$
g+\widehat Q(q)h(q)=Th(q),
$$
where $h$ is the relative cost function and $g$ is the optimal average cost.

For the discounted problem, the Bellman operator is
$$
T_\alpha c(q)=\frac{1}{\alpha+\widehat Q(q)}\,Tc(q),
$$
with the discounted optimality equation
$$
c_\alpha(q)=T_\alpha c_\alpha(q).
$$

We first collect standard properties of average- and discounted-cost CTMDPs from \citet{guo2006survey}.

\begin{lemma}[\citet{guo2006survey}, Theorems 3.1, 3.2]\label{lmm:opt_eq_discounted}
    Let $C^0_\alpha(q) = 0$ and $C^{k+1}_\alpha(q) = T_\alpha C^k_\alpha(q)$ for $k\ge 0$.
    Then
    \begin{enumerate}
        \item There exist deterministic policies $\sigma_k$ attaining the minimum in $C^{k+1}_\alpha = T_\alpha C^k_\alpha$ for $k\ge 0$.
        \item $\lim_{k\to \infty}C^k_\alpha = C^*_\alpha$, and the function $C^*_\alpha$ is a solution of the discounted optimality equations.
        \item There exist deterministic stationary policies attaining the minimum in the discounted optimality equations.
    \end{enumerate}
\end{lemma}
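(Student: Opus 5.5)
The plan is to verify the hypotheses of the discounted-cost theory for countable-state CTMDPs with possibly unbounded transition rates and unbounded costs, and then read off the three claims. Concretely, we will check the standing assumptions of \citet{guo2006survey} (a Lyapunov/drift condition with a weight function, a growth bound on the cost rate, a bound on the total exit rate by the weight function, and continuity--compactness of the action sets). Theorems 3.1 and 3.2 there then deliver, respectively, the existence of minimizers in each iterate and the convergence of value iteration to a solution of the optimality equation together with an optimal deterministic stationary selector.

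First, we fix the action structure. At state $q$ an action is a non-wasteful selection rule $A\mapsto i\in A$ for every $A\subseteq J(q)$. This set is finite, of size at most $K^{2^K}$, so compactness holds and continuity in the action is automatic; this gives part 1 immediately once $T_\alpha C^k_\alpha$ is finite-valued. Randomized rules are mixtures of these and are covered by the same argument. Integrating out compatibility through $p_q(A)$ gives transition rates that are finite and conservative, with total exit rate bounded by $\widehat Q(q)\le \sum_i\lambda_i+\mu+\bar d\,N(q)$, where $\bar d=\max_i d_i$. We will take the weight function $W(q)=1+N(q)$. The cost rate satisfies $\sum_i b_i d_i q_i\le (\max_i b_i d_i)\,W(q)$, and the exit rate satisfies $\widehat Q(q)\le c_1 W(q)$. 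For the drift we compute, for any action,
\[
\sum_{q'}\nu(q,q')W(q')=\sum_i\lambda_i-\sum_i d_iq_i-\mu\sum_i\bar\sigma_i(q)\le \sum_i\lambda_i ,
\]
so $\mathcal L W\le 0\cdot W+b$ with $b=\sum_i\lambda_i$. This is Guo's drift condition with $c=0<\alpha$.

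Second, we will need a second moment condition ensuring that the iterates stay in the $W$-weighted space, namely $\mathcal L W^2\le c' W^2+b'$. This follows from the same one-step jumps of size one, since $\mathcal L W^2\le 2W\sum_i\lambda_i+\widehat Q(q)\le c''W+b''$. With these bounds, $T_\alpha$ maps $\{c:\ |c|\le \kappa W\}$ into itself for large $\kappa$. Starting from $C^0_\alpha=0$, the cost rate is nonnegative and $T_\alpha$ is monotone, so the iterates $C^k_\alpha$ are nondecreasing and bounded by $\kappa W$. Their limit exists pointwise. Passing to the limit inside $T_\alpha$ is legitimate: the sum over $q'$ involves only finitely many neighbours, and the minimum is over a finite set. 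Hence the limit is a fixed point of $T_\alpha$, and it equals $C^*_\alpha$ by the standard verification (Dynkin-formula) argument under the drift bound. This gives part 2.

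Third, for part 3 we apply the finite-action minimum pointwise in $q$ to $T_\alpha C^*_\alpha$. The resulting deterministic stationary selector attains the minimum in the optimality equation, and the verification theorem shows it is optimal.

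The main obstacle is the unboundedness of the rates ($d_iq_i$ grows linearly). This rules out plain uniformization and the usual contraction argument. The key step is therefore the drift bound on $W$ and on $W^2$, which replaces contraction with monotone convergence in a weighted space. If the quadratic moment condition in Guo's precise form proves awkward, the fallback is to truncate the state space at level $N$, uniformize, and let $N\to\infty$ using the exponential tails implied by abandonment.
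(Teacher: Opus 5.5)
Your proposal is correct and is essentially the paper's proof. Both verify the hypotheses of \citet{guo2006survey} with a weight linear in $N(q)$, the cost bound $\sum_i b_id_iq_i\le(\max_i b_id_i)\,w(q)$, finite action sets of non-wasteful selection rules, and a quadratic auxiliary weight dominating exit rate times weight, and then invoke Theorems 3.1--3.2. One small fix: the drift condition as stated asks for $c\neq 0$, so instead of $c=0$ use $c=-\underline d$ and $b=\sum_i\lambda_i+\underline d$ with your $W=1+N$ (which, unlike the paper's $w=N$, also satisfies $w\ge 1$).

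In your optional self-contained sketch, the claim that $T_\alpha$ maps $\{|c|\le\kappa(1+N)\}$ into itself is false whenever $\alpha<\sum_i\lambda_i$, for every $\kappa$; evaluate at $q_0$ with $c(q)=\kappa(1+N(q))$ to see this. Nothing depends on it, though, since $C^k_\alpha=T_\alpha^{k}\,0$ is bounded by the finite discounted cost of any fixed stationary policy.
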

Thus, the discounted value function can be found by value iteration, and an optimal policy can be derived from it.

\begin{lemma}[\citet{guo2006survey}, in proof of Theorem 4.1]\label{lmm:discount_to_avg_convergence}
    There exist a sequence $(\alpha_m)$ of discount factors, discounted-cost optimal policies $(\sigma^*_{\alpha_m})$, a constant $g^*$, and a function $h^*$ on $S$ such that, for all $q$,
    \begin{align}
        g^*&=\lim_{m\to\infty} \alpha_m C^*_{\alpha_m}(q_0),\\
        h^*(q) &= \lim_{m\to\infty}(C^*_{\alpha_m}(q) - C^*_{\alpha_m}(q_0)),\\
        \sigma^*(q) &= \lim_{m\to\infty}\sigma^*_{\alpha_{m}}(q),
    \end{align}
    where $\sigma^*_{\alpha}$ is a discounted-cost optimal policy for discount factor $\alpha$.
    Moreover, $(g^*, h^*)$ satisfy the average-cost optimality equations and $C^*=g^*$, and $\sigma^*$ is average-cost optimal.
\end{lemma}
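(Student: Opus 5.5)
This statement is imported from \citet{guo2006survey}. My plan is to check that our queueing CTMDP satisfies the hypotheses of their vanishing-discount theorem, and then to rerun the short diagonal argument in our notation, so that the operator $\Gamma$ (which also minimizes over the compatibility realization $A$) is visibly covered.

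\textbf{Step 1: Lyapunov, growth and compactness conditions.} I will use the weight function $W(q)=1+N(q)$. The total jump rate out of $q$ is at most $\widehat Q(q)\le \sum_i\lambda_i+\mu+\bar d\,N(q)$, which is bounded by a constant times $W(q)$. The cost rate $\sum_i b_id_iq_i$ is likewise bounded by a constant times $W(q)$. For any non-wasteful policy the drift of $W$ satisfies
$$
\sum_{q'}\nu^\sigma(q,q')W(q')\le \Lambda-\underline d\,N(q).
$$
This is at most $-cW(q)+b\mathbf 1_{\{N(q)\le n_0\}}$ for suitable constants $c,b,n_0$ that do not depend on $\sigma$. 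The same computation with $W^2$ gives a uniform second-moment bound, which I will need for the verification step. Action sets are finite in every state: there are finitely many sets $A\subseteq J(q)$, and for each a finite choice of class. So the compactness and continuity requirements hold trivially, and Lemma~\ref{lmm:opt_eq_discounted} applies.

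\textbf{Step 2: uniform bounds on the relative discounted values.} Set $h_\alpha(q)=C^*_\alpha(q)-C^*_\alpha(q_0)$. The Lyapunov drift from Step~1 gives $\sup_\alpha \alpha C^*_\alpha(q_0)<\infty$.

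For the lower bound, I use monotonicity. Starting the process from $q_0$ and from $q$ under the same arrival and abandonment clocks, and letting the policy from $q_0$ imitate the optimal policy from $q$ on common classes, shows $C^*_\alpha(q_0)\le C^*_\alpha(q)$. Hence $h_\alpha\ge 0$. Abandonment being monotone in $q$ is what makes this coupling work.

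For the upper bound, I bound $h_\alpha(q)$ by the expected cost accumulated before the process started at $q$ hits $q_0$, under the policy that is optimal from $q$. The drift condition makes this at most $M\,W(q)$, uniformly in $\alpha$ and in the policy. The reason is that the aggregate count $N$ is stochastically dominated by an $M/M/\infty$ queue with rates $\Lambda$ and $\underline d$, whatever allocation decisions are made.

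\textbf{Step 3: diagonal extraction and passage to the limit.} The state space $S$ is countable, each $h_\alpha(q)$ lies in a compact interval, and each action set is finite. A diagonal subsequence $\alpha_m\to0$ therefore gives
\begin{itemize}
\item convergence of $\alpha_mC^*_{\alpha_m}(q_0)$ to some $g^*$;
\item pointwise convergence of $h_{\alpha_m}$ to some $h^*$;
\item pointwise convergence of $\sigma^*_{\alpha_m}(q)$, in the sense that the policies are eventually constant in each state.
\end{itemize}

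Next I rewrite the discounted optimality equation as
$$
\alpha C^*_\alpha(q_0)+\alpha h_\alpha(q)+\widehat Q(q)h_\alpha(q)=Th_\alpha(q),
$$
using that the constant shift cancels because $T$ has total weight $\widehat Q(q)$ on the continuation terms. Only finitely many neighbours $q\pm e_i$ appear, and $\Gamma$ is a finite mixture of minima over finite sets, so it is continuous under pointwise convergence. Letting $m\to\infty$ then yields $g^*+\widehat Q(q)h^*(q)=Th^*(q)$, with $\sigma^*$ attaining the minimum.

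\textbf{Step 4: verification.} I apply Dynkin's formula to $h^*$ under an arbitrary stationary policy. Since $|h^*|\le M W$ and $W$ is integrable under every $\pi^\sigma$ by Step~1, this gives $C(\sigma)\ge g^*$, with equality for $\sigma^*$. Hence $C^*=g^*$ and $\sigma^*$ is average-cost optimal.

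The main obstacle is the uniform upper bound on $h_\alpha$ in Step~2. If the direct hitting-time estimate proves awkward, my first fallback is to invoke the uniform $W$-exponential ergodicity that follows from the drift condition in \citet{guo2006survey}, Assumption~4.1.
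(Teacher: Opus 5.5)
Your proposal is correct, but it is more self-contained than the paper's proof. The paper proves this lemma, jointly with Lemma~\ref{lmm:opt_eq_discounted}, purely by checking the hypotheses of \citet{guo2006survey}:
Assumptions A, B, C with $w=N$ and the auxiliary function $w'(q)=(\lambda+N(q)\max_i d_i+\mu)N(q)$;
A$^*$ and D(1) via Lemma~\ref{lmm:ergodic};
and D(2) via the Foster--Lyapunov condition for $w^2$.
It then imports the vanishing-discount conclusion of their Theorem~4.1.

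Your Step~1 is essentially that check. Steps~2--4 rerun the vanishing-discount argument by hand, using structure specific to this model:
a monotone coupling gives $h_\alpha\ge 0$;
$M/M/\infty$ domination of $N$ gives $h_\alpha\le MW$ uniformly in $\alpha$ and the policy;
and local finiteness of the transitions (finitely many neighbours $q\pm e_i$, with $\Gamma$ a finite mixture of finite minima) lets the discounted equation pass to the limit as an equality.
Your $W^2$ moment bound then does the work that C.3 and D(2) do in the cited theorem.

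What this buys is transparency. It shows why the compatibility-resolved operator $\Gamma$ is covered and exhibits the bounds $0\le h^*\le MW$. Your choice $W=1+N$ also respects the normalization $w\ge 1$, which the paper's $w=N$ violates at $q_0$.

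The cost is two facts that the citation hides, and you should state them explicitly.
\begin{enumerate}
\item Your imitating policy in Step~2 is history-dependent and wasteful. You therefore need $C^*_\alpha$ to be the infimum over such policies as well. The paper leans on the same fact in its non-wasteful reduction and in Lemma~\ref{lmm:nondecr_C}.
\item Step~4 may be meant to cover stationary policies that discard an organ while a compatible patient waits. Comparing the optimality equation with such a policy then requires $\min_{i\in A}h^*(q-e_i)\le h^*(q)$, i.e.\ $h^*(q-e_i)\le h^*(q)$. The same coupling, run from $q-e_i$ against $q$, supplies this. Otherwise, restrict Step~4 to non-wasteful policies, as the paper does.
\end{enumerate}
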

This implies that an average-cost optimal policy can be obtained as the limit of discounted-cost optimal policies.

\subsubsection{Optimality for a Truncated Model}
To use uniformization \citep{lippman1975applying}, we first truncate the state space to
$$
S_N=\{q\in S\mid N(q)\le N\}.
$$
If $N(q)=N$ and a class $i$ patient arrives, we add the arrival and eject one patient: the class with the highest priority (and thus $b,d$) among those in $q+e_i$.
The state becomes $q+e_i-e_{\min J(q+e_i)}$\footnote{This ejection rule is a technical device to facilitate the proof. Although ejecting a high-priority patient seems counterintuitive, the truncation serves only as an intermediate step; optimality is established in the untruncated limit, where boundary mechanics do not affect the result.}.
Abandonments, organ arrivals, compatibility realizations, allocations, and costs remain as in the original system.

This truncation yields a bounded total event rate, enabling uniformization.
Let
$$
Q_N=\sum_i \lambda_i + N\max_i d_i + \mu.
$$
For convenience, define
$$
\tau_i(q)=
\begin{cases}
q+e_i & \text{if } N(q)<N,\\
q+e_i-e_{\min J(q+e_i)} & \text{if } N(q)=N.
\end{cases}
$$

Events occur according to a Poisson process with rate $Q_N$.
Transition probabilities $\nu^\sigma_N(q, q')$ are proportional to the original rates. For all $i$:
\begin{align*}
    \nu^\sigma_N(q,q-e_i) &= \frac{1}{Q_N}(d_i q_i + \mu \bar{\sigma}_i(q)),\\
    \nu^\sigma_N(q,\tau_i(q)) &= \frac{1}{Q_N}\lambda_i \quad \text{ if } \tau_i(q)\neq q,\\
    \nu^\sigma_N(q,q)&=1-\frac{1}{Q_N}\left(\sum_id_iq_i+\sum_i\lambda_i\mathbf{1}_{\tau_i(q)\neq q} + \mu\sum_{i}\bar{\sigma}_i(q)\right)
\end{align*}
All other transition probabilities are 0.

Let $C_{\alpha, N}^*$ be the $\alpha$-discounted cost for the truncated model.
The optimality equation is $C_{\alpha,N}^*=T_{\alpha,N}C_{\alpha,N}^*$, where
$$
(\alpha+Q_N)T_{\alpha,N}c(q)
=
\sum_i b_i d_i q_i
+\sum_i \lambda_i c(\tau_i(q))
+\sum_i d_i q_i c(q-e_i)
+\mu \Gamma c(q)
+\bigl(Q_N-\sum_i \lambda_i-\sum_i d_i q_i-\mu\bigr)c(q).
$$

We next define two properties for a function $c$.

\begin{definition}
    $c$ is said to satisfy (P1) if
    $$
    c(q-e_i)\le c(q-e_j)
    $$
    for all $i<j$ and all $q\in S_{N+1}$ with $i,j\in J(q)$.
\end{definition}

\begin{definition}
    $c$ is said to satisfy (P2) if
    $$
    c(q+e_i)-c(q)\le b_i
    $$
    for all $i$ and all $q$ with $q+e_i\in S_N$.
\end{definition}

We first record the two monotonicity properties of $\Gamma$.

\begin{lemma}\label{lmm:Gamma_monotone}
If $c$ satisfies (P1) and (P2), then:
\begin{enumerate}
    \item for all $i<j$ and all $q\in S_{N+1}$ with $i,j\in J(q)$,
    $$
    \Gamma c(q-e_i)\le \Gamma c(q-e_j);
    $$
    \item for all $i$ and all $q$ with $q+e_i\in S_N$,
    $$
    \Gamma c(q+e_i)-\Gamma c(q)\le b_i.
    $$
\end{enumerate}
\end{lemma}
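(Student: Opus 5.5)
We would prove both parts by a pathwise coupling of the compatibility draws, and compare the two states realization by realization. The first step is to simplify $\psi_c$ using (P1). If $A\neq\emptyset$ and $i=\min A$, then for every $j\in A$ we have $i\le j$, so (P1) applied at state $q$ gives $c(q-e_i)\le c(q-e_j)$. Hence
$$
\psi_c(q,A)=c\bigl(q-e_{\min A}\bigr)\quad\text{for }A\neq\emptyset,\qquad \psi_c(q,\emptyset)=c(q).
$$
So $\Gamma c(q)$ is the expectation of $c$ at the state reached by the index policy. We then write $p_q$ as the law generated by independent Bernoulli($\rho$) compatibility coins, one per waiting patient.

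\textbf{Part 1.} The states $q-e_i$ and $q-e_j$ share the common multiset $q-e_i-e_j$. They differ by one extra patient, of class $j$ in $q-e_i$ and of class $i$ in $q-e_j$. We give the common patients identical coins in both states and give the extra patient the same coin in both. Let $k$ be the highest-priority class with a compatible common patient, if any. We compare the post-allocation states case by case.

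\begin{enumerate}[(a)]
\item \emph{The extra patient is incompatible, or $k<i$.} Both states remove a class-$k$ patient. We need $c(q-e_k-e_i)\le c(q-e_k-e_j)$, which is (P1) at state $q-e_k\in S_{N+1}$. Here $i,j\in J(q-e_k)$ still holds: if $k\in\{i,j\}$, the common multiset already contains that class, so its count in $q$ is at least $2$.
\item \emph{Nobody is compatible.} We need $c(q-e_i)\le c(q-e_j)$, which is (P1) directly.
\item \emph{The extra patient is compatible and $i\le k$ (or $k$ does not exist).} Under $q-e_j$, the extra class-$i$ patient is served, giving $q-e_i-e_j$. Under $q-e_i$, the server picks either the extra class-$j$ patient, giving the same state $q-e_i-e_j$, or a class $k<j$. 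In the latter case we need $c(q-e_i-e_k)\le c(q-e_i-e_j)$, which is (P1) at $q-e_i$.
\end{enumerate}

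Ties are handled by the fixed ordering. Taking expectations over the coupled coins yields $\Gamma c(q-e_i)\le\Gamma c(q-e_j)$.

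\textbf{Part 2.} We couple $q$ and $q+e_i$ in the same way: common coins for the patients in $q$, plus one coin for the extra class-$i$ patient. Let $k$ be the best compatible common class.

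\begin{enumerate}[(a)]
\item \emph{The extra patient is incompatible, or $k<i$ (or $k=i$, where we may serve a common patient).} The difference is $c(q+e_i-e_k)-c(q-e_k)\le b_i$ by (P2) at $q-e_k$.
\item \emph{Nobody is compatible.} The difference is $c(q+e_i)-c(q)\le b_i$ by (P2).
\item \emph{The extra patient is compatible and strictly best.} The state $q+e_i$ moves to $q$. The state $q$ moves either to $q-e_k$ with $k>i$, or stays at $q$. The difference is then $c(q)-c(q-e_k)\le b_k\le b_i$ by (P2) and the ordering $b_1\ge\dots\ge b_K$, or it is $0\le b_i$ because $b_i\ge 0$.
\end{enumerate}

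Averaging gives $\Gamma c(q+e_i)-\Gamma c(q)\le b_i$.

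The main obstacle is bookkeeping rather than ideas. We must check in every subcase that the intermediate states at which (P1) and (P2) are invoked lie in the required domains ($S_{N+1}$, respectively $q'+e_i\in S_N$). We must also check that the relevant classes are still present ($i,j\in J(\cdot)$), especially when $k$ coincides with $i$ or $j$. We would verify these carefully, case by case, before averaging.
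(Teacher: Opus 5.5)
Your proof is correct and follows essentially the paper's route: the same coupling (identical coins for the common patients, one shared coin for the two extra patients in Part 1, an independent coin for the extra class-$i$ patient in Part 2), a realization-by-realization comparison, and then expectations. The only difference is that you first use (P1) to collapse $\psi_c$ to $c(q-e_{\min A})$ and compare the two index-policy moves exactly. The paper instead takes the minimizer in one state and lets the other state copy that choice, which upper-bounds a minimum, shortens the case analysis, and in Part 2 avoids your appeal to $b_k\le b_i$.
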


Under Assumption \ref{asmp:d_order}, these properties imply that the Bellman operator preserves (P1) and (P2).

\begin{lemma}\label{lmm:conserve}
    If $c$ satisfies (P1) and (P2), then $T_{\alpha,N}c$ also satisfies (P1) and (P2).
\end{lemma}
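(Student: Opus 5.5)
The approach is to split $(\alpha+Q_N)T_{\alpha,N}c$ into its five summands and check (P1) and (P2) summand by summand, since both properties are preserved under sums once every piece is controlled with the right constant. Write $f:=(\alpha+Q_N)T_{\alpha,N}c$. For (P1) I compare $f(q-e_i)$ with $f(q-e_j)$, $i<j$, $i,j\in J(q)$. For (P2) I bound $f(q+e_i)-f(q)$ by $(\alpha+Q_N)b_i$.

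\emph{Cost term.} $\sum_k b_kd_kq_k$ changes by $-b_id_i$ versus $-b_jd_j$. Since $b_i\ge b_j$ and $d_i\ge d_j$ (Assumption~\ref{asmp:d_order}), we get $-b_id_i\le -b_jd_j$, which is the direction (P1) needs. For (P2) this term contributes $b_id_i$.

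\emph{Arrival term.} For each $k$, compare $c(\tau_k(q-e_i))$ with $c(\tau_k(q-e_j))$. When $N(q)-1<N$, these equal $c(q+e_k-e_i)$ and $c(q+e_k-e_j)$, and (P1) applied at state $q+e_k$ gives the inequality. The boundary cases, where $\tau_k$ ejects $\min J$, need separate checking; the ejection rule was designed so that ejecting the top-priority class is consistent with (P1). For (P2) each arrival term changes by at most $b_i$ via (P2) at the shifted state. At the boundary, $\tau_k(q+e_i)$ may equal $\tau_k(q)$ plus or minus a unit, and I would verify this case by case.

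\emph{Abandonment term with the dummy term.} This is the delicate part. I group $\sum_k d_kq_k c(q-e_k)+(Q_N-\sum\lambda-\sum_k d_kq_k-\mu)c(q)$ and rewrite it as $\sum_k d_kq_k[c(q-e_k)-c(q)]+\text{const}\cdot c(q)$. For (P1), I follow Down et al.\ and couple the $q_k-\mathbf 1_{k=i}$ abandonment events at $q-e_i$ with those at $q-e_j$. This leaves an unmatched rate $d_i$ versus $d_j$ on $c(q-e_i-e_j)$ and $c(q-e_i)$ or $c(q-e_j)$, and the leftover uniformization mass covers the difference $d_i-d_j\ge0$ together with (P2). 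Concretely, at $q-e_i$ there is an extra $d_j$-event leading to $q-e_i-e_j$ and dummy mass $d_i$ at $q-e_i$; at $q-e_j$ the roles are swapped. The difference is then controlled by combining (P1) at $q-e_i$ or $q-e_j$, (P2) (each unit increment is at most $b$), and the fact that the cost term absorbs $b_id_i-b_jd_j$. For (P2), an extra class-$i$ patient adds rate $d_i$ of moving to $q$ and removes $d_i$ of dummy mass, so this term contributes $\sum_k d_kq_k[\,\cdot\,]$ differences, each bounded by (P2), plus $d_i[c(q)-c(q+e_i)]\le 0$ contributions. Together with the $b_id_i$ from the cost term and the at-most-$b_i$ contributions from the remaining terms of total weight $Q_N-d_i$, this yields a bound of $Q_N b_i\le(\alpha+Q_N)b_i$. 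The cost term uses up exactly the $d_i$ weight lost from the self-loop, which is why the bookkeeping closes.

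\emph{Service term.} This is immediate from Lemma~\ref{lmm:Gamma_monotone}, parts 1 and 2, multiplied by $\mu$.

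Summing the pieces gives (P1) and (P2) for $f$. Dividing by $\alpha+Q_N>0$ preserves (P1) and turns the (P2) bound $(\alpha+Q_N)b_i$ into $b_i$. I expect the main obstacle to be the abandonment and dummy coupling combined with the boundary ejection cases at $N(q)=N$. There I would enumerate whether $\min J(q+e_k-e_i)$ differs from $\min J(q+e_k-e_j)$, which can happen only when $i$ or $j$ is the minimal class with a single patient, and reduce each case to (P1) at a smaller state.
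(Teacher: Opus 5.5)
Your decomposition is the paper's: cost, arrivals, coupled abandonments plus self-loop, and the service term via Lemma~\ref{lmm:Gamma_monotone}. For (P1), the uncoupled leftover $d_i\bigl(c(q-e_i)-c(q-e_i-e_j)\bigr)-d_j\bigl(c(q-e_j)-c(q-e_i-e_j)\bigr)$ is bounded by $(d_i-d_j)b_j$, using (P1) at $q$, $d_i\ge d_j$, and (P2) for class $j$ at $q-e_i-e_j$. The cost term $-b_id_i+b_jd_j$ absorbs it, leaving $d_i(b_j-b_i)\le 0$. For (P2), the total is $Q_Nb_i$.

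Two of your steps, however, do not go through as written. In the (P2) bookkeeping you use $d_i\bigl[c(q)-c(q+e_i)\bigr]\le 0$. That is monotonicity of $c$, which (P1) and (P2) do not provide, since (P2) only bounds increments from above, and it is not part of the induction hypothesis. It is also not needed. Let $D:=Q_N-\sum_k\lambda_k-\mu-\sum_k d_kq_k$ be the self-loop coefficient at $q$. The extra abandonment term $d_ic(q)$ at $q+e_i$ cancels exactly against $d_i$ of the self-loop mass at $q$, so the abandonment-plus-self-loop difference is
\[
\sum_k d_kq_k\bigl(c(q+e_i-e_k)-c(q-e_k)\bigr)+(D-d_i)\bigl(c(q+e_i)-c(q)\bigr),
\]
with $D-d_i\ge 0$ because $N(q+e_i)\le N$. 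This is what legitimately yields your remaining weight $Q_N-d_i$. If you split off the $d_i$ terms as you describe, the self-loop comparison keeps weight $D$ rather than $D-d_i$. The count then gives $(Q_N+d_i)b_i$, which does not divide down to $b_i$ when $\alpha<d_i$.

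The boundary cases, which you defer, need (P1) rather than (P2). When $N(q+e_i)=N$, we have $\tau_k(q)=q+e_k$ but $\tau_k(q+e_i)=q+e_k+e_i-e_m$ with $m=\min J(q+e_i+e_k)\le i$. This is a swap, not a unit step, and (P2) is unavailable since $q+e_k+e_i\notin S_N$. Applying (P1) at $q+e_i+e_k\in S_{N+1}$ gives $c(q+e_k+e_i-e_m)\le c(q+e_k)$, with equality if $m=i$. So each arrival term is at most $0\le\lambda_kb_i$; this is exactly where ejecting the top-priority class is used.

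For (P1) with $N(q)=N+1$, the ejected classes differ only when $i=\min J(q+e_k)$ and $(q+e_k)_i=1$. This never happens because of $j$, since $\min J(q+e_k)\le i<j$. In that case $\tau_k(q-e_j)=q+e_k-e_i-e_j$ and $\tau_k(q-e_i)=q+e_k-e_i-e_{m'}$ with $m'\le j$, which (P1) at $q+e_k-e_i$, or plain equality when $m'=j$, settles.

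Also in this case, the common part $D$ of the self-loop coefficients ($D+d_i$ at $q-e_i$, $D+d_j$ at $q-e_j$) can be negative; it equals $-d$ when all abandonment rates are $d$. So your coupling should pair only $D+d_j\ge 0$ of self-loop mass. The leftover then becomes $(d_i-d_j)\bigl(c(q-e_i)-c(q-e_i-e_j)\bigr)\le(d_i-d_j)b_j$, and the cost term absorbs it as before. The paper's ``the rest follows as in the case with $N(q)=N$'' passes over this sign point as well.
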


Starting from $C_{\alpha,N}^0=0$, which satisfies (P1) and (P2), value iteration converges to $C_{\alpha,N}^*$, which therefore inherits both properties. Convergence is guaranteed for this finite-state MDP \citep{puterman2014markov}. Property (P1) implies $\alpha$-discounted optimality of the index policy in the truncated model.

\subsubsection{Convergence of the Truncated Model}
We now show that the truncated model converges to the original model.

First, we prove that $C_{\alpha,N}^*(q)$ converges.

\begin{lemma}\label{lmm:nondecr_C}
    $C_{\alpha,N}^*(q)$ is pointwise non-decreasing in $N$.
\end{lemma}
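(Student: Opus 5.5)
The plan is to show $C^{*}_{\alpha,N}\le C^{*}_{\alpha,N+1}$ on $S_N$ through a value-iteration comparison. Run value iteration in both models from $C^0=0$, and prove by induction on $k$ that $C^{k}_{\alpha,N}(q)\le C^{k}_{\alpha,N+1}(q)$ for every $q\in S_N$. Passing to the limit $k\to\infty$, which is justified by convergence of value iteration in finite-state MDPs, gives the claim. To make the comparison clean, I will first rewrite the operator with the common event rate $\widehat Q(q)$ rather than $Q_N$. Fictitious self-loops do not change the discounted fixed point, so $C^*_{\alpha,N}$ is the fixed point of
$$
c\mapsto \frac{1}{\alpha+\widehat Q(q)}\Bigl(\sum_i b_id_iq_i+\sum_i\lambda_i c(\tau^N_i(q))+\sum_i d_iq_i c(q-e_i)+\mu\Gamma c(q)\Bigr),
$$
where $\tau^N_i$ is the truncation map at level $N$. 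I will use this form in both models, or alternatively uniformize both with the larger rate $Q_{N+1}$.

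For the inductive step, fix $q\in S_N$. The cost term, the abandonment terms $c(q-e_i)$, and the service term $\Gamma c(q)$ all involve only states in $S_N$. The maps $\min$ and expectation are monotone, so the induction hypothesis transfers each of these terms directly. The only difference comes from arrivals. If $N(q)<N$, then $\tau^N_i(q)=\tau^{N+1}_i(q)=q+e_i\in S_N$, and the hypothesis applies. If $N(q)=N$, the $N$-model goes to $q+e_i-e_{j}$ with $j=\min J(q+e_i)$, while the $(N+1)$-model goes to $q+e_i\in S_{N+1}\setminus S_N$. It therefore suffices to show
$$
C^k_{\alpha,N}(q+e_i-e_j)\le C^k_{\alpha,N+1}(q+e_i-e_j)\le C^k_{\alpha,N+1}(q+e_i).
$$
The first inequality is the induction hypothesis. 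The second is monotonicity of $C^k_{\alpha,N+1}$ under adding a patient.

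To get that monotonicity, I will carry a second induction statement: $C^k_{\alpha,M}(q+e_j)\ge C^k_{\alpha,M}(q)$ whenever $q+e_j\in S_M$, for each fixed $M$. I will prove it by a term-by-term coupling. The cost is nondecreasing in $q$. Abandonment rates are larger in $q+e_j$, but an extra abandonment of the added patient moves the chain back toward $q$, which is cheaper, so these terms should be rearranged as a convex combination with the self-loop weight. For the service term I intend to use a coupling argument: under the larger state, the extra patient's compatibility draw either makes no difference or offers an additional option, and the resulting post-service state still dominates componentwise or differs by one patient. The main obstacle is the boundary interaction at $N(q)=M$. Ejection removes the top-priority patient, so $\tau^M_i(q)$ and $\tau^M_i(q+e_j)$ are not ordered componentwise. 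I will first try to handle this by comparing the states with total counts that are equal, using (P1) for the ejected class. That is, I use (P1) to argue that removing the higher-priority patient yields the lower value, keeping the inequality in the right direction. If this becomes messy, a fallback is to prove monotonicity only in the sense needed, namely $C^k_{\alpha,N+1}(q+e_i)\ge C^k_{\alpha,N+1}(q+e_i-e_j)$ for $N(q)=N$, which lies strictly inside $S_{N+1}$'s interior relative to arrivals.
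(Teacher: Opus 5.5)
Your reduction of the arrival term at $N(q)=N$ to $C^k_{\alpha,N}(q+e_i-e_j)\le C^k_{\alpha,N+1}(q+e_i)$ is fine. The gap is the intermediate step $C^k_{\alpha,N+1}(q+e_i-e_j)\le C^k_{\alpha,N+1}(q+e_i)$, which is false in general, so your induction cannot close. Since $N(q+e_i)=N+1$, the state $q+e_i$ is a \emph{full} state of the $(N+1)$-model, so your fallback does not move away from the boundary. At a full state the ejection rule removes the top-priority patient. An extra low-priority patient held at capacity therefore makes incoming high-priority, high-cost arrivals bounce, and the state with more patients can be cheaper. Your (P1) step confirms this rather than repairing it: for $m<j$, (P1) gives $c(q+e_j+e_k-e_m)\le c(q+e_k)$, which is the reverse of the inequality you need. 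For a concrete counterexample, take two classes with $d_1=d_2=1$, $N=0$, $q=0$, $i=j=2$. In the capacity-$1$ model, $\tau_1(e_2)=\tau_2(e_2)=e_2$, so $(\alpha+1+\mu\rho)C^*_{\alpha,1}(e_2)=b_2+(1+\mu\rho)C^*_{\alpha,1}(0)$. Hence $C^*_{\alpha,1}(e_2)<C^*_{\alpha,1}(0)$ whenever $\alpha C^*_{\alpha,1}(0)>b_2$. For instance, $\alpha=\mu=\rho=1$, $\lambda_1=100$, $\lambda_2=1$, $b_1=1000$, $b_2=10^{-3}$ give $C^*_{\alpha,1}(0)\approx 721$ and $C^*_{\alpha,1}(e_2)\approx 481$. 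Because value iterates converge to these values, your auxiliary monotonicity statement fails for all large $k$.

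The repair is to avoid comparing two states of the same capacity. Carry instead, for every $k$, every $x\in S_N$ and every class $\ell$, the pair $C^k_{\alpha,N}(x)\le C^k_{\alpha,N+1}(x)$ and $C^k_{\alpha,N}(x)\le C^k_{\alpha,N+1}(x+e_\ell)$, with both models uniformized at the common rate $Q_{N+1}$. This pair closes term by term.
\begin{itemize}
\item Arrivals at $N(x)=N$: let $m=\min J(x+e_k)$. The $N$-model moves to $z=x+e_k-e_m$. The $(N+1)$-model moves from $x$ to $z+e_m$, and from $x+e_\ell$ to $z+e_\ell$ if $\ell>m$ and to $z+e_m$ if $\ell\le m$. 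Every case is a cross comparison covered by the second inequality.
\item Abandonment: the extra abandonment out of $x+e_\ell$ is matched against the $N$-model's self-loop weight, which is larger by exactly $d_\ell$, using the first inequality.
\item Service: give the extra patient an independent compatibility draw. If the $(N+1)$-model serves that patient and lands at $x$, the $N$-model's forced allocation to any $i\in A_x$ lands at $x-e_i$, and $C^k_{\alpha,N}(x-e_i)\le C^k_{\alpha,N+1}((x-e_i)+e_i)$. If $A_x=\emptyset$, use the first inequality.
\end{itemize}
This is the value-iteration form of the paper's proof. The paper couples the two capacities pathwise and lets the capacity-$N$ system mimic the optimal policy of the capacity-$(N+1)$ system, discarding when the latter serves the patient the former lacks. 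It then checks event by event that the larger system always equals the smaller one or holds exactly one extra patient. That invariant survives ejections because, whenever the states differ, the two systems are full at the same moments. This is exactly what breaks when you compare two states of the same capacity.
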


The instantaneous cost is bounded by $N(q)\max_i b_i d_i$.
For any fixed initial state, the discounted expected total queue length is finite even in the untruncated model and in the complete absence of service because abandonment remains active.
Thus, for fixed $\alpha$ and $q$, the discounted cost $C_{\alpha,N}^*(q)$ has a common finite upper bound for all $N$.
Hence, by Lemma \ref{lmm:nondecr_C}, $C_{\alpha,N}^*(q)$ converges pointwise to some limit $C_{\alpha,\infty}^*(q)$ as $N\to\infty$.

This limit satisfies the discounted optimality equation and Property (P1).

\begin{lemma}\label{lmm:convergence}
    The limit $C_{\alpha,\infty}^*$ satisfies the discounted optimality equation and Property (P1) on the untruncated state space.
\end{lemma}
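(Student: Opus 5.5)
We need to show two things about the limit $C^*_{\alpha,\infty}$: that it solves the untruncated discounted optimality equation, and that it inherits (P1). We pass to the limit $N\to\infty$ in the truncated Bellman equation $C^*_{\alpha,N}=T_{\alpha,N}C^*_{\alpha,N}$ at a fixed state $q$.

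\textbf{Step 1: rewrite the truncated equation.} Fix $q$ and take $N>N(q)+1$. Then $\tau_i(q)=q+e_i$, and the truncation does not act at $q$ or at any of its neighbors. The truncated equation reads
$$
(\alpha+Q_N)C^*_{\alpha,N}(q)=\sum_i b_id_iq_i+\sum_i\lambda_iC^*_{\alpha,N}(q+e_i)+\sum_i d_iq_iC^*_{\alpha,N}(q-e_i)+\mu\Gamma C^*_{\alpha,N}(q)+\bigl(Q_N-\widehat Q(q)\bigr)C^*_{\alpha,N}(q).
$$
The uniformization dummy term $(Q_N-\widehat Q(q))C^*_{\alpha,N}(q)$ cancels against the same term on the left. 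This leaves
$$
(\alpha+\widehat Q(q))\,C^*_{\alpha,N}(q)=T C^*_{\alpha,N}(q).
$$
Unlike the uniformized form, this equation contains no $N$-dependent coefficients.

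\textbf{Step 2: pass to the limit.} The right-hand side involves only finitely many values of $C^*_{\alpha,N}$: at $q\pm e_i$ and at $q-e_i$ for $i\in J(q)$. The operator $\Gamma$ is a finite sum over subsets $A\subseteq J(q)$ of minima of finitely many terms. Since minima of finitely many convergent sequences converge to the minimum of the limits, $\Gamma C^*_{\alpha,N}(q)\to\Gamma C^*_{\alpha,\infty}(q)$. Lemma~\ref{lmm:nondecr_C} and the uniform finite bound give pointwise convergence of each term. Letting $N\to\infty$ therefore yields
$$
(\alpha+\widehat Q(q))\,C^*_{\alpha,\infty}(q)=TC^*_{\alpha,\infty}(q),
$$
that is, $C^*_{\alpha,\infty}=T_\alpha C^*_{\alpha,\infty}$. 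This is the untruncated discounted optimality equation.

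\textbf{Step 3: (P1) in the limit.} Take $i<j$ and $q$ with $i,j\in J(q)$. For all sufficiently large $N$ we have $q\in S_{N+1}$, so by Lemma~\ref{lmm:conserve} and value iteration, $C^*_{\alpha,N}(q-e_i)\le C^*_{\alpha,N}(q-e_j)$. Weak inequalities are preserved under pointwise limits, so (P1) holds for $C^*_{\alpha,\infty}$ on all of $S$.

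\textbf{Main obstacle.} The difficult point is making sure the limit solution is the \emph{right} one, i.e., that $C^*_{\alpha,\infty}=C^*_\alpha$ rather than just some solution of the optimality equation. The countable-state equation can have spurious solutions. I would address this through the uniqueness result in Lemma~\ref{lmm:opt_eq_discounted} (Guo's framework), which applies because of the weight function $w(q)=1+N(q)$: the cost grows linearly, and the drift bound holds because abandonment dominates arrivals.

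First, $C^*_{\alpha,\infty}\le$ the no-service bound, so it lies in the weighted-norm space where the discounted optimality equation has a unique solution. Second, as a consistency check, $C^*_{\alpha,N}\le C^*_\alpha$ for every $N$. This follows because the ejection rule only removes patients and the cost is increasing in queue lengths, and it gives $C^*_{\alpha,\infty}\le C^*_\alpha$. Uniqueness within the weighted space then gives equality.
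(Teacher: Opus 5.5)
Your Steps 1--3 are the paper's proof: for $N(q)<N$ the uniformization term $(Q_N-\widehat Q(q))C^*_{\alpha,N}(q)$ cancels, and the resulting equation $(\alpha+\widehat Q(q))C^*_{\alpha,N}(q)=TC^*_{\alpha,N}(q)$ has no $N$-dependent coefficients. Every term then converges pointwise (your remark that $\Gamma$ is a finite sum of finite minima supplies the detail the paper leaves implicit), and (P1) passes to the limit as a weak inequality. Your extra identification $C^*_{\alpha,\infty}=C^*_\alpha$, via the linear no-service bound and uniqueness of the optimality equation among $w$-bounded functions with $w=1+N(q)$, is not part of this lemma's claim; it correctly fills a step the paper uses only implicitly in Lemma~\ref{lmm:discount_optimality}, where it appeals to \citet{guo2006survey}, in whose framework that uniqueness is standard although Lemma~\ref{lmm:opt_eq_discounted} as stated does not list it.
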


From Lemmas \ref{lmm:convergence} and \ref{lmm:opt_eq_discounted}, discounted-cost optimality holds in the untruncated model.

\begin{lemma}\label{lmm:discount_optimality}
    The index policy is $\alpha$-discounted-cost optimal in the untruncated model for every $\alpha>0$.
\end{lemma}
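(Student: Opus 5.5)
The plan is to combine Lemma \ref{lmm:convergence} with the verification part of Lemma \ref{lmm:opt_eq_discounted}. By Lemma \ref{lmm:convergence}, $C_{\alpha,\infty}^*$ solves the discounted optimality equation $c=T_\alpha c$ on $S$ and satisfies (P1). I would then argue that, at every state $q$ and every realized compatible set $A\neq\emptyset$, the minimizer in $\psi_{C^*_{\alpha,\infty}}(q,A)=\min_{i\in A}C^*_{\alpha,\infty}(q-e_i)$ is attained at $i=\min A$. This follows from (P1), since for $i<j$ with $i,j\in A\subseteq J(q)$ we have $C^*_{\alpha,\infty}(q-e_i)\le C^*_{\alpha,\infty}(q-e_j)$. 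Because $T_\alpha$ integrates $\psi$ against $p_q(A)$ separately for each $A$, the index policy attains the minimum in $\Gamma C^*_{\alpha,\infty}(q)$, and hence in $T_\alpha C^*_{\alpha,\infty}(q)$, at every $q$.

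The next step is to identify $C_{\alpha,\infty}^*$ with the true optimal discounted cost $C_\alpha^*$ of the untruncated model. I would show two inequalities.

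\emph{Lower bound.} Since $C^*_{\alpha,N}$ is the optimum of a model whose only modification is the boundary ejection, I would first check that $C^*_{\alpha,N}\le C^*_\alpha$ on $S_N$. For this I would use a coupling: the ejection removes patients, so the truncated queue is dominated pathwise by the untruncated one under the same policy and event sequence. Alternatively, I would run value iteration from $0$ on both models and show $C^k_{\alpha,N}\le C^k_\alpha$ inductively, using monotonicity of the cost in $q$, which follows from (P2)-type increments being nonnegative. Passing $N\to\infty$ then gives $C^*_{\alpha,\infty}\le C^*_\alpha$.

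\emph{Upper bound.} Let $\sigma^I$ be the index policy. Since $C^*_{\alpha,\infty}$ is a fixed point of $T_\alpha$ and $\sigma^I$ is a selector attaining the minimum, iterating the policy-evaluation equation gives $C^*_{\alpha,\infty}=C_\alpha(\sigma^I,\cdot)$. For this I need $C^*_{\alpha,\infty}$ to be bounded by a function of the form $a+b N(q)$, so that the remainder term $E^{\sigma^I}_q[e^{-\alpha\tau_n}C^*_{\alpha,\infty}(q(\tau_n))]$ vanishes. That bound follows from the no-service bound already used for Lemma \ref{lmm:nondecr_C}: $E_q N(q(t))\le N(q)+t\sum_i\lambda_i$. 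Hence $C_\alpha(\sigma^I,q)=C^*_{\alpha,\infty}(q)\le C^*_\alpha(q)\le C_\alpha(\sigma^I,q)$, so $\sigma^I$ is $\alpha$-discounted optimal. Equivalently, I could invoke the uniqueness of the solution of the discounted optimality equation within this weighted class given by \citet{guo2006survey}, under their drift condition with weight $1+N(q)$, together with part 3 of Lemma \ref{lmm:opt_eq_discounted}.

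The main obstacle is this identification step on an unbounded state space with unbounded rates, where uniformization is unavailable. Specifically, it requires showing that $C^*_{\alpha,\infty}$ is the value of the index policy, or equivalently of the untruncated problem, rather than merely some solution of the optimality equation. I would first attempt the direct route: iterate the Dynkin/evaluation identity along jump times with the linear growth bound. I would fall back on citing the weighted-norm uniqueness result if the tail estimate becomes delicate. A secondary point is tie-breaking when $b_i=b_j$. Here (P1) still holds with the chosen ordering, using Assumption \ref{asmp:d_order}, so the fixed tie-breaking rule is covered.
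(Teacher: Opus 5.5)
Your proposal is correct and follows the paper's route. The paper's proof combines Lemma \ref{lmm:convergence} (the limit $C^*_{\alpha,\infty}$ solves the discounted optimality equation and satisfies (P1), so the index policy is a minimizing selector) with the optimality-equation theory of \citet{guo2006survey} summarized in Lemma \ref{lmm:opt_eq_discounted}, and the one thing you add is an explicit identification of $C^*_{\alpha,\infty}$ with $C^*_\alpha$ (by the coupling lower bound plus policy evaluation with a linear-growth remainder, or by weighted-norm uniqueness), a step the paper's one-line proof leaves implicit in the citation.
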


By Lemma \ref{lmm:discount_to_avg_convergence}, average-cost optimality follows, completing the proof.
\qed

\subsection{Proof of Lemmas \ref{lmm:opt_eq_discounted}, \ref{lmm:discount_to_avg_convergence}}
\begin{proof}
We verify the conditions from \citet{guo2006survey} for the lemmas (Assumptions A, A*, B, C, D).

\begin{condition}{(Assumption A in \citet{guo2006survey})}\label{cond:A}
    There exists a sequence $\{S_m, m\ge 1\}$ of subsets of $S$, a nondecreasing function $w\ge 1$ on $S$, constants $b\ge 0$ and $c\neq 0$ such that
    \begin{enumerate}
        \item $S_m\uparrow S$ and $\sup\{-\nu^\sigma(q,q)| q \in S_m\} <\infty$ for each $m\ge 1$ and any policy $\sigma$.
        \item $\inf\{w(q)|q\notin S_m\} \to \infty$ as $m\to\infty$.
        \item $\sum_{q'\in S} \nu^\sigma(q,q')w(q') \le cw(q) + b$ for all $q\in S$ and any policy $\sigma$.
    \end{enumerate}
\end{condition}
As in the proof of Lemma \ref{lmm:ergodic}, let $w(q)=N(q)$, $S_m=\{q\in S\mid N(q)\le m\}$, $\lambda = \sum_i \lambda_i$, and $\underline{d} = \min_i d_i > 0$. Conditions 1 and 2 are immediate. The left-hand side of Condition \ref{cond:A}.3 is the drift $\mathcal{D}N(q)$, bounded by $\lambda - \underline{d}N(q)$. Thus, setting $b=\lambda$ and $c=-\underline{d}$, Condition \ref{cond:A} holds.

\begin{condition}{(Assumption B in \citet{guo2006survey})}\label{cond:B}
    \begin{enumerate}
        \item For every $q\in S$, any policy $\sigma$, and some constant $M> 0$,
        $|\sum_i b_id_iq_i| \le Mw(q)$, where $w$ comes from Condition \ref{cond:A}.
        \item The discount factor $\alpha$ satisfies $\alpha > c$, where the constant $c$ is as in Condition \ref{cond:A}.
    \end{enumerate}
\end{condition}
Setting $M=\max_i b_id_i$ satisfies Condition \ref{cond:B}.1. Since $c<0$, any $\alpha>0$ satisfies Condition \ref{cond:B}.2.

\begin{condition}{(Assumption C in \citet{guo2006survey})}\label{cond:C}
    \begin{enumerate}
        \item The action set is compact for each state.
        \item The temporal cost, transition rates, and drifts are continuous in actions.
        \item There exists a nonnegative function $w'$ on $S$, and constants $c'>0$, $b'\ge 0$, and $M'>0$ such that $-\nu^\sigma(q,q)w(q)\le M' w'(q)$ and $\sum_{q'\in S}\nu^\sigma(q,q')w'(q')\le c' w'(q)+b'$ for all $q\in S$ and any policy $\sigma$.
    \end{enumerate}
\end{condition}
At state $q$, an action specifies, for each realized compatible set $A\subseteq J(q)$, a feasible patient class in $A$.
Hence the action space is finite, which verifies Conditions \ref{cond:C}.1 and \ref{cond:C}.2.
For \ref{cond:C}.3, let $M'=1$ and $$w'(q)=(\lambda+N(q)\max_i d_i+ \mu)N(q).$$ The first inequality holds. For the second inequality, let $N(q)=n$ and $\max_i d_i = \bar{d}$. 
From the definition of $w'$,
\begin{align*}
    \sum_{q'\in S} \nu^\sigma(q,q')w'(q')
    &=\sum_i \lambda_i (w'(q+e_i)-w'(q)) + \sum_id_iq_i (w'(q-e_i)-w'(q)) \\
    &\quad+ \mu\sum_i\bar{\sigma}_i(q)(w'(q-e_i)-w'(q))\\
    &= \sum_i \lambda_i ((\lambda+\bar{d}(n+1)+\mu)(n+1)-(\lambda+\bar{d}n+\mu)n) \\
    &\quad + \sum_id_iq_i((\lambda+\bar{d}(n-1)+\mu)(n-1)-(\lambda+\bar{d}n+\mu)n) \\
    &\quad + \mu\sum_i\bar{\sigma}_i(q)((\lambda+\bar{d}(n-1)+\mu)(n-1)-(\lambda+\bar{d}n+\mu)n) \\
    &\le \lambda (\lambda+\bar{d}(n+1)+\mu)(n+1).
\end{align*}
For $n\ge 1$, 
\begin{align}
    \lambda (\lambda+\bar{d}(n+1)+\mu)(n+1) 
    &\le \lambda\left(\frac{n+1}{n}\right)^2 w'(q) \le 4\lambda w'(q),
\end{align}
and for $n=0$,
$$\lambda (\lambda+\bar{d}(n+1)+\mu)(n+1) = \lambda (\lambda+\bar{d}+\mu).$$
Thus, $c'=4\lambda$ and $b'=\lambda (\lambda+\bar{d}+\mu)$ suffice.

\begin{condition}{(Assumption A* in \citet{guo2006survey})}\label{cond:Astar}
    Conditions \ref{cond:A}.1 and \ref{cond:A}.2 hold. Condition \ref{cond:A}.3 is replaced with: for the Lyapunov function $w$, the Foster-Lyapunov condition is satisfied.
\end{condition}
\begin{condition}{(Assumption D(1) in \citet{guo2006survey})}\label{cond:D1}
    For each policy, the Markov process is irreducible.
\end{condition}
These are shown in the proof of Lemma \ref{lmm:ergodic}.

\begin{condition}{(Sufficient condition for D(2) in Prop 4.2 of \citet{guo2006survey})}\label{cond:D2}
    The function $w^2$ satisfies the Foster-Lyapunov condition.
\end{condition}
The drift of $w^2(q)=N^2(q)$ is
\begin{align*}
    \mathcal{D}N^2(q)&=\sum_i\lambda_i ((N(q)+1)^2-N^2(q)) + \sum_id_iq_i((N(q)-1)^2-N^2(q)) \\
    &\quad + \mu\sum_i\bar{\sigma}_i(q)((N(q)-1)^2-N^2(q))\\
    &\le \lambda (2N(q)+1) - \underline{d}N(q)(2N(q)-1).
\end{align*}
This is quadratic in $N(q)$ with a negative leading coefficient. For sufficiently large $N(q)$, the right-hand side is negative and decreasing in $N(q)$, so Condition \ref{cond:D2} holds.
\end{proof}

\subsection{Proof of Lemma \ref{lmm:Gamma_monotone}}
\begin{proof}
For the first part, fix $i<j$ and $q$ with $i,j\in J(q)$, and write
$
x=q-e_i, y=q-e_j.
$

Couple the compatibility realizations in states $x$ and $y$ as follows.
All patients common to both states receive the same Bernoulli compatibility realizations.
The unique additional class-$j$ patient in $x$ and the unique additional class-$i$ patient in $y$ receive the same Bernoulli realization.
This coupling is valid because compatibility is i.i.d. across patients and homogeneous across classes.

For any realization of these coupled realizations, let $A_x$ and $A_y$ denote the corresponding compatible sets.
If $A_y=\emptyset$, then necessarily $A_x=\emptyset$ as well, and therefore
$$
\psi_c(x,A_x)=c(x)\le c(y)=\psi_c(y,A_y)
$$
by (P1).

Now suppose $A_y\neq\emptyset$.
Let $\ell\in A_y$ attain the minimum in $\psi_c(y,A_y)$.

If the chosen patient in class $\ell$ is common to both $x$ and $y$, then $\ell\in A_x$ as well, and hence
$$
\psi_c(x,A_x)\le c(x-e_\ell)=c(q-e_i-e_\ell).
$$
Since $i<j$, property (P1) gives
$$
c(q-e_i-e_\ell)\le c(q-e_j-e_\ell)=c(y-e_\ell)=\psi_c(y,A_y).
$$
Thus $\psi_c(x,A_x)\le \psi_c(y,A_y)$.

If $\ell=i$ and the chosen compatible patient in class $i$ is the unique extra class-$i$ patient in $y$,
by the coupling, the unique extra class-$j$ patient in $x$ is also compatible.
Therefore
$$
\psi_c(x,A_x)\le c(x-e_j)=c(q-e_i-e_j)=c(y-e_i)=\psi_c(y,A_y).
$$

Hence, realization by realization,
$$
\psi_c(x,A_x)\le \psi_c(y,A_y).
$$
Taking expectations yields
$$
\Gamma c(q-e_i)\le \Gamma c(q-e_j).
$$

For the second part, fix $i$ and $q$ with $q+e_i\in S_N$.
Write
$
x=q,\quad y=q+e_i.
$
Couple compatibility realizations by assigning identical realizations to all patients common to both states, and an independent Bernoulli realization to the unique extra class-$i$ patient in $y$.

For any realization, let $A_x$ and $A_y$ be the compatible sets.
If $A_x\neq\emptyset$, let $\ell\in A_x$ attain the minimum in $\psi_c(x,A_x)$.
Then $\ell\in A_y$ as well, so
$$
\psi_c(y,A_y)\le c(y-e_\ell)=c(q-e_\ell+e_i).
$$
Because $(q-e_\ell)+e_i\in S_N$, property (P2) implies
$$
c(q-e_\ell+e_i)-c(q-e_\ell)\le b_i.
$$
Therefore
$$
\psi_c(y,A_y)\le c(q-e_\ell)+b_i=\psi_c(x,A_x)+b_i.
$$

If $A_x=\emptyset$, then $\psi_c(x,A_x)=c(q)$.
If the extra class-$i$ patient in $y$ is incompatible, then $A_y=\emptyset$ as well and
$$
\psi_c(y,A_y)=c(q+e_i)\le c(q)+b_i=\psi_c(x,A_x)+b_i
$$
by (P2).
If the extra class-$i$ patient is compatible, then $y$ can serve that patient, so
$$
\psi_c(y,A_y)\le c(q)=\psi_c(x,A_x).
$$

Thus, realization by realization,
$$
\psi_c(y,A_y)\le \psi_c(x,A_x)+b_i.
$$
Taking expectations yields
$$
\Gamma c(q+e_i)-\Gamma c(q)\le b_i.
$$
\end{proof}

\subsection{Proof of Lemma \ref{lmm:conserve}}
\begin{proof}
First, we show (P1). Fix $i<j$ and $q\in S_{N+1}$ with $i,j\in J(q)$.

If $N(q)\le N$, then by (P1),
\begin{align*}
    &\quad (\alpha+Q_N)(T_{\alpha,N}c(q-e_i) - T_{\alpha,N}c(q-e_j))\\
    &= -b_id_i+b_jd_j + \sum_k\lambda_k(c(q+e_k-e_i)-c(q+e_k-e_j)) \\
    &\quad + \sum_{k\neq i,j}d_k q_k(c(q-e_k-e_i)-c(q-e_k-e_j))\\
    &\quad + d_i(q_i-1)(c(q-e_i-e_i)-c(q-e_i-e_j)) \\
    &\quad + d_j(q_j-1)(c(q-e_j-e_i)-c(q-e_j-e_j))\\
    &\quad + (Q_N - \sum_k\lambda_k - \sum_{k} d_kq_k-\mu)(c(q-e_i)-c(q-e_j)) \\
    &\quad +\mu (\Gamma c(q-e_i) - \Gamma c(q-e_j)) -d_ic(q-e_i-e_j)+d_jc(q-e_i-e_j)+d_ic(q-e_i)-d_jc(q-e_j)\\
    &\le -b_id_i+b_jd_j+d_i(c(q-e_i)-c(q-e_i-e_j))-d_j(c(q-e_j)-c(q-e_i-e_j))\\
    &\quad +\mu (\Gamma c(q-e_i) - \Gamma c(q-e_j))\\
    &\le -b_i d_i+b_j d_j
    +(d_i-d_j)\bigl(c(q-e_i)-c(q-e_i-e_j)\bigr)
    +\mu\bigl(\Gamma c(q-e_i)-\Gamma c(q-e_j)\bigr).
\end{align*}
By Lemma \ref{lmm:Gamma_monotone}, the last term is non-positive.
By Assumption \ref{asmp:d_order} and (P2),
\begin{align*}
    -b_i d_i+b_j d_j
    +(d_i-d_j)\bigl(c(q-e_i)-c(q-e_i-e_j)\bigr)
    &\le -b_i d_i+b_j d_j+(d_i-d_j)b_j\\
    &= d_i(-b_i+b_j)\le 0.
\end{align*}
Hence
$$
T_{\alpha,N}c(q-e_i)\le T_{\alpha,N}c(q-e_j).
$$

If $N(q)=N+1$, then all terms except patient arrivals are the same. The patient arrival term for class $k$ is:
\[\lambda_k (c(q-e_i+e_k-e_{\min J(q-e_i+e_k)})-c(q-e_j+e_k-e_{\min J(q-e_j+e_k)}))\]
Let $m=\min J(q+e_k)$. Since $i,j\in J(q)\subset J(q+e_k)$, $m\le i< j$.
If $m<i$ or $q_i\ge 2$, then $m = \min J(q-e_i+e_k) = \min J(q-e_j+e_k)$. In this case, by (P1),
\[\lambda_k (c(q+e_k-e_m-e_i)-c(q+e_k-e_m-e_j))\le 0.\]
If $m=i$ and $q_i=1$, then $\min J(q-e_j+e_k) = i$ and the term is:
\[\lambda_k (c(q-e_i+e_k-e_{\min J(q-e_i+e_k)})-c(q-e_j+e_k-e_i)).\]
Since $j\in J(q-e_i+e_k)$, $\min J(q-e_i+e_k) \le j$.
If $\min J(q-e_i+e_k) < j$, the term is non-positive by (P1).
If $\min J(q-e_i+e_k) = j$, the term is $\lambda_k (c(q-e_i+e_k-e_j)-c(q-e_j+e_k-e_i))=0$.
Thus, the arrival term is non-positive for all $k$. The rest follows as in the case with $N(q)=N$.

\vskip\baselineskip
Next, we show (P2). Fix $i$ and $q$ with $q+e_i\in S_N$.

If $N(q+e_i)<N$, then by (P2),
\begin{align*}
    &\quad (\alpha+Q_N)(T_{\alpha,N}c(q+e_i) - T_{\alpha,N}c(q))\\
    &= b_id_i + \sum_k\lambda_k(c(q+e_i+e_k)-c(q+e_k)) \\
    &\quad + \sum_{k}d_k q_k(c(q+e_i-e_k)-c(q-e_k))\\
    &\quad + (Q_N - \sum_k\lambda_k - \sum_{k\neq i} d_k q_k -d_i(q_i+1) -\mu)(c(q+e_i)-c(q)) \\
    &\quad +\mu (\Gamma c(q+e_i) - \Gamma c(q))\\
    &\le b_id_i + \sum_k\lambda_kb_i + \sum_{k}d_k q_k b_i\\
    &\quad + (Q_N - \sum_k\lambda_k - \sum_{k\neq i} d_k q_k -d_i(q_i+1) -\mu)b_i +\mu (\Gamma c(q+e_i) - \Gamma c(q))\\
    &= (Q_N - \mu)b_i +\mu (\Gamma c(q+e_i) - \Gamma c(q)).
\end{align*}
By Lemma \ref{lmm:Gamma_monotone},
$$
(\alpha+Q_N)\bigl(T_{\alpha,N}c(q+e_i)-T_{\alpha,N}c(q)\bigr)\le Q_N b_i.
$$
Hence
$$
T_{\alpha,N}c(q+e_i)-T_{\alpha,N}c(q)\le \frac{Q_N}{\alpha+Q_N}b_i\le b_i.
$$

If $N(q+e_i)=N$, patient arrival term for class $k$ is:
\[\lambda_k (c(q+e_i+e_k-e_{\min J(q+e_i+e_k)})-c(q+e_k)).\]
Let $m=\min J(q+e_i+e_k)$. Since $i\in J(q+e_i+e_k)$, $m\le i$.
If $m<i$, by (P1), $\lambda_k (c(q+e_i+e_k-e_m)-c(q+e_k))\le 0$.
If $m=i$, $\lambda_k (c(q+e_k)-c(q+e_k))=0$.
The arrival term is non-positive, and thus no greater than $\sum_k\lambda_kb_i$. The rest follows as in the case with $N(q)=N$.

Therefore $T_{\alpha,N}c$ satisfies both (P1) and (P2).
\end{proof}

\subsection{Proof of Lemma \ref{lmm:nondecr_C}}
\begin{proof}
Fix $q$.
Compare the systems with capacities $N$ and $N+1$ started from the same initial state $q$.
Couple all patient-arrival clocks and abandonment clocks across the two systems.
At each organ-arrival epoch, couple compatibility realizations patient-by-patient on all patients common to both systems.
If one system contains one additional patient not present in the other, assign an independent compatibility realization to that extra patient.

Let System $1$ (capacity $N+1$) follow its optimal policy $\sigma_{N+1}$.
Let System $2$ (capacity $N$) mimic System $1$ as follows: whenever System $1$ allocates an organ to a patient that is also present in System $2$, System $2$ allocates to that same patient; if System $1$ allocates to a patient that is absent from System $2$, then System $2$ discards the organ.

Initially, the two systems are identical.
The first discrepancy can only arise when a patient arrives while System $2$ is at capacity $N$.
At that moment, System $2$ ejects one of the highest-priority patients, while System $1$ keeps that patient.
Hence, System $1$ differs from System $2$ by exactly one additional patient.

From that point onward, after every event, the state of System $1$ is either identical to the state of System $2$ or differs from it by exactly one additional patient:
\begin{enumerate}
    \item Under a patient arrival, both systems receive the same arrival.
    \begin{enumerate}
        \item If System $2$ is not full, System 1 is also not full, and the difference remains at most one patient.
        \item If System $2$ is full, it ejects one patient. Let it be of class $j$. If System $1$ is not full, System $1$ now differs from System $2$ by an additional class-$j$ patient. If System $1$ has the unique additional class-$i$ patient and $i>j$, System $1$ also ejects $j$ (as $j$ also exists in System $1$), and states remain identical except for the extra class $i$ patient. If $i\le j$, System $1$ ejects $i$, and states are now identical except System $1$ has an additional $j$.
    \end{enumerate}
    \item Under an abandonment of a common patient, both systems lose that patient simultaneously. Under an abandonment of the extra patient, the systems become identical again.
    \item Under an organ arrival, if System $1$ allocates to a common patient, then System $2$ can mimic that allocation and the difference remains unchanged; if System $1$ allocates to the extra patient, then System $2$ discards and the systems become identical; if no compatible patient exists in System $1$, then neither system allocates.
\end{enumerate}

Therefore, along every coupled sample path, System $1$ is always in the same state as System $2$ or in a state with one extra patient of some class.
Since the instantaneous cost $\sum_i b_i d_i q_i$ is increasing in queue lengths, System $1$ incurs weakly greater cost rate at all times.
Hence the total discounted cost of System $2$ under the mimicking policy is weakly smaller than the total discounted cost of System $1$ under $\sigma_{N+1}$.
Since the optimal policy for capacity $N$ does at least as well as the mimicking policy,
$$
C_{\alpha,N}^*(q)\le C_{\alpha,N+1}^*(q).
$$
\end{proof}

\subsection{Proof of Lemma \ref{lmm:convergence}}

\begin{proof}
Fix $q$.
For all sufficiently large $N$ such that $N(q)<N$, the optimality equation for $C_{\alpha,N}^*$ can be written as
$$
(\alpha+\widehat Q(q))C_{\alpha,N}^*(q)
=
\sum_i b_i d_i q_i
+\sum_i \lambda_i C_{\alpha,N}^*(q+e_i)
+\sum_i d_i q_i C_{\alpha,N}^*(q-e_i)
+\mu \Gamma C_{\alpha,N}^*(q).
$$
Taking the limit as $N\to\infty$, the left-hand side converges to
$$
(\alpha+\widehat Q(q))C_{\alpha,\infty}^*(q),
$$
and each term on the right-hand side converges pointwise.
Hence
$$
(\alpha+\widehat Q(q))C_{\alpha,\infty}^*(q)=T C_{\alpha,\infty}^*(q),
$$
so $C_{\alpha,\infty}^*$ satisfies the discounted optimality equation.

Since $C_{\alpha,N}^*$ satisfies (P1) for every $N$, we have
$$
C_{\alpha,N}^*(q-e_i)\le C_{\alpha,N}^*(q-e_j)
$$
for all $i<j$ and all $q$ with $i,j\in J(q)$.
Taking limits gives (P1) for $C_{\alpha,\infty}^*$.
\end{proof}

\section{Other Omitted Proofs}\label{app:proofs}
\subsection{Proof of Lemma \ref{lmm:ergodic}}\label{app:proof_ergodic}
\begin{proof}
Fix any stationary policy $\sigma$.
We show that the resulting queue-length process is ergodic.
A countable-state Markov process is ergodic if it is irreducible and positive recurrent \citep{bremaud2013markov}.

Irreducibility holds because any state can reach the empty state $q_0$ with positive probability through patient abandonments, and from $q_0$ any finite state can be reached with positive probability through patient arrivals.

A process is positive recurrent if for every state $s\in S$, starting from $s$, the expected time until the process returns to state $s$ is finite.
To prove positive recurrence, we apply the Foster-Lyapunov criterion~\citep{foster1953stochastic} with the Lyapunov function
$$
V(q)=N(q)=\sum_i q_i.
$$
With a Lyapunov function $V(q)$, a sufficient condition for the Foster-Lyapunov criterion is that there exist $M,b,c\in \mathbb{R}_+$ such that
the set $\{q\in S|V(q)\le M\}$ is finite,
$V(q)\le M\implies \mathcal{D}V(q)< b,$
and
$V(q)>M\implies \mathcal{D}V(q)\le -c,$
where the drift $\mathcal{D}V(q)$ is defined as $\mathcal{D}V(q) = \sum_{q' \neq q} \nu^{\sigma}(q, q') (V(q') - V(q))$ \citep{bremaud2013markov}.

Let
$$
\eta^\sigma(q)=\sum_i \bar{\sigma}_i(q)\in[0,1]
$$
denote the ex-ante probability that an arriving organ is allocated in state $q$ under policy $\sigma$.
The drift of $V$ is
\begin{align}
    \mathcal{D}V(q)
    &= \sum_{q'\neq q}\nu^\sigma(q,q')\bigl(N(q')-N(q)\bigr)\\
    &= \sum_i \lambda_i(+1)+\sum_i d_i q_i(-1)+\mu\sum_i \bar{\sigma}_i(q)(-1)\\
    &= \left(\sum_i \lambda_i\right)-\left(\sum_i d_i q_i\right)-\mu \eta^\sigma(q).
\end{align}
Let
$$
\lambda=\sum_i \lambda_i,
\qquad
\underline d=\min_i d_i>0.
$$
Since $\eta^\sigma(q)\ge 0$, we obtain the upper bound
$$
\mathcal{D}V(q)\le \lambda-\underline d\,N(q).
$$
Therefore, whenever
$$
N(q)>\frac{\lambda+1}{\underline d},
$$
we have
$$
\mathcal{D}V(q)<-1.
$$
On the finite set
$$
\left\{q\in S: N(q)\le \frac{\lambda+1}{\underline d}\right\},
$$
the drift is bounded above.
Hence the Foster-Lyapunov criterion implies positive recurrence.

Thus the process is irreducible and positive recurrent, and therefore ergodic.
\end{proof}

\subsection{Proof of Proposition \ref{prop:achievable_set}}\label{app:proof_prop_achievable}
\begin{proof}
We first establish convexity.
\begin{lemma}\label{lmm:convexity}
    $\mathcal U$ is a convex set.
\end{lemma}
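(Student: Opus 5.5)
We prove convexity of $\mathcal U$ by passing from policies to stationary \emph{state-action frequencies} (occupancy measures), where utilities are linear. Then we mix two occupancy measures and read off a stationary policy from the mixture. This is the same construction the paper later uses to implement the CPFS mixture.

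Fix $\sigma^1,\sigma^2\in\Sigma$ and $\beta\in[0,1]$, with stationary distributions $\pi^1,\pi^2$ from Lemma \ref{lmm:ergodic}. Set $\pi(q):=\beta\pi^1(q)+(1-\beta)\pi^2(q)$. For each state $q$ with $\pi(q)>0$ and each compatible set $A\subseteq J(q)$, define
$$
\sigma_i(q,A)=\frac{\beta\pi^1(q)\sigma^1_i(q,A)+(1-\beta)\pi^2(q)\sigma^2_i(q,A)}{\pi(q)} .
$$
By irreducibility every state has positive mass, so the denominator never vanishes. Each $\sigma(q,\cdot)$ is a convex combination of feasible allocation vectors: it puts zero mass outside $A$ and has total mass at most $1$. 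Hence $\sigma\in\Sigma$. Since $p_q(A)$ does not depend on the policy, averaging over $A$ gives
$$
\pi(q)\bar\sigma_i(q)=\beta\pi^1(q)\bar\sigma^1_i(q)+(1-\beta)\pi^2(q)\bar\sigma^2_i(q).
$$

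Next we check that $\pi$ solves the global balance equations for $\sigma$. The balance equation at $q$ involves $\pi(q')\nu^\sigma(q',q)$ and $\pi(q)Q^\sigma(q)$. These terms are built from the policy-independent rates $\lambda_i$ and $d_iq_i$ and from the products $\pi(q')\mu\bar\sigma_i(q')$, all of which are affine in the pair $(\pi,\pi\bar\sigma)$. The balance equations are therefore linear in $(\pi,\pi\bar\sigma)$. They hold for $(\pi^1,\pi^1\bar\sigma^1)$ and for $(\pi^2,\pi^2\bar\sigma^2)$, so they hold for the mixture.

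$\pi$ is a probability distribution, and by Lemma \ref{lmm:ergodic} the chain under $\sigma$ has a unique stationary distribution. Therefore $\pi^\sigma=\pi$. The utility formula $u_i=v_i\mu E_{\pi^\sigma}[\bar\sigma_i(q)]/\lambda_i$ is linear in $\pi\bar\sigma$, which gives $u(\sigma)=\beta u(\sigma^1)+(1-\beta)u(\sigma^2)$. So $\mathcal U$ is convex.

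The main obstacle is the step that uses balance equations on a countably infinite state space. We need the stationary distribution to be unique among all probability solutions, not just among those that are limits of something, and this is exactly what ergodicity (irreducibility plus positive recurrence) supplies. We also need the rates to be summable: the outflow $\sum_q\pi(q)Q^\sigma(q)$ must be finite so that exchanging sums is legitimate. This holds because $E_{\pi^k}[N(q)]<\infty$ for each $k$, which follows from the Foster–Lyapunov drift bound in the proof of Lemma \ref{lmm:ergodic}.

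Because the paper phrases Proposition \ref{prop:achievable_set} as a statement about dominated mixtures, we also record free disposal; it is not needed for convexity itself. A lower utility for any class can be obtained by discarding organs with some probability. The argument would use the same occupancy-measure linearity together with a continuity argument in the discard probability.
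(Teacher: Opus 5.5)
Your proposal is correct and follows essentially the same route as the paper: both mix the stationary distributions and the state-action frequencies of the two policies, recover a stationary policy by normalizing by $\pi(q)$, and verify global balance by linearity in $(\pi,\pi\bar\sigma)$. Both then invoke uniqueness from Lemma \ref{lmm:ergodic} and read off linearity of $u$. One small difference is that you divide only by $\pi(q)$ rather than by $\pi_p(q)p_q(A)$, which sidesteps the paper's separate handling of compatible sets with $p_q(A)=0$. Your summability caveat is harmless but unnecessary, since each balance equation contains only finitely many terms.
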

\begin{proof}
Fix a pair of policies $\sigma_1,\sigma_2\in\Sigma$ and $p\in[0,1]$.
We construct a stationary policy $\sigma^p\in\Sigma$ that achieves
$$
u(\sigma^p)=p\,u(\sigma_1)+(1-p)\,u(\sigma_2).
$$

For any policy $\sigma$, define the compatibility-resolved occupancy measure
$$
\rho^\sigma(q,A,i)
=
\pi^\sigma(q)\,p_q(A)\,\sigma_i(q,A),
$$
where $q\in S$, $A\subseteq J(q)$, and $i\in A$.
Then, by the definition of throughput,
$$
u_i(\sigma)
=
\frac{v_i\mu}{\lambda_i}
\sum_{q\in S}\sum_{A\subseteq J(q):\, i\in A}\rho^\sigma(q,A,i).
$$

Let $\pi_1,\pi_2$ be the stationary distributions under $\sigma_1,\sigma_2$.
Set
$$
\pi_p=p\pi_1+(1-p)\pi_2,
$$
and
$$
\rho_p(q,A,i)
=
p\,\rho^{\sigma_1}(q,A,i)+(1-p)\,\rho^{\sigma_2}(q,A,i).
$$
For every $q$ and every $A\subseteq J(q)$ with $p_q(A)>0$, define
$$
\sigma_i^p(q,A)
=
\frac{\rho_p(q,A,i)}{\pi_p(q)\,p_q(A)}.
$$
If $p_q(A)=0$, define $\sigma^p(q,A)$ arbitrarily subject to feasibility, since such events never occur.
Because patient arrivals and abandonments make every state reachable with positive stationary probability under every stationary policy, we have $\pi_p(q)>0$ for all $q$.

The policy $\sigma^p$ is feasible because $\sigma_i^p(q,A)=0$ for $i\notin A$, and
$$
\sum_{i\in A}\sigma_i^p(q,A)
=
\frac{\sum_{i\in A}\rho_p(q,A,i)}{\pi_p(q)\,p_q(A)}
\le 1,
$$
since each underlying policy is feasible.

Define the ex-ante allocation probabilities
$$
\bar\sigma_i^\ell(q)
=
\sum_{A\subseteq J(q):\, i\in A} p_q(A)\sigma_i^\ell(q,A),
\qquad
\ell\in\{1,2,p\}.
$$
Then
$$
\pi_p(q)\bar\sigma_i^p(q)
=
\sum_{A\subseteq J(q):\, i\in A}\rho_p(q,A,i)
=
p\,\pi_1(q)\bar\sigma_i^1(q)+(1-p)\,\pi_2(q)\bar\sigma_i^2(q).
$$

We now show that $\pi_p$ is the stationary distribution for $\sigma^p$.
Split transitions into a policy-independent part $\nu_0$ and a policy-dependent part $\nu_1^\sigma$.
The policy-independent part consists of patient arrivals and abandonments:
$$
\nu_0(q,q')
=
\begin{cases}
\lambda_i & \text{if } q'=q+e_i,\\
d q_i & \text{if } q'=q-e_i \text{ for } i\in J(q),\\
-\left(\sum_i \lambda_i+\sum_i d q_i\right) & \text{if } q'=q,\\
0 & \text{otherwise.}
\end{cases}
$$
The policy-dependent part consists of organ allocations:
$$
\nu_1^\sigma(q,q')
=
\begin{cases}
\mu \bar\sigma_i(q) & \text{if } q'=q-e_i \text{ for } i\in J(q),\\
-\mu\sum_i \bar\sigma_i(q) & \text{if } q'=q,\\
0 & \text{otherwise.}
\end{cases}
$$
Hence
$$
\nu^\sigma(q,q')=\nu_0(q,q')+\nu_1^\sigma(q,q').
$$

The total stationary flow into state $q$ under $(\pi_p,\sigma^p)$ is
$$
\sum_{q'\in S}\pi_p(q')\nu^{\sigma^p}(q',q)
=
\sum_{q'\in S}\pi_p(q')\nu_0(q',q)
+
\sum_{q'\in S}\pi_p(q')\nu_1^{\sigma^p}(q',q).
$$
Using the linearity of $\pi_p$ and the identity for $\pi_p(q)\bar\sigma_i^p(q)$,
$$
\sum_{q'\in S}\pi_p(q')\nu^{\sigma^p}(q',q)
=
p\sum_{q'\in S}\pi_1(q')\nu^{\sigma_1}(q',q)
+
(1-p)\sum_{q'\in S}\pi_2(q')\nu^{\sigma_2}(q',q)
=
0.
$$
Thus $\pi_p$ satisfies the global balance equations for $\sigma^p$ and is therefore its stationary distribution.

Finally,
$$
u_i(\sigma^p)
=
\frac{v_i\mu}{\lambda_i}
\sum_{q,A:\, i\in A}\rho_p(q,A,i)
=
p\,u_i(\sigma_1)+(1-p)\,u_i(\sigma_2)
$$
for each class $i$.
Hence $\mathcal U$ is convex.
\end{proof}

Convexity is achieved not by mixing two policies ex ante (which would not generate a single steady-state distribution), but by the standard mixing of occupancy measures.

Let $S_I$ denote the set on the right-hand side of the proposition.
Any utility vector that is componentwise weakly below an achievable vector can be attained by stochastically discarding organs that would otherwise be allocated to the corresponding classes.
This and convexity of $\mathcal U$ imply that $S_I\subseteq \mathcal U$.

Suppose there exists $u^*\in \mathcal U\setminus S_I$.
By the separating hyperplane theorem and the shape of $S_I$, there exists $w\in\mathbb R_+^K$ such that
$$
w\cdot u^*>w\cdot u
\qquad
\text{for all } u\in S_I.
$$
Since $S_I$ contains every utility vector generated by a static index policy, this contradicts Theorem \ref{thm:index_optimality}.
Hence $\mathcal U\subseteq S_I$.
\end{proof}

\subsection{Proof of Lemma \ref{lmm:index_util}}\label{app:proof_index_util}
\begin{proof}
Fix a static priority ordering $1,\dots,K$.
For each $k\in\{1,\dots,K\}$, let
$$
N_k=\sum_{i=1}^k q_i
$$
denote the total number of waiting patients in the top $k$ priority classes.

Under the index policy, if an organ arrives when $N_k=n$, then the organ is allocated to one of the top $k$ classes if and only if at least one of those $n$ patients is compatible with the organ.
Because compatibility is independent across patients and homogeneous across classes, this event occurs with probability
$$
g_\rho(n)=1-(1-\rho)^n.
$$
Therefore, $N_k$ is a one-dimensional birth-death process with birth rate
$$
\nu_k(n,n+1)=\Lambda_k
$$
and death rate
$$
\nu_k(n,n-1)=dn+\mu g_\rho(n)
\qquad \text{for } n\ge 1.
$$

Let $\pi_k(n)$ denote the stationary distribution of $N_k$.
The detailed balance equations are
$$
\Lambda_k \pi_k(n)
=
\bigl(d(n+1)+\mu g_\rho(n+1)\bigr)\pi_k(n+1)
\qquad \text{for all } n\ge 0.
$$
Hence, for each $n\ge 1$,
$$
\pi_k(n)
=
\pi_k(0)\prod_{m=1}^n \frac{\Lambda_k}{dm+\mu g_\rho(m)}.
$$
Normalization yields
$$
\pi_k(0)
=
\left[
1+\sum_{n=1}^\infty \prod_{m=1}^n \frac{\Lambda_k}{dm+\mu g_\rho(m)}
\right]^{-1}.
$$
This is exactly $\Pi_0^\rho(\Lambda_k,\mu)$, and more generally
$$
\pi_k(n)=\Pi_n^\rho(\Lambda_k,\mu)
\qquad \text{for all } n\ge 0.
$$

The cumulative throughput to the top $k$ classes equals the organ-arrival rate times the probability that at least one of those classes is compatible:
$$
\mu\sum_{n=1}^\infty \pi_k(n) g_\rho(n)
=
\mu\sum_{n=1}^\infty \Pi_n^\rho(\Lambda_k,\mu) g_\rho(n)
=
\mathcal T^\rho(\Lambda_k,\mu).
$$
With the convention $\Lambda_0=0$, the throughput of class $k$ is
$$
\mathcal T^\rho(\Lambda_k,\mu)-\mathcal T^\rho(\Lambda_{k-1},\mu).
$$
Therefore,
$$
u_k
=
\frac{v_k}{\lambda_k}
\left(
\mathcal T^\rho(\Lambda_k,\mu)-\mathcal T^\rho(\Lambda_{k-1},\mu)
\right).
$$
This proves the lemma.
\end{proof}

\subsection{Proof of Observation \ref{obs:pareto_frontier_improvement}}\label{app:proof_obs1}
Throughout, types are indexed so that $v_1\ge v_2\ge\cdots\ge v_{|\Theta|}$, and we set $v_{|\Theta|+1}=0$. Assumption \ref{asmp:same_d} is in force, so every class abandons at the common rate $d$.

\begin{proof}
Let $\sigma_{sep}$ be the stationary policy that, at each organ arrival, draws a nominal origin region $r$ with probability $\mu^r/\mu$ independently of everything else, and allocates the organ to the highest-value compatible class of region $r$, discarding it if region $r$ has no compatible waiting patient.
Because organ arrivals from the different regions are independent Poisson processes, the nominal origins reproduce the law of the true origins, so under $\sigma_{sep}$ each region faces exactly its autarky system and applies its autarky-optimal type-based priority rule. Hence $u^r(\sigma_{sep})=u_{sep}^r$ for every $r\in\mathcal R$.

Fix a region $s\in\mathcal R$ and let $\sigma_s$ be the policy obtained from $\sigma_{sep}$ by redirecting the organs that $\sigma_{sep}$ discards: when the nominal origin $r$ has no compatible waiting patient, the organ is offered to the highest-value compatible class of region $s$, and is discarded only if region $s$ has none either.
Both policies depend on the state only through the queue vector and the realized compatible set, so both belong to $\Sigma$ and the utility identity of Section \ref{subsec:flow_conv} applies to them.

Run the two systems on a single probability space, with the same patient-arrival processes, organ-arrival process, nominal-origin draws, and abandonment clocks.

Consider first a region $r\neq s$. At every organ arrival with nominal origin $r$, both policies offer the organ to $r$'s highest-value compatible class; at every other arrival, neither policy allocates anything to $r$. Region $r$'s queues therefore follow the same path under $\sigma_s$ as under $\sigma_{sep}$, and
$$
u^r(\sigma_s)=u^r(\sigma_{sep})=u_{sep}^r,
\qquad r\neq s.
$$

For region $s$, let $\mathcal O$ denote the set of organ-arrival epochs whose nominal origin is $s$, and let $\mathcal E$ denote the set of epochs whose nominal origin is some $r\neq s$ that has no compatible waiting patient.
Region $s$ is offered organs at the epochs of $\mathcal O$ under $\sigma_{sep}$, and at the epochs of $\mathcal O\cup\mathcal E$ under $\sigma_s$.

For $k=1,\dots,|\Theta|$ let
$$
N_k(t)=\sum_{\theta\le k} q_{(s,\theta)}(t)
$$
be the number of region-$s$ patients waiting in the $k$ highest-value types.
Both policies give an organ offered to region $s$ to its highest-value compatible class, so the top-$k$ block loses a patient at an offer epoch exactly when at least one of its $N_k$ patients is compatible, an event of probability $g_\rho(N_k)$ that is independent across offers.
Given offer epochs, the process $N_k$ therefore evolves as follows: it increases by one at rate $\Lambda^s(\le k)$, decreases by one at rate $dN_k$ through abandonment, and decreases by one at each offer epoch with probability $g_\rho(N_k)$.

Write $N_k^{s}$ and $N_k^{sep}$ for this process under $\sigma_s$ and under $\sigma_{sep}$, started from the same state, and couple them as follows.
Births occur simultaneously. Abandonments are coupled so that the larger count abandons whenever the smaller one does. At an epoch of $\mathcal O$ the same uniform draw is used in both systems, so that a service in the system with the smaller count implies a service in the other, which is consistent because $g_\rho$ is nondecreasing. Epochs of $\mathcal E$ act on $N_k^{s}$ only.
Both counts move one step at a time. At a state with $N_k^{s}\le N_k^{sep}$, none of the transition sources of $N_k$ can reverse the inequality: a birth raises both, a coupled abandonment or a coupled service lowers the larger by at least as much as the smaller, and an epoch of $\mathcal E$ can only lower $N_k^{s}$. Hence
$$
N_k^{s}(t)\le N_k^{sep}(t)
\qquad\text{for all } t\ge 0 \text{ and all } k,
$$
and letting $t\to\infty$ and using ergodicity (Lemma \ref{lmm:ergodic}),
$$
E_{\pi^{\sigma_s}}[N_k]\le E_{\pi^{\sigma_{sep}}}[N_k]
\qquad\text{for all } k.
$$

By the utility identity of Section \ref{subsec:flow_conv} and the definition of the regional utility, for any policy $\sigma$,
$$
\Lambda^s u^s(\sigma)
=
\sum_{\theta}\lambda_{(s,\theta)}v_\theta
-
d\sum_{\theta} v_\theta L_{(s,\theta)}^\sigma
=
\sum_{\theta}\lambda_{(s,\theta)}v_\theta
-
d\sum_{k=1}^{|\Theta|}(v_k-v_{k+1})\,E_{\pi^\sigma}[N_k],
$$
where the second equality is summation by parts.
The coefficients $v_k-v_{k+1}$ are nonnegative, so region $s$'s utility is a constant minus a nonnegatively weighted combination of the quantities $E_{\pi^\sigma}[N_k]$. Together with the previous display this gives $u^s(\sigma_s)\ge u^s(\sigma_{sep})=u_{sep}^s$.

The inequality is strict. Take $k=|\Theta|$, whose coefficient $v_{|\Theta|}$ is positive.
With positive stationary probability, an organ arrives with a nominal origin $r\neq s$ that has no compatible waiting patient while region $s$ does have one; the two events concern disjoint collections of independent arrival, abandonment, and compatibility variables, and each has positive probability because all arrival rates are positive and $\rho>0$.
At such an epoch $N_{|\Theta|}^{s}$ falls while $N_{|\Theta|}^{sep}$ does not, and the coupled process reaches this configuration from any state with positive probability, so the event $\{N_{|\Theta|}^{s}<N_{|\Theta|}^{sep}\}$ has positive stationary probability.
Hence $E_{\pi^{\sigma_s}}[N_{|\Theta|}]<E_{\pi^{\sigma_{sep}}}[N_{|\Theta|}]$ and $u^s(\sigma_s)>u_{sep}^s$.

Therefore, for any $w\in\mathbb R_{++}^{|\mathcal R|}$,
$$
\max_{\sigma\in\Sigma}\sum_{r\in\mathcal R} w^r u^r(\sigma)
\ge
\sum_{r\in\mathcal R} w^r u^r(\sigma_s)
>
\sum_{r\in\mathcal R} w^r u_{sep}^r,
$$
because $u^r(\sigma_s)=u^r_{sep}$ for $r\neq s$ and $u^s(\sigma_s)>u^s_{sep}$.

The construction also yields the strict Pareto improvement.
By Lemma \ref{lmm:convexity} the achievable set is convex, so the utility vector $\bar u=|\mathcal R|^{-1}\sum_{s\in\mathcal R}u(\sigma_s)$ is achievable, and for every region $r$,
$$
\bar u^r
=
\frac{1}{|\mathcal R|}\Bigl(u^r(\sigma_r)+\sum_{s\neq r}u^r(\sigma_s)\Bigr)
=
\frac{1}{|\mathcal R|}u^r(\sigma_r)+\frac{|\mathcal R|-1}{|\mathcal R|}u_{sep}^r
>
u_{sep}^r .
$$
\end{proof}

\subsection{Proof of Proposition \ref{prop:suff_cond_pareto}}\label{app:proof_suff_cond}
\begin{proof}
We prove the proposition by constructing the proportional utilitarian target and showing that it Pareto dominates autarky.

We first show the following lemma.
\begin{lemma}\label{lmm:order_Urho}
If $\mu\ge \mu'$ and
$$
\frac{\lambda'}{\lambda}\le \frac{d+\rho\mu'}{d+\rho\mu},
$$
then
$$
\frac{\mathcal T^\rho(\lambda,\mu)}{\mu}\ge
\frac{\mathcal T^\rho(\lambda',\mu')}{\mu'}.
$$
\end{lemma}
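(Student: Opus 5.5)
The plan is to write $\mathcal T^\rho(\lambda,\mu)/\mu$ as an expectation and compare the two birth--death chains of Section \ref{subsec:index_util} through a likelihood-ratio ordering. By definition,
$$
\frac{\mathcal T^\rho(\lambda,\mu)}{\mu}=\sum_{n\ge 1}\Pi_n^\rho(\lambda,\mu)\,g_\rho(n)=E[g_\rho(N)],
$$
where $N$ has the stationary law $\Pi^\rho(\lambda,\mu)$, the chain with birth rate $\lambda$ and death rate $dm+\mu g_\rho(m)$. Similarly $\mathcal T^\rho(\lambda',\mu')/\mu'=E[g_\rho(N')]$, where $N'\sim\Pi^\rho(\lambda',\mu')$. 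Since $g_\rho$ is nondecreasing, it suffices to show that $N$ first-order stochastically dominates $N'$. I will get this from the stronger likelihood-ratio dominance, which says that $n\mapsto \Pi_n^\rho(\lambda,\mu)/\Pi_n^\rho(\lambda',\mu')$ is nondecreasing.

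From the product form of the stationary distributions,
$$
\frac{\Pi_n^\rho(\lambda,\mu)}{\Pi_n^\rho(\lambda',\mu')}
=\frac{\Pi_0^\rho(\lambda,\mu)}{\Pi_0^\rho(\lambda',\mu')}\prod_{m=1}^n \frac{\lambda\,(dm+\mu' g_\rho(m))}{\lambda'\,(dm+\mu g_\rho(m))}.
$$
This ratio is nondecreasing in $n$ exactly when every factor is at least one. So the key step is to show that
$$
\frac{\lambda'}{\lambda}\le \frac{dm+\mu' g_\rho(m)}{dm+\mu g_\rho(m)}\qquad\text{for all } m\ge 1.
$$

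To prove this, set $x_m=g_\rho(m)/m$ and write the right-hand side as $h(x_m)$, where $h(x)=(d+\mu'x)/(d+\mu x)$. The derivative $h'(x)$ has the sign of $d(\mu'-\mu)\le 0$, because $\mu\ge\mu'$, so $h$ is nonincreasing in $x$. The function $g_\rho(m)=1-(1-\rho)^m$ is concave in $m$ with $g_\rho(0)=0$, so $x_m$ is nonincreasing in $m$ and is largest at $m=1$, where $x_1=\rho$. Therefore $h(x_m)\ge h(\rho)=(d+\rho\mu')/(d+\rho\mu)$ for every $m\ge 1$. The hypothesis $\lambda'/\lambda\le(d+\rho\mu')/(d+\rho\mu)$ then gives every factor $\ge 1$. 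This reduction is the one place where the specific form of the condition matters: the binding level is $m=1$, which is why $d+\rho\mu$ appears.

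With every factor at least one, the likelihood ratio is nondecreasing. The standard implication that likelihood-ratio order implies first-order stochastic dominance then gives $N\succeq_{st}N'$. Applying this to the nondecreasing function $g_\rho$ (and noting $g_\rho(0)=0$, so the $n=0$ term drops out) yields $E[g_\rho(N)]\ge E[g_\rho(N')]$, which is the claim. The only points needing care are the monotonicity of $g_\rho(m)/m$ and the sign of $h'$; I expect the rest to be routine.
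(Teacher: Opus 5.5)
Your proposal is correct and follows essentially the same route as the paper: both show each factor $\frac{\lambda\,(dm+\mu' g_\rho(m))}{\lambda'\,(dm+\mu g_\rho(m))}$ is at least one. Both do this using that $g_\rho(m)/m$ is nonincreasing with value $\rho$ at $m=1$ and that $a\mapsto (d+a\mu')/(d+a\mu)$ is nonincreasing when $\mu\ge\mu'$, and then conclude from the monotone ratio $\Pi_n^\rho(\lambda,\mu)/\Pi_n^\rho(\lambda',\mu')$ and the monotonicity of $g_\rho$; your explicit passage through likelihood-ratio and first-order stochastic dominance simply spells out the final step that the paper states in one line.
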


\begin{proof}
For $n\ge 1$, define
$$
\beta_n=\frac{\lambda}{dn+\mu g_\rho(n)},
\qquad
\beta_n'=\frac{\lambda'}{dn+\mu' g_\rho(n)}.
$$
The sequence
$$
a_n=\frac{g_\rho(n)}{n}
$$
is non-increasing in $n\ge 1$, with $a_1=\rho$.

Because $\mu\ge \mu'$, the function
$$
a\mapsto \frac{d+a\mu'}{d+a\mu}
$$
is non-increasing in $a$.

Hence, for every $n\ge 1$,
$$
\frac{dn+g_\rho(n)\mu'}{dn+g_\rho(n)\mu}
=
\frac{d+a_n\mu'}{d+a_n\mu}
\ge
\frac{d+\rho\mu'}{d+\rho\mu}.
$$
By the assumption, it follows that
$$
\frac{\lambda'}{\lambda}
\le
\frac{dn+\mu' g_\rho(n)}{dn+\mu g_\rho(n)}
\qquad \text{for all } n\ge 1,
$$
or equivalently,
$$
\beta_n\ge \beta_n'
\qquad \text{for all } n\ge 1.
$$
Hence,
$$
\frac{\Pi_n^\rho(\lambda,\mu)}{\Pi_n^\rho(\lambda',\mu')}
=
\frac{\Pi_0^\rho(\lambda,\mu)}{\Pi_0^\rho(\lambda',\mu')}
\prod_{m=1}^n \frac{\beta_m}{\beta_m'}
$$
is non-decreasing in $n$.

Since $g_\rho(n)$ is increasing in $n$, this implies
$$
\sum_{n=1}^\infty \Pi_n^\rho(\lambda,\mu)g_\rho(n)
\ge
\sum_{n=1}^\infty \Pi_n^\rho(\lambda',\mu')g_\rho(n).
$$
That is,
$$
\frac{\mathcal T^\rho(\lambda,\mu)}{\mu}\ge
\frac{\mathcal T^\rho(\lambda',\mu')}{\mu'}.
$$
\end{proof}

By Condition 1 and Lemma \ref{lmm:order_Urho}, we have
$$
\frac{\mathcal T^\rho(\Lambda(\le \theta),\mu)}{\mu}\ge \frac{\mathcal T^\rho(\Lambda^r(\le \theta),\mu^r)}{\mu^r}
$$
for all $r$ and $\theta$.
Therefore, 
$$
\mathcal T^{\mathrm{target},r}(\le \theta)
=
\frac{\mu^r}{\mu}\mathcal T^\rho(\Lambda(\le \theta),\mu)
\ge
\mathcal T_{\mathrm{sep}}^r(\le \theta).
$$

By Condition 2, there exists a utilitarian policy $\sigma^*$ that achieves this target cumulative throughput $\mathcal T^{\mathrm{target},r}(\le \theta)$.

For each region $r$, the regional utility is
$$
u^r
=
\frac{1}{\Lambda^r}\sum_{\theta=1}^{|\Theta|} v_\theta \bigl(\mathcal T^{\mathrm{target},r}(\le \theta)-\mathcal T^{\mathrm{target},r}(\le \theta-1)\bigr)
=
\frac{1}{\Lambda^r}\sum_{\theta=1}^{|\Theta|} (v_\theta-v_{\theta+1})\mathcal T^{\mathrm{target},r}(\le \theta),
$$
where $v_{|\Theta|+1}=0$.
Since $v_\theta-v_{\theta+1}\ge 0$ and
$$
\mathcal T^{\mathrm{target},r}(\le \theta)\ge \mathcal T_{\mathrm{sep}}^r(\le \theta)
\qquad
\text{for all } \theta,
$$
it follows that
$$
u^r(\sigma^*)\ge u_{sep}^r
\qquad
\text{for all } r\in\mathcal R.
$$
Hence $\sigma^*$ Pareto improves upon autarky.
\end{proof}

\subsection{Proof of Lemma \ref{lmm:index_eval}}\label{app:proof_index_eval}
\begin{proof}
Fix a static priority ordering and let
$$
\Lambda_k=\sum_{i=1}^k \lambda_i,
\qquad k=1,\dots,K.
$$
For each $k$, define the recursion
$$
r_0^{(k)}=1,
\qquad
r_n^{(k)} = r_{n-1}^{(k)}\frac{\Lambda_k}{dn+\mu g_\rho(n)}
\quad \text{for } n\ge 1.
$$
Then
$$
\Pi_n^\rho(\Lambda_k,\mu)
=
\frac{r_n^{(k)}}{\sum_{m=0}^\infty r_m^{(k)}},
\qquad
\mathcal T^\rho(\Lambda_k,\mu)
=
\mu\frac{\sum_{n=1}^\infty r_n^{(k)} g_\rho(n)}{\sum_{n=0}^\infty r_n^{(k)}}.
$$

For a truncation level $N$, define
$$
Z_N^{(k)}:=\sum_{n=0}^N r_n^{(k)},
\qquad
S_N^{(k)}:=\sum_{n=1}^N r_n^{(k)} g_\rho(n),
$$
and the truncated throughput
$$
\widehat{\mathcal T}_N^\rho(\Lambda_k,\mu)
:=
\mu \frac{S_N^{(k)}}{Z_N^{(k)}}.
$$
Let
$$
R_N^{(k)}:=\sum_{n=N+1}^\infty r_n^{(k)}.
$$
Since $0\le g_\rho(n)\le 1$, we have
$$
0\le \sum_{n=N+1}^\infty r_n^{(k)} g_\rho(n)\le R_N^{(k)}.
$$
Also, $Z_N^{(k)}\ge 1$ and $Z_\infty^{(k)}=Z_N^{(k)}+R_N^{(k)}$.
Hence
\begin{align*}
\left|
\mathcal T^\rho(\Lambda_k,\mu)-\widehat{\mathcal T}_N^\rho(\Lambda_k,\mu)
\right|
&=
\mu\left|
\frac{S_\infty^{(k)}}{Z_\infty^{(k)}}-\frac{S_N^{(k)}}{Z_N^{(k)}}
\right| \\
&\le 2\mu R_N^{(k)}.
\end{align*}

Now,
$$
r_n^{(k)}
=
\prod_{m=1}^n \frac{\Lambda_k}{dm+\mu g_\rho(m)}
\le
\prod_{m=1}^n \frac{\Lambda_k}{dm}
=
\frac{(\Lambda_k/d)^n}{n!}.
$$
Therefore, with $\bar\lambda=\max_i\lambda_i$ as in Lemma \ref{lmm:index_eval},
$$
R_N^{(k)}
\le
\sum_{n=N+1}^\infty \frac{(\Lambda_k/d)^n}{n!}
\le
\sum_{n=N+1}^\infty \frac{(K\bar\lambda/d)^n}{n!}.
$$
Thus, if $N$ is chosen so that
$$
\sum_{n=N+1}^\infty \frac{(K\bar\lambda/d)^n}{n!}
\le
\frac{\delta\,\underline\lambda}{4\bar v\,\mu},
$$
where
$$
\underline\lambda:=\min_i\lambda_i,
\qquad
\bar v:=\max_i v_i,
$$
then
$$
\left|
\mathcal T^\rho(\Lambda_k,\mu)-\widehat{\mathcal T}_N^\rho(\Lambda_k,\mu)
\right|
\le
\frac{\delta\,\underline\lambda}{2\bar v}
\qquad
\text{for every } k.
$$

Now define the approximate class utilities by
$$
\widehat u_k
=
\frac{v_k}{\lambda_k}
\left(
\widehat{\mathcal T}_N^\rho(\Lambda_k,\mu)
-
\widehat{\mathcal T}_N^\rho(\Lambda_{k-1},\mu)
\right).
$$
By Lemma \ref{lmm:index_util},
\begin{align*}
|u_k-\widehat u_k|
&\le
\frac{v_k}{\lambda_k}
\left(
\left|
\mathcal T^\rho(\Lambda_k,\mu)-\widehat{\mathcal T}_N^\rho(\Lambda_k,\mu)
\right|
+
\left|
\mathcal T^\rho(\Lambda_{k-1},\mu)-\widehat{\mathcal T}_N^\rho(\Lambda_{k-1},\mu)
\right|
\right) \\
&\le
\frac{\bar v}{\underline\lambda}\cdot
2\cdot
\frac{\delta\,\underline\lambda}{2\bar v}
= \delta.
\end{align*}
Hence $\|\widehat u(\sigma)-u(\sigma)\|_\infty\le \delta$.

It remains to choose the truncation level. Write
$$
a=\frac{K\bar\lambda}{d},
\qquad
\eta=\frac{\delta\,\underline\lambda}{4\bar v\,\mu},
$$
so that the requirement displayed above reads $\sum_{n>N}a^n/n!\le\eta$.
Since $n!\ge (n/e)^n$, we have $a^n/n!\le (ea/n)^n$, and for $n\ge N+1$ with $N\ge 2ea$ this is at most $2^{-n}$; hence
$$
\sum_{n=N+1}^\infty\frac{a^n}{n!}
\le
\sum_{n=N+1}^\infty 2^{-n}
=
2^{-N}.
$$
Any
$$
N\ \ge\ \max\left\{2ea,\ \log_2\frac{1}{\eta}\right\}
=
\max\left\{\frac{2eK\bar\lambda}{d},\ \log_2\frac{4\bar v\mu}{\underline\lambda\,\delta}\right\}
$$
therefore attains accuracy $\delta$, so one may take
$$
N=O\left(\frac{K\bar\lambda}{d}+\log\frac{\mu\bar v}{\underline\lambda\,\delta}\right).
$$

Finally, computing all $r_n^{(k)}$ up to level $N$ requires $O(KN)$ arithmetic operations once the priority order is fixed, and sorting requires $O(K\log K)$ operations.
The total cost is $O(K\log K+KN)$, which is the bound stated in the lemma.
\end{proof}

\subsection{Proof of Lemma \ref{lmm:dual_radius}}\label{app:proof_dual_radius}
\begin{proof}
Let $\bar u$ and $s_0$ be as in Assumption \ref{asmp:slater}. For any $\gamma\ge 0$, taking $\bar u$ as a feasible point of the inner maximization gives
$$
g(\gamma)
=
\max_{\sigma\in\Sigma}(w+M^\top\gamma)^\top u(\sigma)-\gamma^\top c
\ge
w^\top\bar u+\gamma^\top(M\bar u-c)
\ge
w^\top\bar u+s_0\|\gamma\|_1,
$$
where the last step uses $\gamma\ge 0$ and $M\bar u-c\ge s_0\mathbf 1$.

Assumption \ref{asmp:slater} is a Slater condition for the linear program $(P)$ over the bounded polytope $\mathcal U$, so strong duality holds, the dual optimum is attained, and $g(\gamma^*)=\mathrm{OPT}$ at any optimal $\gamma^*$.
Every class utility satisfies $u_i=v_i-(d_iv_i/\lambda_i)L_i^\sigma\le v_i$, so
$$
\mathrm{OPT}\le\max_{u\in\mathcal U}w^\top u\le\|w\|_1\,\bar v .
$$
Since $w\ge 0$ and $\bar u\ge 0$, we have $w^\top\bar u\ge 0$, and the two displays give
$$
s_0\|\gamma^*\|_1\le g(\gamma^*)-w^\top\bar u\le \mathrm{OPT}\le\|w\|_1\,\bar v ,
$$
which is the claim.
\end{proof}

\subsection{Proof of Lemma \ref{lmm:dual_props}}\label{app:proof_dual_props}
\begin{proof}
For each fixed $\sigma$ the function $\gamma\mapsto(w+M^\top\gamma)^\top u(\sigma)-\gamma^\top c$ is affine, so $g$, being their pointwise maximum, is convex. Let $\gamma\in\mathcal G$. Since $w+M^\top\gamma\ge 0$, Theorem \ref{thm:index_optimality} implies that $\sigma_\gamma$ attains the maximum defining $g(\gamma)$, so for every $\gamma'$,
$$
g(\gamma')
\ge
(w+M^\top\gamma')^\top u(\sigma_\gamma)-\gamma'^\top c
=
g(\gamma)+\bigl(Mu(\sigma_\gamma)-c\bigr)^\top(\gamma'-\gamma),
$$
which is the subgradient inequality for $\xi(\gamma)$. Since $0\le u_i(\sigma)\le v_i\le\bar v$, every entry of $Mu(\sigma_\gamma)$ lies in $[0,B_M\bar v]$, so $\|\xi(\gamma)\|_\infty\le B_M\bar v+\|c\|_\infty\le C_g$. Applying the subgradient inequality at $\gamma$ and at $\gamma'$ then gives
$
|g(\gamma)-g(\gamma')|\le C_g\|\gamma-\gamma'\|_1
$
for all $\gamma,\gamma'\in\mathcal G$.
\end{proof}

\subsection{Proof of Lemma \ref{lmm:oracle_acc}}\label{app:proof_oracle_acc}
\begin{proof}
Throughout, $\gamma,\gamma'\in\mathcal G=[0,R+1]^L$, so $\gamma\ge 0$ and $\|\gamma\|_\infty\le R+1$. Since $w\ge 0$ and $M\ge 0$,
\begin{equation}\label{eq:weight_norm}
\|w+M^\top\gamma\|_1
\le
\|w\|_1+\sum_{k=1}^K\sum_{\ell=1}^L M_{\ell,k}\gamma_\ell
\le
\|w\|_1+(R+1)LB_M
=C_w .
\end{equation}

(i) Since $\sigma_\gamma$ is selected exactly and only its evaluation is approximate,
$$
\widehat g(\gamma)-g(\gamma)
=
(w+M^\top\gamma)^\top\bigl(\widehat u(\sigma_\gamma)-u(\sigma_\gamma)\bigr),
$$
so $|\widehat g(\gamma)-g(\gamma)|\le\|w+M^\top\gamma\|_1\,\delta\le C_w\delta$ by \eqref{eq:weight_norm}. Likewise $\widehat\xi(\gamma)-\xi(\gamma)=M(\widehat u(\sigma_\gamma)-u(\sigma_\gamma))$, whose $\ell$-th entry is bounded in absolute value by $\delta\sum_k M_{\ell,k}\le B_M\delta$.

(ii) Combining Lemma \ref{lmm:dual_props} with part (i), for $\gamma'\in\mathcal G$,
\begin{align*}
g(\gamma')
&\ge
g(\gamma)+\xi(\gamma)^\top(\gamma'-\gamma)\\
&\ge
\widehat g(\gamma)-C_w\delta
+\widehat\xi(\gamma)^\top(\gamma'-\gamma)
-\|\widehat\xi(\gamma)-\xi(\gamma)\|_\infty\|\gamma'-\gamma\|_1\\
&\ge
\widehat g(\gamma)+\widehat\xi(\gamma)^\top(\gamma'-\gamma)
-C_w\delta-(R+1)LB_M\,\delta
\ \ge\
\widehat g(\gamma)+\widehat\xi(\gamma)^\top(\gamma'-\gamma)-2C_w\delta.
\end{align*}
\end{proof}

\subsection{Proof of Theorem \ref{thm:algorithm}}\label{app:algo}
\begin{proof}
Write $\delta:=\delta_{\mathrm{orc}}=\epsilon/(48C_w)$, $g_{\min}:=\min_{\gamma\in\mathcal G}g(\gamma)$, and $\mathrm{viol}(u):=\sum_{\ell=1}^L\bigl(c_\ell-(Mu)_\ell\bigr)^+$. Let $\widehat g_{\mathrm{best}}$ denote the smallest value $\widehat g(\gamma_t)$ reported by the oracle during Phase 1.
For $\bar\gamma\in\mathcal G$ let $\mathcal G_\kappa(\bar\gamma):=\bar\gamma+\kappa(\mathcal G-\bar\gamma)$, a copy of $\mathcal G$ scaled by $\kappa$ about $\bar\gamma$; it lies in $\mathcal G$ and has volume $\bigl(\kappa(R+1)\bigr)^L$.

First, a feasibility cut discards no point of $\mathcal G$. If $(\gamma_t)_\ell<0$ the algorithm takes $z_t=-e_\ell$, and the retained halfspace is $\{\gamma\mid\gamma_\ell\ge(\gamma_t)_\ell\}\supseteq\{\gamma\mid\gamma_\ell\ge 0\}\supseteq\mathcal G$; if $(\gamma_t)_\ell>R+1$ it takes $z_t=e_\ell$ and the retained halfspace is $\{\gamma\mid\gamma_\ell\le(\gamma_t)_\ell\}\supseteq\mathcal G$.

Second, $\mathcal G\subseteq\mathcal B_0$, because $\|\gamma-\gamma^\circ\|_2\le\tfrac12(R+1)\sqrt L$ for every $\gamma\in\mathcal G$. The ball $\mathcal B_0$ of radius $\tfrac12(R+1)\sqrt L$ lies in a cube of side $(R+1)\sqrt L$, so $\mathrm{vol}(\mathcal B_0)\le\bigl((R+1)\sqrt L\bigr)^L$. By $\frac{\mathrm{vol}(\mathcal B_{t+1})}{\mathrm{vol}(\mathcal B_t)}<e^{-1/(2L)}$ and the choice of $\bar t$,
$$
\mathrm{vol}(\mathcal B_{\bar t})
<
e^{-\bar t/(2L)}\bigl((R+1)\sqrt L\bigr)^L
<
\Bigl(\frac{\kappa}{\sqrt L}\Bigr)^L\bigl((R+1)\sqrt L\bigr)^L
=
\bigl(\kappa(R+1)\bigr)^L
=
\mathrm{vol}\bigl(\mathcal G_\kappa(\bar\gamma)\bigr),
$$
where the middle inequality is $\bar t>2L^2\ln(\sqrt L/\kappa)$, which holds by the choice of $\bar t$ in both branches of the maximum.

Consequently, if Phase 1 runs all $\bar t$ iterations, some point of $\mathcal G_\kappa(\bar\gamma)$ is discarded along the way: otherwise $\mathcal G_\kappa(\bar\gamma)\subseteq\mathcal B_t$ would hold for every $t\le\bar t$ by induction, contradicting the volume comparison. By the first paragraph, the discarding cut cannot be a feasibility cut, so it is an optimality cut at some iteration $t^*$ with $\gamma_{t^*}\in\mathcal G$, and the discarded point $\gamma\in\mathcal G_\kappa(\bar\gamma)$ satisfies
\begin{equation}\label{eq:discarded}
\widehat\xi(\gamma_{t^*})^\top(\gamma-\gamma_{t^*})>0 .
\end{equation}
In particular, at least one oracle call is made, so $\widehat g_{\mathrm{best}}$ is finite.

Call $\tilde g:\mathcal G\to\mathbb R$ \emph{consistent} if $\tilde g$ is convex, $|\tilde g(\gamma)-\tilde g(\gamma')|\le C_g\|\gamma-\gamma'\|_1$ on $\mathcal G$, and at every point $\gamma_t$ at which the oracle was called during the run, $\tilde g(\gamma_t)=g(\gamma_t)$ and $\xi(\gamma_t)\in\partial\tilde g(\gamma_t)$. We claim that every consistent $\tilde g$ satisfies
\begin{equation}\label{eq:master}
\min_{\gamma\in\mathcal G}\tilde g(\gamma)\ \ge\ \widehat g_{\mathrm{best}}-2C_w\delta-\frac{\epsilon}{16}.
\end{equation}

The argument of Lemma \ref{lmm:oracle_acc}(ii) applies to any consistent $\tilde g$ verbatim: for every queried $\gamma_t$ and every $\gamma'\in\mathcal G$,
\begin{equation}\label{eq:cut_h}
\tilde g(\gamma')\ \ge\ \tilde g(\gamma_t)+\xi(\gamma_t)^\top(\gamma'-\gamma_t)
\ \ge\ \widehat g(\gamma_t)+\widehat\xi(\gamma_t)^\top(\gamma'-\gamma_t)-2C_w\delta,
\end{equation}
where the second inequality uses $\tilde g(\gamma_t)=g(\gamma_t)$, Lemma \ref{lmm:oracle_acc}(i) and $\|\gamma'-\gamma_t\|_1\le(R+1)L$.

If the loop left early at some $\gamma_t$ with $\widehat\xi(\gamma_t)=0$, then \eqref{eq:cut_h} reads $\tilde g(\gamma')\ge\widehat g(\gamma_t)-2C_w\delta$ for every $\gamma'\in\mathcal G$, and $\widehat g_{\mathrm{best}}\le\widehat g(\gamma_t)$, which gives \eqref{eq:master}.

Otherwise, Phase 1 ran all $\bar t$ iterations. Let $\bar\gamma$ minimize $\tilde g$ on $\mathcal G$ and apply the volume argument above with this $\bar\gamma$: there are an optimality iteration $t^*$ and a point $\gamma\in\mathcal G_\kappa(\bar\gamma)$ satisfying \eqref{eq:discarded}. Combining \eqref{eq:discarded} with \eqref{eq:cut_h},
$$
\tilde g(\gamma)\ >\ \widehat g(\gamma_{t^*})-2C_w\delta\ \ge\ \widehat g_{\mathrm{best}}-2C_w\delta .
$$
On the other hand, $\|\gamma-\bar\gamma\|_1\le\kappa\,L(R+1)$ because $\gamma\in\mathcal G_\kappa(\bar\gamma)$, so the Lipschitz property gives
$
\tilde g(\gamma)\le\min_{\mathcal G}\tilde g+C_g\kappa L(R+1)\le\min_{\mathcal G}\tilde g+\epsilon/16
$
by the choice of $\kappa$. Therefore, \eqref{eq:master}.

Let $\mathcal S$ be the set of index policies produced in Phase 1 and define the restricted dual function
$$
g_{\mathcal S}(\gamma):=\max_{\sigma\in\mathcal S}(w+M^\top\gamma)^\top u(\sigma)-\gamma^\top c,
\qquad
g_{\min}^{\mathcal S}:=\min_{\gamma\in\mathcal G}g_{\mathcal S}(\gamma).
$$
Both $g$ and $g_{\mathcal S}$ are consistent in the sense of the previous step. For $g$ this is Lemma \ref{lmm:dual_props}. For $g_{\mathcal S}$: it is a maximum of affine functions, hence convex; its subgradients on $\mathcal G$ are of the form $Mu(\sigma)-c$ with $\|Mu(\sigma)-c\|_\infty\le B_M\bar v+\|c\|_\infty\le C_g$, which gives the Lipschitz bound; and at every queried $\gamma_t$ the returned policy $\sigma_{\gamma_t}$ maximizes $(w+M^\top\gamma_t)^\top u(\sigma)$ over all of $\Sigma$ and belongs to $\mathcal S$, so it maximizes over $\mathcal S$ as well, whence $g_{\mathcal S}(\gamma_t)=g(\gamma_t)$ and $\xi(\gamma_t)\in\partial g_{\mathcal S}(\gamma_t)$.

Applying \eqref{eq:master} to $\tilde g=g_{\mathcal S}$, and bounding $\widehat g_{\mathrm{best}}$ from below by Lemma \ref{lmm:oracle_acc}(i), namely $\widehat g_{\mathrm{best}}=\widehat g(\gamma_j)\ge g(\gamma_j)-C_w\delta\ge g_{\min}-C_w\delta$ for the index $j$ attaining the minimum,
\begin{equation}\label{eq:restricted_gap}
g_{\min}^{\mathcal S}\ \ge\ g_{\min}-3C_w\delta-\frac{\epsilon}{16}\ =\ g_{\min}-\frac{\epsilon}{8},
\end{equation}
using $3C_w\delta=\epsilon/16$.

For $\varphi\ge 0$ put $\Phi_\varphi(u):=w^\top u-\varphi\,\mathrm{viol}(u)$, and for a nonempty finite $\mathcal V\subseteq\Sigma_I$ write $u(\alpha)=\sum_{\sigma\in\mathcal V}\alpha_\sigma u(\sigma)$ for $\alpha\in\mathcal A(\mathcal V)$.
Since $\varphi\,(c_\ell-(Mu)_\ell)^+=\max_{0\le\gamma_\ell\le\varphi}\gamma_\ell(c_\ell-(Mu)_\ell)$ and the map $(\alpha,\gamma)\mapsto w^\top u(\alpha)+\gamma^\top(Mu(\alpha)-c)$ is bilinear on the product of the compact convex sets $\mathcal A(\mathcal V)$ and $[0,\varphi]^L$, the minimax theorem gives
\begin{equation}\label{eq:penalty_duality}
\max_{\alpha\in\mathcal A(\mathcal V)}\Phi_\varphi(u(\alpha))
=
\min_{\gamma\in[0,\varphi]^L}\ \Bigl(\max_{\sigma\in\mathcal V}(w+M^\top\gamma)^\top u(\sigma)-\gamma^\top c\Bigr).
\end{equation}

Take $\mathcal V=\Sigma_I$ first. For every $\gamma\ge 0$ the weights $w+M^\top\gamma$ are nonnegative, so by Theorem \ref{thm:index_optimality} the inner maximum over $\Sigma_I$ equals the maximum over all of $\Sigma$, and the right-hand side of \eqref{eq:penalty_duality} is $\min_{\gamma\in[0,\varphi]^L}g(\gamma)$. By Lemma \ref{lmm:dual_radius} the dual optimum $\gamma^*$ satisfies $\|\gamma^*\|_\infty\le\|\gamma^*\|_1\le R$, so $\gamma^*\in[0,R]^L\subseteq\mathcal G$, and by strong duality $g(\gamma^*)=\mathrm{OPT}$. Hence that minimum equals $\mathrm{OPT}$ for both $\varphi=R$ and $\varphi=R+1$; in particular $g_{\min}=\mathrm{OPT}$. By Proposition \ref{prop:achievable_set}, $\mathcal U$ is the set of nonnegative vectors dominated by some element of $\mathrm{conv}\{u(\sigma):\sigma\in\Sigma_I\}$, and $\Phi_\varphi$ is nondecreasing in $u$ because $w\ge 0$ and $M\ge 0$; hence maximizing $\Phi_\varphi$ over $\mathcal A(\Sigma_I)$ is the same as maximizing it over $\mathcal U$, so
\begin{equation}\label{eq:exact_penalty}
\max_{u\in\mathcal U}\Phi_\varphi(u)=\mathrm{OPT}
\qquad\text{for }\varphi\in\{R,R+1\}.
\end{equation}
Taking $\mathcal V=\mathcal S$ and $\varphi=R+1$ in \eqref{eq:penalty_duality}, and using \eqref{eq:restricted_gap} together with $g_{\min}=\mathrm{OPT}$,
\begin{equation}\label{eq:restricted_primal}
\max_{\alpha\in\mathcal A(\mathcal S)}\Phi_{R+1}(u(\alpha))
=
g_{\min}^{\mathcal S}
\ \ge\
\mathrm{OPT}-\frac{\epsilon}{8}.
\end{equation}

Phase 2 maximizes not $\alpha\mapsto\Phi_{R+1}(u(\alpha))$ but its numerical counterpart $\widehat\Phi$, obtained by replacing $u(\sigma)$ with $\widehat u(\sigma)$, and it does so only to accuracy $\epsilon/16$, which is what its last line requires. Since $\|\widehat u(\sigma)-u(\sigma)\|_\infty\le\delta_{\mathrm{rec}}$ for every $\sigma\in\mathcal S$, the same bound holds for any mixture, so for every $\alpha\in\mathcal A(\mathcal S)$,
$$
\bigl|\widehat\Phi(\alpha)-\Phi_{R+1}(u(\alpha))\bigr|
\le
\|w\|_1\delta_{\mathrm{rec}}+(R+1)LB_M\delta_{\mathrm{rec}}
=
C_w\,\delta_{\mathrm{rec}}
\le
\frac{\epsilon}{16}.
$$
If $\alpha$ is the returned point and $\alpha^\dagger$ maximizes $\Phi_{R+1}(u(\cdot))$ over $\mathcal A(\mathcal S)$, then
$$
\Phi_{R+1}(u(\alpha))
\ge
\widehat\Phi(\alpha)-\frac{\epsilon}{16}
\ge
\Bigl(\widehat\Phi(\alpha^\dagger)-\frac{\epsilon}{16}\Bigr)-\frac{\epsilon}{16}
\ge
\Phi_{R+1}(u(\alpha^\dagger))-\frac{3\epsilon}{16}
\ \ge\
\mathrm{OPT}-\frac{5\epsilon}{16},
$$
using \eqref{eq:restricted_primal}. Writing $u^*=u(\alpha)$, this says
\begin{equation}\label{eq:final_lower}
w^\top u^*-(R+1)\,\mathrm{viol}(u^*)\ \ge\ \mathrm{OPT}-\frac{5\epsilon}{16}.
\end{equation}
Since $u^*\in\mathcal U$, \eqref{eq:exact_penalty} with $\varphi=R$ gives $w^\top u^*-R\,\mathrm{viol}(u^*)\le\mathrm{OPT}$. Subtracting \eqref{eq:final_lower} from this inequality yields $\mathrm{viol}(u^*)\le 5\epsilon/16\le\epsilon$, which is the second claim, and substituting $\mathrm{viol}(u^*)\ge 0$ into \eqref{eq:final_lower} gives $w^\top u^*\ge\mathrm{OPT}-5\epsilon/16\ge\mathrm{OPT}-\epsilon$, which is the first.

Phase 1 performs at most $\bar t=\max\{1,\lceil 2L^2\ln(\sqrt L/\kappa)\rceil+1\}$ iterations, and
$$
\ln\frac{\sqrt L}{\kappa}
=
\tfrac12\ln L+\max\left\{\ln 2,\ \ln\frac{16\,C_gL(R+1)}{\epsilon}\right\}
=
O\bigl(\log L+\log C_g+\log(R+1)+\log(1/\epsilon)\bigr),
$$
which is the bound stated in the theorem. Each iteration performs one $\mathrm{cut}_L$ update, costing $O(L^2)$ operations, and at most one oracle call. By Lemma \ref{lmm:index_eval} an oracle call at tolerance $\delta_{\mathrm{orc}}=\epsilon/(48C_w)$ costs
$$
O\left(KL+K\log K+\frac{K^2\bar\lambda}{d}+K\log\frac{48\,C_w\,\mu\bar v}{\underline\lambda\,\epsilon}\right)
$$
arithmetic operations. In Phase 2, $P=|\mathcal S|\le\bar t$, and re-evaluating $\mathcal S$ at tolerance $\delta_{\mathrm{rec}}$ costs $P$ evaluations of the same order. The data of $\widehat\Phi$ is formed in $O(PL)$ operations, and meeting the $\epsilon/16$ requirement needs nothing beyond the recursion of Definition \ref{def:cut}.

Put $n=P-1$, identify $\mathcal A(\mathcal S)$ with $\mathcal A'=\{\alpha'\in\mathbb R^{n}_+\mid\mathbf 1^\top\alpha'\le 1\}$ through $\alpha_{\sigma_j}=\alpha'_j$ for $j<P$ and $\alpha_{\sigma_P}=1-\mathbf 1^\top\alpha'$, and set
$$
C_\Phi=2C_w(\bar v+\delta_{\mathrm{rec}}),
\quad
\kappa_{\mathrm{rec}}=\min\Bigl\{\tfrac12,\ \frac{\epsilon}{32\,C_\Phi}\Bigr\},
\quad
\bar t_{\mathrm{rec}}=\bigl\lceil 2n^2\bigl(\ln(4/\kappa_{\mathrm{rec}})+\ln P\bigr)\bigr\rceil+1 .
$$
Run the recursion from $\alpha'_0=\mathbf 1/P$ and $D'_0=4I$ for $\bar t_{\mathrm{rec}}$ steps, cutting with a violated constraint of $\mathcal A'$ where there is one and with $-\zeta$ for a supergradient $\zeta$ of $\widehat\Phi$ otherwise, and return the best point visited inside $\mathcal A'$. The argument made for Phase 1 then repeats. Writing $a_j=w^\top\widehat u(\sigma_j)$ and $b_j=M\widehat u(\sigma_j)$, each entry of a supergradient has the form $(a_j-a_P)+(R+1)\sum_{\ell\in A}(b_{j\ell}-b_{P\ell})$ for the set $A$ of constraints violated there, hence is at most $C_\Phi$ in absolute value, so $\widehat\Phi$ is $C_\Phi$-Lipschitz for the $\ell_1$ norm. The two feasibility cuts discard no point of $\mathcal A'$. Every $\alpha'\in\mathcal A'$ has $\|\alpha'-\alpha'_0\|_2\le\|\alpha'\|_1+\|\alpha'_0\|_1\le 2$, so $\mathcal A'\subseteq\mathcal B(\alpha'_0,D'_0)$ and $\mathrm{vol}(\mathcal B(\alpha'_0,D'_0))\le 4^{n}$, while the copy of $\mathcal A'$ scaled by $\kappa_{\mathrm{rec}}$ about a maximizer has volume $\kappa_{\mathrm{rec}}^{n}/n!$; since $\ln(n!)\le n\ln P$, the choice of $\bar t_{\mathrm{rec}}$ makes the final ellipsoid the smaller of the two. Some point of that copy is therefore discarded, necessarily by a supergradient cut at a visited $\alpha'_t$, and concavity gives $\widehat\Phi(\alpha'_t)$ above the value there, which in turn is within $2C_\Phi\kappa_{\mathrm{rec}}\le\epsilon/16$ of the maximum.

Each of the $\bar t_{\mathrm{rec}}=O\bigl(P^2(\log P+\log C_w+\log\bar v+\log(1/\epsilon))\bigr)$ iterations costs $O(PL+P^2)$ operations, and $P\le\bar t$, so Phase 2 is polynomial in $L$ and the logarithms listed in the theorem.

Finally,
$
\log C_w=O(\log\|w\|_1+\log\bar v+\log(1/s_0)+\log L+\log B_M)
$
and
$
\log C_g=O(\log B_M+\log\bar v),
$
and $\log(R+1)=O(\log\|w\|_1+\log\bar v+\log(1/s_0))$, so every logarithm above is bounded by a polynomial in the quantities listed in the theorem. Summing the three contributions gives the stated bound.
\end{proof}

\section{Calibration to U.S. regional kidney-transplant data}
\label{app:calibration_us}

This appendix describes how the parameters used in Section~\ref{subseq:experiments_calibration} are constructed from publicly available U.S.\ kidney-transplant data.

We restrict attention to a one-type environment,
\[
|\Theta|=1, \qquad v=1,
\]
and identify regions with the eleven OPTN regions. The organ is restricted to kidneys from deceased donors. The main data sources are the OPTN public database and the OPTN/SRTR Annual Data Report \citep{optn_public_data, schladt2025optn}. The calibrated parameters are the regional patient-arrival rates $\lambda^r$, the regional organ-arrival rates $\mu^r$, a common abandonment rate $d$, and a common compatibility parameter $\rho$.

\subsection{Flow parameters}

For each region $r$, let $A^r$ denote annual kidney waitlist additions in 2023. We set
\[
\lambda^r=\frac{A^r}{365}.
\]
Thus, $\lambda^r$ is calibrated using publicly reported kidney waitlist additions by OPTN region \citep{optn_public_data}.

Next, let $K^r_{rec}$ denote the annual number of recovered deceased-donor kidneys attributed to region $r$ in 2023. We set
\[
\mu^r=\frac{K^r_{rec}}{365}.
\]

The abandonment rate is assumed to be common across regions. Let $D^{\mathrm{US}}$ denote the total number of kidney waitlist removals in 2023, coded as either death or too sick to transplant, and let $W^{\mathrm{US}}$ denote the total number of kidney candidates on the waiting list during 2023. We set
\[
d=\frac{D^{\mathrm{US}}}{W^{\mathrm{US}}\cdot 365}.
\]
The removal counts are taken from the OPTN public database, and the aggregate kidney waiting-list population is taken from the OPTN/SRTR Annual Data Report \citep{optn_public_data, schladt2025optn}. This should be interpreted as a coarse proxy for the common daily abandonment rate in the model.

We do not attempt to calibrate $\rho$ separately from public aggregate data. The natural empirical moments that would inform $\rho$, such as discard rates, transplant volumes, waiting times, or queue lengths, are all policy-dependent. Without specifying and defending a particular reference policy, these moments do not identify $\rho$ in a policy-invariant way.

We therefore set $\rho=0.2$ in the baseline. This value should be read as a reduced-form average medical compatibility parameter in the one-type model, not as a direct estimate of biological compatibility for any particular candidate group. We take $\rho=0.2$ as a conservative baseline and report robustness for $\rho\in\{0.1,0.2,0.3,0.4,0.5,0.6,0.7,0.8,0.9,1.0\}$.

The baseline calibration is therefore
\begin{align}
\lambda^r
&=
\frac{\text{2023 kidney waitlist additions in region }r}{365},
\\
\mu^r
&=
\frac{\text{2023 recovered deceased-donor kidneys in region }r}{365},
\\
d
&=
\frac{\text{2023 U.S.\ kidney waitlist removals coded as death or too sick}}
{\text{2023 U.S.\ kidney waiting-list population}\cdot 365},
\\
\rho
&=0.2.
\end{align}

Table~\ref{tab:us_calibrated_primitives} reports the calibrated regional primitives used in the baseline simulation. Table~\ref{tab:us_common_parameters} reports the common parameters $d$ and $\rho$.

\begin{table}[htbp]
    \centering
    \begin{tabular}{cccc}
        \toprule
        \textbf{OPTN region}
        & $\lambda^r$
        & $\mu^r$
        & $\mu^r/\lambda^r$ \\
        \midrule
        1  &  5.1836 &  2.8438 & 0.549 \\
        2  & 14.8192 &  8.5151 & 0.575 \\
        3  & 18.1288 & 13.1260 & 0.724 \\
        4  & 14.5890 &  7.7507 & 0.531 \\
        5  & 21.0329 & 12.2521 & 0.583 \\
        6  &  3.0658 &  3.9479 & 1.288 \\
        7  & 11.0137 &  5.7479 & 0.522 \\
        8  &  7.0438 &  6.3178 & 0.897 \\
        9  &  8.9863 &  4.1753 & 0.465 \\
        10 &  8.1014 &  9.3370 & 1.153 \\
        11 & 16.7890 & 10.3753 & 0.618 \\
        \bottomrule
    \end{tabular}
    \caption{Calibrated regional primitives. Rates are daily rates, constructed from 2023 OPTN data.}
    \label{tab:us_calibrated_primitives}
\end{table}

\begin{table}[htbp]
    \centering
    \small
    \begin{tabular}{lcc}
        \toprule
        \textbf{Common parameter} & $d$ & $\rho$ \\
        \midrule
        Calibrated parameter value & $0.000167$ & 0.2 \\
        \bottomrule
    \end{tabular}
    \caption{Common parameters used in the baseline calibration. The abandonment rate $d$ is a daily rate, calibrated from aggregate U.S.\ kidney waitlist removals coded as death or too sick to transplant and the aggregate kidney waiting-list population.}
    \label{tab:us_common_parameters}
\end{table}

\subsection{Policy construction}

In the one-type calibration, each region corresponds to a single patient class. A pooled static index policy is therefore equivalent to a priority ordering of the eleven OPTN regions, and all such policies are utilitarian whenever they are nonwasteful. Let $\Pi(\mathcal R)$ denote the set of all such orderings. For each $\pi\in\Pi(\mathcal R)$, let $\sigma_\pi$ be the static index policy that assigns each arriving kidney to the highest-priority compatible region according to $\pi$, and let $u_\pi$ be the corresponding regional utility vector.

The autarky benchmark, denoted by $sep$, is constructed by evaluating each region as a separate queue with patient-arrival rate $\lambda^r$ and organ-arrival rate $\mu^r$. Since $v=1$, the autarky utility $u^r_{sep}$ is the transplant probability of an arriving patient from region $r$.

The pooled benchmark, denoted $SL$, is the \emph{single-list benchmark}. Let
\[
\Lambda=\sum_{r\in\mathcal R}\lambda^r,
\]
and let $T_{pool}$ denote the aggregate transplant throughput produced by any nonwasteful pooled policy. In the one-type model with common abandonment, the aggregate queue-length process does not depend on the regional allocation rule, provided the rule does not waste a compatible kidney. All nonwasteful pooled policies therefore produce the same aggregate throughput $T_{pool}$.

The single-list benchmark assigns the same transplant probability to every patient:
\[
u^r_{SL}=\frac{T_{pool}}{\Lambda}
\qquad
\forall r\in\mathcal R.
\]
Equivalently, region $r$ receives transplant flow
\[
\lambda^r u^r_{SL}
=
\frac{\lambda^r}{\Lambda}T_{pool}.
\]
An actual first-come-first-served single list uses patient arrival order and is not stationary in the reduced queue-length state, but the utility vector $u_{SL}$ lies in the convex hull of stationary static index policies. A region-blind stationary rule that treats compatible waiting patients symmetrically gives every patient the same transplant probability, and the index-policy characterization of the achievable set implies that the same utility vector can be implemented by a mixture of static index policies.

The participation-constrained policy is computed by CPFS. In this calibration, CPFS solves
\[
\max_{u\in\mathcal U}
\sum_{r\in\mathcal R}\lambda^r u^r
\qquad
\text{s.t.}
\qquad
u^r\ge u^r_{sep}
\quad \forall r\in\mathcal R.
\]
The output is a mixture of static index policies, denoted by $C$, with utility vector $u_C$.

For each policy $\sigma$, we report aggregate transplant throughput
\[
T_\sigma=\sum_{r\in\mathcal R}\lambda^r u^r_\sigma,
\]
the relative gain over autarky,
\[
\frac{T_\sigma-T_{sep}}{T_{sep}},
\]
the minimum participation slack,
\[
\min_{r\in\mathcal R}\{u^r_\sigma-u^r_{sep}\},
\]
and the number of violated regional participation constraints,
\[
\#\{r\in\mathcal R:u^r_\sigma<u^r_{sep}\}.
\]

\subsection{Regional utility outcomes}

Table~\ref{tab:us_regional_utilities} reports the regional utilities and participation slacks underlying Figure~\ref{fig:us_participation_slack}.
Because the one-type objective is the aggregate throughput, which is the same under every nonwasteful policy, the constrained program has a flat objective and its optimal face is not a singleton: the $u^r_C$ column reports one optimal utility vector, and the aggregate figures of Table~\ref{tab:us_policy_comparison} are the same for all of them.

\begin{table}[htbp]
    \centering
    \begin{tabular}{cccccc}
        \toprule
        \textbf{OPTN region}
        & $u^r_{sep}$
        & $u^r_{SL}$
        & $u^r_C$
        & $u^r_{SL}-u^r_{sep}$
        & $u^r_C-u^r_{sep}$ \\
        \midrule
        1  & 0.5486 & 0.6554 & 0.5486 & 0.1068 & 0.0000 \\
        2  & 0.5746 & 0.6554 & 0.5746 & 0.0808 & 0.0000 \\
        3  & 0.7240 & 0.6554 & 0.8409 & $-0.0686$ & 0.1168 \\
        4  & 0.5313 & 0.6554 & 0.5313 & 0.1242 & 0.0000 \\
        5  & 0.5825 & 0.6554 & 0.5825 & 0.0729 & 0.0000 \\
        6  & 0.9995 & 0.6554 & 1.0000 & $-0.3441$ & 0.0005 \\
        7  & 0.5219 & 0.6554 & 0.5219 & 0.1335 & 0.0000 \\
        8  & 0.8969 & 0.6554 & 0.8969 & $-0.2415$ & 0.0000 \\
        9  & 0.4646 & 0.6554 & 0.4646 & 0.1908 & 0.0000 \\
        10 & 0.9997 & 0.6554 & 1.0000 & $-0.3443$ & 0.0003 \\
        11 & 0.6180 & 0.6554 & 0.6180 & 0.0374 & 0.0000 \\
        \bottomrule
    \end{tabular}
    \caption{Regional utilities and participation slacks.}
    \label{tab:us_regional_utilities}
\end{table}

\subsection{Robustness to the compatibility parameter}

Table~\ref{tab:rho_robustness} reports a simple robustness check for $\rho\in\{0.1,0.2,0.3,0.4,0.5,0.6,0.7,0.8,0.9,1.0\}$. The table uses the same summary statistics as Table~\ref{tab:us_policy_comparison}. This robustness exercise is not intended to estimate $\rho$, but to show that the qualitative participation pattern is insensitive to the baseline compatibility value.

\begin{table}[htbp]
    \centering
    \begin{tabular}{cccccc}
        \toprule
        $\rho$
        & $T_{sep}$
        & $T_{SL}$
        & $T_{C}$
        & $\min_r (u^r_{SL}-u^r_{sep})$
        & \textbf{Violations under $SL$} \\
        \midrule
        0.1 & 82.2647 & 84.389 & 84.389 & $-0.3441$ & 4 \\
        0.2 & 82.2676 & 84.389 & 84.389 & $-0.3443$ & 4 \\
        0.3 & 82.2685 & 84.389 & 84.389 & $-0.3444$ & 4 \\
        0.4 & 82.2689 & 84.389 & 84.389 & $-0.3444$ & 4 \\
        0.5 & 82.2692 & 84.389 & 84.389 & $-0.3444$ & 4 \\
        0.6 & 82.2693 & 84.389 & 84.389 & $-0.3444$ & 4 \\
        0.7 & 82.2694 & 84.389 & 84.389 & $-0.3444$ & 4 \\
        0.8 & 82.2695 & 84.389 & 84.389 & $-0.3444$ & 4 \\
        0.9 & 82.2695 & 84.389 & 84.389 & $-0.3444$ & 4 \\
        1.0 & 82.2696 & 84.389 & 84.389 & $-0.3444$ & 4 \\
        \bottomrule
    \end{tabular}
    \caption{Robustness to the compatibility parameter.}
    \label{tab:rho_robustness}
\end{table}

The calibration is deliberately parsimonious. It is not intended to replicate the institutional details of U.S.\ kidney allocation; the role is to discipline the regional flow parameters used in the numerical illustration and to show how participation constraints arise in a data-informed environment.

\end{document}